\newcommand{\ceil}[1]{\left\lceil #1 \right\rceil}
\newcommand{\imp}{\Rightarrow}
\theoremstyle{plain}
\newtheorem{theorem}{Theorem}[section]
\newtheorem{lemma}[theorem]{Lemma}
\newtheorem{proposition}[theorem]{Proposition}
\newtheorem{corollary}[theorem]{Corollary}
\theoremstyle{definition}
\newtheorem{definition}[theorem]{Definition}
\title{Pebbling Arguments for Tree Evaluation}
\author{David Liu\thanks{Supported by an Ontario Graduate Scholarship and NSERC CGS M scholarship.}}
\begin{document}

\maketitle
\begin{abstract}
The \emph{Tree Evaluation Problem} was introduced by Cook et al. in 2010 as a candidate for separating \textbf{P} from \textbf{L} and \textbf{NL} \cite{cook12}. The most general space lower bounds known for the Tree Evaluation Problem require a semantic restriction on the branching programs and use a connection to well-known pebble games to generate a bottleneck argument. These bounds are met by corresponding upper bounds generated by natural implementations of optimal pebbling algorithms. In this paper we extend these ideas to a variety of restricted families of both deterministic and non-deterministic branching programs, proving tight lower bounds under these restricted models. We also survey and unify known lower bounds in our ``pebbling argument'' framework.
\end{abstract}

\section{Introduction}
Complexity theory is the study of the hardness of problems. Starting from the measurement of the classical resources Turing machine time and space, the research over the past fifty years has seen the exciting development of new ideas to power and analyse algorithms: randomness, communication, circuits, and quantum computing models are among the most famous of these. This has led to a proliferation of complexity classes, and it is no exaggeration to say that the most important open questions in computer science concern the exact relationship between them, the classic ``\textbf{P} = \textbf{NP}?" being the most famous of these. These questions are inherently difficult to answer and have resisted proofs for decades, though the past few years have seen some large successes with the collapse of \textbf{SL} into \textbf{L} \cite{rein08}, that \textbf{QIP} = \textbf{PSPACE} \cite{jjuw10}, and that \textbf{ACC} $\subsetneq$ \textbf{NEXP} \cite{will11}.

In this paper, we explore an avenue of attack to separate \textbf{L} and \textbf{NL} from \textbf{P} studied by Cook et al. in \cite{cook12}. They introduced the \emph{Tree Evaluation Problem} as a candidate to separate these complexity classes, studying space lower bounds using branching programs as their model of computation. Of course, proving good general lower bounds is most likely a very  difficult problem; on the other hand, applying various restrictions to the branching program model makes this problem much more tractable.

There are natural algorithms solving this problem in both the deterministic and non-deterministic settings which implement optimal strategies for well-known pebbling games. Drawing inspiration from these algorithms, Cook et al. introduced the semantic \emph{thrifty} restriction on branching programs, and proved that the algorithm is optimal for deterministic thrifty branching programs. They conjectured this algorithm is in fact optimal for \emph{all} deterministic branching programs; proving this conjecture would separate \textbf{L} from \textbf{P}.
Proving the analogous conjecture for non-deterministic branching programs would separate \textbf{NL} and \textbf{P}. However, the algorithm is not known to be optimal even under the non-deterministic thrifty setting.

\subsection{Our Contributions}
Our main contribution is to apply the pebbling arguments to prove lower bounds for various families of restricted branching programs. For each of these restrictions, we derive a tight asymptotic lower bound corresponding precisely to the pebble number of the corresponding pebbling games. Our most significant contribution in this vein is our deterministic read-once lower bound.
\begin{theorem}
\label{thm:ro_det}
Any deterministic read-once branching program solving the Tree Evaluation Problem has at least $k^h$ states.
\end{theorem}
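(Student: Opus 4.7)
The plan is to run a pebbling bottleneck argument in the spirit of Cook et al., adapted to read-once branching programs. I fix any root-to-leaf path $P = (v_1, \ldots, v_h)$ (with $v_1$ the root and $v_h$ a leaf) and construct $k^h$ inputs $\{I_{\vec{a}}\}_{\vec{a} \in [k]^h}$ as follows: every subtree off $P$ is populated with trivial data so that its root evaluates to $0$; the leaf $v_h$ has value $a_h$; and each internal path function is $f_{v_i}(x, 0) = x + a_i \pmod{k}$, giving $\mathrm{val}(v_i) = a_i + a_{i+1} + \cdots + a_h$. The map $\vec{a} \mapsto (\mathrm{val}(v_h), \ldots, \mathrm{val}(v_1))$ is then a bijection from $[k]^h$ to $[k]^h$.

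For each input $I_{\vec{a}}$, I trace $B$'s computation and define the bottleneck state $\sigma(\vec{a})$ to be the first state at which the queries made so far suffice to pin down every value $\mathrm{val}(v_i)$ along $P$; correctness of $B$ forces such a moment to occur before the output is produced. The heart of the proof is then to show that $\vec{a} \mapsto \sigma(\vec{a})$ is injective, which immediately yields $|B| \geq k^h$.

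For injectivity, suppose $\sigma(\vec{a}) = \sigma(\vec{b}) = s$ with $\vec{a} \neq \vec{b}$. Using the read-once property, I aim to build a hybrid input $\tilde{I}$ that agrees with $I_{\vec{a}}$ on every variable read on the way to $s$ but matches $I_{\vec{b}}$ on a carefully chosen unread variable (an entry of some path function $f_{v_i}$ or the leaf $v_h$) whose alteration flips the TEP value. Because $B$ is deterministic and cannot re-query the altered variable, its computation on $\tilde{I}$ follows the same route to $s$ as on $I_{\vec{a}}$ and from $s$ produces the same sink output; thus $B$ outputs $\mathrm{TEP}(I_{\vec{a}}) \neq \mathrm{TEP}(\tilde{I})$ on $\tilde{I}$, contradicting correctness.

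The main obstacle will be this hybrid construction: I have to argue that whenever two inputs in the family collide at $\sigma$, an unread path variable survives whose flip changes the TEP value. This is exactly where the read-once restriction is decisive, preventing $B$ from revisiting the modified entry and detecting the tampering. Calibrating the input family and the bottleneck condition so that the distinguishing unread variable is always available---while keeping the pebbling analogy faithful---is the most delicate piece of the proof; once it is in place, injectivity of $\sigma$ yields the $k^h$ lower bound at once.
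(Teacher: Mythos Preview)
Your proposal has a genuine gap: the bottleneck map $\sigma$ is not injective, and the hybrid argument you sketch cannot repair this.

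The problem is that ``the queries made so far pin down every $\mathrm{val}(v_i)$'' is a property of the computation \emph{path} to $s$, not of the \emph{state} $s$ itself. Two family inputs can reach the same state via different paths, each of which pins down its own $\vec a$, yet the state remembers far less. Concretely, take the natural thrifty pebbling branching program (which is deterministic and read-once). On your family it queries $v_h$ to learn $a_h$, then queries $f_{v_{h-1}}(a_h,0)$ and retains only $a_{h-1}+a_h$, then queries $f_{v_{h-2}}(a_{h-1}+a_h,0)$ and retains only $a_{h-2}+a_{h-1}+a_h$, and so on. The last of the $a_i$ (namely $a_1$) is only pinned down by the final query $f_{v_1}(a_2+\cdots+a_h,0)$, after which the program is at an output state. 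Thus $\sigma(\vec a)$ is the output state for every $\vec a$, and there are only $k$ of those; your map collapses $k^h$ inputs onto $k$ states.

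Your hybrid argument cannot rescue this. You need an input variable that is unread along the \emph{entire} computation path of $I_{\vec a}$ and whose alteration changes the root value. But correctness already forces every thrifty query along $P$ (namely $v_h$ and each $f_{v_i}(\mathrm{val}(v_{i+1}),0)$) to appear somewhere on $C(I_{\vec a})$, and these are precisely the variables whose alteration changes $\mathrm{TEP}$. So no such variable exists. Your appeal to read-once (``cannot re-query the altered variable'') would only help if the variable had already been read before $s$, but then flipping it would change the path \emph{before} $s$, not after.

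The paper's proof avoids this by defining pebbles that are provably properties of the \emph{state} (via the $Range$, equivalence, and activity machinery, culminating in Lemma~\ref{lem:ident_config}), rather than of the path. That is exactly the missing ingredient: you need to show that the state itself, not merely the history leading to it, encodes $h$ values.
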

To prove this theorem, we create a variant on the pebbling game to extend the metaphor beyond just the \emph{correct} values of the nodes to include also non-thrifty function values. We have an analogous theorem for restricted non-determinstic branching programs.
\begin{theorem}
\label{thm:ro_ndet}
Let $B$ be a non-deterministic thrifty branching program solving $TEP^h_2(k)$. If $B$ is syntactic read-once, null-path-free, or semantic read-once, then $B$ has at least $\displaystyle{k^{\ceil{\frac{h}{2}}+1}}$ states.
\end{theorem}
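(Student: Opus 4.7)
Since semantic read-once is the weakest of the three hypotheses---syntactic read-once trivially implies semantic read-once, and the null-path-free condition combined with thriftiness similarly forces the ``each internal node is queried at most once along every accepting computation'' discipline---it suffices to prove the bound assuming $B$ is a non-deterministic thrifty semantic read-once branching program. The overall plan is to translate each accepting computation of $B$ into a legal black--white pebbling of the complete binary tree of height $h$, and then invoke the classical bound $\ceil{h/2}+1$ on the black--white pebble number of this tree to force a ``bottleneck'' state along every accepting computation that witnesses $k^{\ceil{h/2}+1}$ distinct inputs.

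For every input $I$, fix an accepting computation path $\pi_I$, and at each state $\gamma$ on $\pi_I$ define a black--white pebble configuration $P(\gamma)$ as follows: place a black pebble on an internal node $v$ if some earlier edge on $\pi_I$ queried $f_v$ at arguments matching the correct values of $v$'s children and the returned value is still read by some later edge; place a white pebble on $v$ if the correct value of $v$ has already been used as an argument to some prior query but $v$ itself has not yet been queried on $\pi_I$. Under this assignment, a thrifty query edge at $v$ either turns a white pebble at $v$ black and clears the black pebbles on its children, or introduces a fresh black pebble at $v$ while clearing its children's black pebbles; non-deterministic guesses correspond to placing white pebbles, and ``forgetting'' transitions to removing black ones. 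Thriftiness guarantees the consistency of the queried values, and semantic read-once guarantees that each internal node is queried at most once on $\pi_I$, so the induced sequence $\{P(\gamma)\}$ is a valid black--white pebbling from the empty configuration to a configuration with a black pebble on the root.

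The classical lower bound on pebbling complete binary trees then implies that some state $\gamma_I^\star$ along $\pi_I$ carries at least $\ceil{h/2}+1$ pebbles. A standard distinguishability argument now yields the state count: for each potential ``bottleneck'' set $S$ of $\ceil{h/2}+1$ nodes, range over the $k^{|S|}$ independent assignments of values at $S$, extending each to a full input that realizes those values by an appropriate choice of internal functions; any two such inputs sharing a bottleneck state would permit splicing the two accepting computations at $\gamma^\star$ to produce an accepting computation on a mismatched instance, contradicting thrifty soundness. Summing over possible bottleneck sets still gives a $k^{\ceil{h/2}+1}$ lower bound, since each state is charged at a unique bottleneck configuration. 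The principal obstacle is making the pebble extraction rigorous under \emph{semantic} read-once: non-deterministic edges that merely guess a value without querying, together with the subtle requirement that a child actually carry a pebble at the exact moment of its parent's query, must be controlled by a careful induction on $\pi_I$; the reductions from syntactic read-once and null-path-free must in turn be justified by checking that each hypothesis enforces precisely the ``one productive query per node on the chosen accepting computation'' discipline that the extraction relies on.
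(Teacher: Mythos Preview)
Your proposal has two genuine gaps.

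First, the reduction at the top is incorrect: null-path-free together with thriftiness does \emph{not} imply semantic read-once. Thriftiness forces every query to node $i$ along $C(I)$ to be the single variable $f_i(v^I_{2i},v^I_{2i+1})$, and null-path-freeness forces repeated queries to that variable to return the same answer, but nothing prevents the same variable from being queried several times. The paper treats the null-path-free case separately for exactly this reason, working with the \emph{first} query to each node (its ``critical state'') rather than assuming uniqueness.

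Second, and more seriously, your splicing step breaks in the semantic read-once case. Composability (Proposition~\ref{prop:comp}) is the engine of the distinguishability argument: given two accepting paths through $\gamma$, one glues $C_0(I,\gamma)$ to $C_1(J,\gamma)$ and reads off a contradiction. For \emph{syntactic} read-once the two halves are automatically query-disjoint, so the glued path is realizable by some input $K$. For \emph{semantic} read-once this fails: $C_0(I,\gamma)$ may query a variable that $C_1(J,\gamma)$ also queries, with a different result, and then no input follows the spliced path. The paper acknowledges this explicitly and repairs it by enlarging the tag with the node-query \emph{permutation} $u$ of $C(I)$; two inputs with the same tag then share the same permutation, which restores query-disjointness of the two halves and lets the splice go through. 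The price is the extra factor $(2^h-1)!$ in the tag, so in the semantic case the paper only obtains the bound up to a constant depending on $h$ (i.e.\ asymptotically in $k$), not the exact $k^{\lceil h/2\rceil+1}$ you claim. Your sketch does not supply any mechanism to recover composability, so the ``mismatched instance'' you want to produce need not exist.
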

By presenting all of our proofs in a similar structure, we establish a general framework that may be useful in generalizing the pebbling argument to broader, non-thrifty classes of branching programs.

Our second contribution is to simplify two lower bounds already in the literature and unify them by fitting them to our framework. First, we take the deterministic thrifty lower bound in \cite{cook12} and simplify it by making more explicit use of the pebbling argument and a straightforward \emph{tag} argument found in a few other proofs in that paper. We then present a simplified version of the thrifty bitwise-independent lower bound in \cite{ks13}, refining the ideas in this paper and introducing the more natural restriction of \emph{node}-independence. Specifically, we prove the following new theorem.

\begin{theorem}
\label{thm:niro}
Every non-deterministic, node-independent, read-once branching program $B$ solving $TEP^h_2(k)$ has at least $k^{\frac{h}{2} + 1}$ states.
\end{theorem}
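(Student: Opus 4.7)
The plan is to follow the general pebbling-argument template used throughout the paper: for each input $I$, convert the accepting computation of $B$ on $I$ into a legal black-white pebbling of the height-$h$ binary tree, invoke the black-white pebble number lower bound of $\lceil h/2\rceil + 1$ to locate a bottleneck configuration on every computation path, and then apply a counting argument at the bottleneck to extract the factor $k^{h/2 + 1}$.

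Concretely, I would first fix for each input $I$ a single accepting path $\pi_I$ in $B$ and, walking along $\pi_I$, assign pebbles as follows: a black pebble records a node value that has been genuinely computed (either from a leaf-variable query or from an internal function query $f_v$ on already-verified child values), while a white pebble marks a node whose value was fed into its parent's function query before being verified. The read-once hypothesis guarantees that each variable is queried at most once along $\pi_I$, so this pebble sequence is well defined, and standard checks show it is a legal black-white pebbling. The pebble-number lower bound then produces, on each $\pi_I$, a bottleneck step at which the configuration $P_I$ has size at least $\lceil h/2\rceil + 1$; let $s_I$ denote the state of $B$ at this step.

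The heart of the proof is showing that the $s_I$ take at least $k^{h/2+1}$ distinct values. Here node-independence is essential: the condition is designed so that $s_I$ separately encodes the $k$-ary value at each pebbled node, so that the full value-assignment on $P_I$ is recoverable from $s_I$. The read-once restriction then prevents the remaining suffix of $\pi_I$ from re-examining any of these values, and therefore any two inputs $I, I'$ that share a pebble configuration but differ on the value at some pebbled node must reach distinct states. Counting the $k^{\lceil h/2 \rceil + 1}$ labelings of a fixed configuration yields the bound.

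The principal obstacle lies in making this injectivity argument rigorous in the non-deterministic setting, since a white pebble at the bottleneck represents a value that has only been guessed and not yet verified. I expect to resolve this by selecting the bottleneck so that every white pebble present will be verified on the suffix of $\pi_I$; then differing guesses on the suffix must force $\pi_I$ and $\pi_{I'}$ apart from the bottleneck onward, producing distinct bottleneck states. A secondary issue, which the exact formulation of node-independence is tailored to rule out, is the possibility that values associated with distinct pebbled nodes are stored in overlapping portions of the state encoding and hence cannot both be recovered. Tying these pieces together within the common framework of the earlier proofs should then directly yield the stated $k^{h/2+1}$ bound.
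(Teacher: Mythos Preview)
Your proposal has a genuine gap: you are implicitly importing the thrifty hypothesis, which Theorem~\ref{thm:niro} does not assume. Your pebbling rules --- placing a black pebble when a node value is ``genuinely computed \dots\ from an internal function query $f_v$ on already-verified child values'', and a white pebble when a node's ``value was fed into its parent's function query before being verified'' --- only make sense if the arguments of a query $f_i(a,b)$ are in fact the correct child values $v_{2i}, v_{2i+1}$. In a non-thrifty branching program this need not hold: the program may query $f_i(a,b)$ for arbitrary $(a,b)$, and a complete accepting path need not ever make the thrifty query to a given internal node. So the whole-pebble sequence you describe is not well defined in this setting, and you cannot conclude it is a legal black-white pebbling of $T^h_2$. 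This is exactly the approach of Section~4, but that section crucially assumes thriftiness.

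The paper's proof takes a different route, adapting the \emph{state pebble values} of Section~6. For each state $\gamma$ and node $i$, the sets $R_\gamma(i)$ and $A_\gamma(i)$ record which values of $v_i$ are compatible with reaching (respectively, completing through) $\gamma$; the fractional values $b_\gamma(i) = \log_k(k/|R_\gamma(i)|)$ and $w_\gamma(i) = \log_k(|R_\gamma(i)|/|A_\gamma(i)|)$ measure partial information about $v_i$ and are intrinsic to the state rather than to any particular path. Node-independence and read-once together (Propositions~\ref{prop:niro_mix}--\ref{prop:niro_children}) force these values to evolve along any accepting path as a valid \emph{fractional} black-white pebbling, so the fractional bound $h/2+1$ of Theorem~\ref{thm:frac_num} applies. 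The counting at the bottleneck is Proposition~\ref{prop:bi_count}, which directly converts $p_\gamma$ into a bound on the fraction of inputs with a complete path through $\gamma$; one extra wrinkle (the inserted step $(2')$ and the final lemma of Section~7) handles the case where the supercritical configuration appears only between states, which is resolved by observing that the query there must be thrifty and recovering the children's values from the query label itself. None of this relies on interpreting individual queries as revealing child values, which is why the argument survives dropping the thrifty assumption.
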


The ideas and extension we present related to the bitwise- and node-independent restrictions may also have a hope of generalizing to larger classes of branching programs. 

\subsection{Related Work}
Using black pebbles as a device capturing deterministic space dates back to the 70s \cite{sethi75}, and white pebbles are a natural generalization to non-deterministic space. Whole pebbles turned out to be too restrictive a model, and so a black-white \emph{fractional} pebbling game was introduced in \cite{cook12} for non-deterministic branching programs. Their lower bounds come in two flavours: those for restricted branching programs and arbitrary height trees using pebbling arguments, and those for unrestricted branching programs but small height trees using more ad hoc methods. 

Recently, Komarath and Sarma developed a new restriction called \emph{bitwise-independence} and successfully applied a pebbling argument to non-deterministic thrifty branching programs with this restriction. In an unpublished work, Siu Man Chan and James Cook derived a tight lower bound for deterministic read-once branching programs using polynomials over finite fields. The proof that we give in this paper uses a different strategy that we hope stands a better chance of generalizing. 

Some work has been done in the more general DAG Evaluation Problem, where the underlying graph is an arbitrary DAG rather than a complete binary tree. Wehr \cite{wehr11} proved an analogous lower bound for deterministic thrifty branching programs solving DAG Evaluation. Chan \cite{chan13}, using different pebble games, studied circuit depth lower bounds for DAG Evaluation under a semantic restriction called output-relevance, closely related to thriftiness. Because of the more general nature of DAG Evaluation, Chan achieved a separation of \textbf{NC}$^i$ and \textbf{NC}$^{i+1}$ for each $i$, as well as separating \textbf{NC} and \textbf{P}, under this semantic restriction.

\subsection{Organization of the Paper}
The remainder of the paper is structured as follows. Section 2 outlines the notation and problem context for the rest of the paper. In Section 3, we present a simplified proof of the deterministic thrifty lower bound found in \cite{cook12}; this serves as an introduction to the type of pebbling argument that will become more elaborate in the rest of the paper. Sections 4 and 5 are the main results for non-deterministic read-once thrifty and deterministic read-once branching programs, respectively. Section 6 is an exposition of the lower bound for bitwise-independence, and Section 7 refines these ideas into a new lower bound for read-once branching programs. Finally, in Section 8 we conclude the paper with a discussion of the pebbling argument and promising directions for future research.

\section{Preliminaries}
\subsection{Tree Evaluation}
Let $T^h_2$ be the full binary tree with $h$ levels, so that the total number of nodes in $T^h_2$ is $2^h - 1$. We number the nodes in the standard heap order, so that the root is 1 and the children of node $i$ are labelled $2i$ and $2i + 1$. 

\begin{definition}[Tree Evaluation Problem]
The \textbf{Tree Evaluation Problem} on $T^h_2$ and positive integer $k$ takes as input the following ($[k]$ denotes the set $\{1,\dots, k\}$):
\begin{enumerate}
\item[(i)] For each leaf $i$ of $T^h_2$, a number $v_i \in [k]$;
\item[(ii)] For each internal node $i$, a function $f_i: [k] \times [k] \to [k]$, given as a function table of $k^2$ entries from $[k]$.
\end{enumerate}
Thus the input consists of $2^{h-1}k + (2^{h-1}-1)k^2$ values from $[k]$.

The goal is to ``evaluate the tree" in the natural way. The notation $v_i$ has the semantic meaning of the \emph{correct} value of node $i$; thus the correct values of the leaves are already specified by the input. For every internal node $i$, we define $v_i = f_i(v_{2i}, v_{2i + 1})$ inductively, and the goal is to find $v_1$.
\end{definition}

We use the notation $TEP^h_2(k)$ to denote this problem, and to emphasize that we are interested in $k$ as the argument of interest. 

\subsection{Branching Programs}
Though separation of \textbf{P} from \textbf{L} requires only uniform lower bounds, analysis of branching programs, a non-uniform model of sequential computation, has proven more tractable (at least at present). Though these lower bounds are stronger than their uniform versions, it remains unclear how to take advantage of the uniformity of Turing machines to prove good lower bounds for this problem.

We remark that many variations of branching programs have been studied in literature; for this paper, we use the model from \cite{cook12}, which we define now.

\begin{definition}[Branching Program]
A $k$-way \textbf{branching program} (BP) is a directed graph with labels on both the nodes (called states) and the edges. There are $k$ sink states, each labelled with a distinct number from $[k]$, called \emph{output} states. 
Every other state is labelled with a query to a particular input value; for $TEP^h_2(k)$, this is either a leaf value $v_i$ or some internal function query $f_i(x,y)$ for some $x,y \in [k]$. 
Each edge is labelled with a number in $[k]$; these are interpreted as possible ``results" of the query made by the edge's tail state.
Finally, we assume that there is only one source state, and this is distinguished as the \emph{start} state.
\end{definition}

Each problem input generates \emph{computations} on the branching program, which are paths beginning at the start state whose edges are consistent with the input instance according to the interpretation of the edges and states given by the preceding definition.
A \emph{complete} path is a path from the start state to an output state; not all computation paths are complete. Conversely, not every complete path can be followed by an input; these are called \emph{null-paths} and contain two states which query the same variable but take two differently labelled edges out of them.

A branching program is \emph{deterministic} if every non-output state has exactly $k$ out-edges, each labelled with a distinct number in $[k]$. Otherwise, it is \emph{non-deterministic}. On deterministic branching programs, each input induces a unique, complete, computation path. A branching program \emph{computes} a function $f$ if for each input $I$ to $f$, at least one computation path induced by $I$ must be complete, and \emph{every} complete computation path induced by $I$ ends at the output state labelled $f(I)$. Note that if we require just one complete path to give the right answer, the model is trivial for non-deterministic branching programs. For non-deterministic branching programs, we will generally identify each input with a single induced complete computation path (arbitrarily chosen), referring to it as $C(I)$. Note that this notation carries over to the deterministic case, where there is just one choice for each $C(I)$.

If $\gamma$ is a state on $C(I)$, we use $C_0(I, \gamma)$ to denote the segment of $C(I)$ before $\gamma$, but including the edge leading into $\gamma$; we use $C_1(I, \gamma)$ to denote the segment of $C(I)$ after and including $\gamma$. Inputs $I$ and $J$ \emph{agree before $\gamma$} if $C_0(I, \gamma)$ and $C_0(J, \gamma)$ are identical paths (states and edges).
Agreement after $\gamma$ is defined similarly. 

We measure the \emph{size} of the branching program as its number of states, which is exponentially related to the corresponding Turing Machine space. More specifically, Cook et al. \cite{cook12} showed that to prove a (non-)deterministic super-logarithmic lower bound on the space complexity of $TEP^h_2(k)$, it suffices to prove an asymptotic lower bound on the (non-)deterministic branching program size of $\Omega(k^{g(h)})$, where the $\Omega$ is with respect to $k$, and $g(h)$ is an \emph{unbounded} function depending only on $h$.

Finally, we introduce two main branching program restrictions studied in this paper.

\begin{definition}[Thrifty]
A computation path $C(I)$ is \emph{thrifty} if for all internal node queries $f_i(x,y)$ made on $C(I)$, $x = v_{2i}^I$ and $y = v_{2i+1}^I$. A branching program is \emph{thrifty} if each complete computation path on $C(I)$ is thrifty.
\end{definition}

\begin{definition}[Read-Once]
A branching program is \emph{syntactic read-once} if every complete path queries every value of the input at most once.
A branching program is \emph{semantic read-once} if this restriction holds for every complete computation path. Every syntactic read-once branching program is also semantic read-once, but the converse is certainly not true.
\end{definition}

\subsection{Pebbling}
We use two main pebbling games in this paper. The simplest version is the \emph{whole black pebble game}, which can be described as follows. In this game, a sequence of pebble configurations (i.e., pebbles on nodes) is \emph{valid} if the first configuration is empty, the last configuration has just a single pebble located on the root node, and each configuration is transformed into the next by applying one of the following moves:
\begin{itemize}
\item Place a black pebble on a leaf.
\item If $i$ is an internal node and all of its children are pebbled, place a black pebble on $i$ and simultaneously remove pebbles from all, some, or none of its children. This is known as a \emph{black sliding move}.
\item Remove a pebble.
\end{itemize}
The goal is to find a valid pebbling sequence that uses the \emph{fewest} number of pebbles, where the number of pebbles used by a sequence is the maximum number of pebbles on any one configuration in the sequence. For the complete binary tree $T^h_2$, the following lower bound is known (for example, \cite{cook12}).

\begin{theorem}[Folklore]
\label{thm:black_num}
Every valid black pebbling sequence of $T^h_2$ contains a configuration with at least $h$ pebbles. Moreover, this is tight; there exists a valid pebbling of $T^h_2$ using only $h$ pebbles.
\end{theorem}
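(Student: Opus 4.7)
The plan is to handle the two halves of Theorem~\ref{thm:black_num} separately, with the lower bound by induction on $h$.

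For the upper bound, I would give an explicit strategy by induction on $h$. Writing $T^h_2$ as a root $r$ with children $\ell, \rho$ that are themselves the roots of subtrees $T_L, T_R$ of height $h - 1$, I would first inductively pebble $T_L$ using at most $h - 1$ pebbles and ending with only $\ell$ marked. Then, holding the pebble on $\ell$, I would inductively pebble $T_R$ using $h - 1$ further pebbles, so that the peak count is $h$. A final sliding move places a pebble on $r$ while removing both $\ell$ and $\rho$, leaving only $r$ pebbled.

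For the lower bound, the case $h = 1$ is immediate. For $h \geq 2$, fix a valid pebbling $\sigma$ and let $t^*$ be the first step at which $r$ is pebbled; then at step $t^* - 1$ both children $\ell, \rho$ carry pebbles. Define $a_L$ to be the latest step in $[0, t^* - 1]$ at which $T_L$ is empty, and $a_R$ analogously. Restricting $\sigma$ to moves inside $T_L$ on $[a_L, t^* - 1]$ (and appending cleanup removals) yields a valid standalone pebbling of $T_L$, so by the inductive hypothesis some step $u_L \in [a_L + 1, t^* - 1]$ has $\geq h - 1$ pebbles on $T_L$; define $u_R$ symmetrically. If $a_L < a_R$, then $u_R > a_R > a_L$ lies in the window where $T_L$ is non-empty, so step $u_R$ holds $\geq h - 1$ pebbles on $T_R$ together with at least one on $T_L$, giving $\geq h$ in total. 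The case $a_L > a_R$ is symmetric, and when $a_L = a_R$ the step $u_L$ satisfies $u_L > a_L = a_R$, placing it in the window where $T_R$ is non-empty, again giving $\geq h$.

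The hard part is arranging the induction so that the peak on one subtree lines up with a moment when the \emph{other} subtree also carries a pebble: applying the induction hypothesis to the full history of $T_L$ could locate its peak during a window when $T_R$ happens to be empty, which would not yield the required $h$ total. The ``last-empty time'' quantities $a_L, a_R$ are chosen precisely to force the other subtree to hold a pebble at the relevant step, and the three-case analysis on the ordering of $a_L$ and $a_R$ then assembles the two sub-pebblings into a single configuration witnessing $h$ pebbles on the full tree.
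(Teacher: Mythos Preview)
Your argument is correct and is, in fact, the standard proof of this folklore result. Note, however, that the paper does not supply its own proof of Theorem~\ref{thm:black_num}: it is stated as folklore with a pointer to~\cite{cook12}, so there is no in-paper argument to compare against. Your write-up would serve perfectly well as a self-contained proof; the only minor points worth tightening are (i) observing explicitly that $a_L$ and $a_R$ exist because the initial configuration is empty, and (ii) noting that the inductive peak on $T_L$ must occur before your appended cleanup removals (since removals only decrease the pebble count), so that $u_L$ genuinely lies in the original interval $(a_L, t^*-1]$.
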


The connection between this game and the Tree Evaluation Problem is to interpret pebbles as marking the nodes for which the program ``knows" the correct values at a point in the computation. The maximum number of pebbles used in a sequence then corresponds to the maximum amount of ``memory" used during the computation. In fact, implementing a minimal pebbling sequence as a branching program in the natural way yields the smallest known deterministic branching programs solving the Tree Evaluation Problem, and this is conjectured to be optimal.
\begin{corollary}
\label{cor:black_opt}
There is a deterministic branching program solving $TEP^h_2(k)$ that contains $\Theta(k^h)$ states.
\end{corollary}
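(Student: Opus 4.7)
The plan is to implement the optimal black pebbling sequence from Theorem \ref{thm:black_num} as a deterministic branching program in the natural way, showing that the resulting program has $O(k^h)$ states (the matching $\Omega$ simply reflects that the construction actually uses $\Theta(k^h)$ states, so the corollary is about exhibiting a program of size this large). Fix a valid pebbling sequence $P_0, P_1, \dots, P_T$ of $T^h_2$ using at most $h$ pebbles at each configuration $P_t$, where $T$ depends only on $h$. The idea is to encode each branching program state as a pair $(t, \sigma)$, where $t \in \{0, 1, \dots, T\}$ indexes a position in the pebbling sequence, and $\sigma : \{\text{pebbled nodes in } P_t\} \to [k]$ assigns a claimed value to every currently pebbled node. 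Since $|P_t| \le h$, the number of such $\sigma$ for each fixed $t$ is at most $k^h$, so the total state count is $(T+1) \cdot k^h = O(k^h)$.

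The transitions mirror the three types of pebbling moves. If $P_{t+1}$ places a pebble on a leaf $i$, the state $(t, \sigma)$ is a query state for $v_i$, and each value $x \in [k]$ takes an edge to $(t+1, \sigma \cup \{i \mapsto x\})$. If $P_{t+1}$ is a black sliding move at internal node $i$ with children having stored values $\sigma(2i) = x$, $\sigma(2i+1) = y$, then $(t, \sigma)$ queries $f_i(x,y)$, and each answer $z \in [k]$ leads to $(t+1, \sigma')$, where $\sigma'$ drops the children pebbles that were slid off and adds $i \mapsto z$. If $P_{t+1}$ simply erases a pebble, the transition is a deterministic unconditional move that restricts the domain of $\sigma$ accordingly (this can be folded into the preceding query step so that every non-sink state actually makes a query, but that is a cosmetic concern). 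The final configuration $P_T$ contains only the root, so its states form a block of $k$ copies indexed by $\sigma(1) \in [k]$, each of which is routed to the corresponding output sink labelled $\sigma(1)$.

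Correctness follows by an easy induction on $t$: on any input $I$, the unique computation path passes through states $(t, \sigma_t)$ where $\sigma_t(i) = v_i^I$ for every $i$ pebbled in $P_t$. This is preserved at leaf queries by definition of the input leaf values, and at sliding moves because the recorded child values equal the correct values $v_{2i}^I, v_{2i+1}^I$, so the query $f_i$ returns exactly $v_i^I$. At time $T$ the root value read off from $\sigma_T$ is $v_1^I$, so the program terminates at the correct output. The only step with any content is the size count, and that is immediate from $|P_t| \le h$ combined with $T = O_h(1)$; there is no real obstacle here, which is why this appears as a corollary rather than a theorem.
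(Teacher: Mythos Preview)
Your proposal is correct and is precisely the ``natural implementation'' the paper alludes to in the sentence preceding the corollary; the paper itself gives no further proof, so you have simply filled in the details it left implicit. Your handling of the $\Theta$ (both bounds refer to the size of this particular construction, with the lower bound witnessed by the $k^h$ distinct value assignments at the configuration carrying $h$ pebbles) and of pebble-removal moves (folding them into adjacent query steps so every non-output state makes a query) are both fine.
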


The \emph{fractional black-white pebble game} introduces both white pebbles and fractional pebble values, which respectively capture the notions of non-deterministic guesses and partially known/guessed values. For each node $i$, a pebbling configuration stores the black and white pebble values $b(i)$ and $w(i)$. These change according to the following rules, subject to the conditions $0, \leq b(i), w(i) \leq 1$ and $b(i) + w(i) \leq 1$.
\begin{itemize}
\item Increase $w(i)$ or decrease $b(i)$ for some node $i$.
\item Increase $b(i)$ or decrease $w(i)$ for some \emph{leaf} $i$.
\item If $i$ is an internal node and all of its children are fully pebbled (i.e., $b(j) + w(j) = 1$), then increase $b(i)$ or decrease $w(i)$. If $b(i)$ increases, simultaneous decrease $b(j)$ for any children $j$ of $i$.
\end{itemize}
The goal for this game is to find a sequence of valid moves that begin and end with empty pebble configurations, and has a configuration where the root has a full black pebble. The \emph{whole black-white pebble game} has the additional restriction that $b(i),w(i) \in \{0,1\}$; that is, only whole black and white pebbles can be placed/removed. While the lower bound for whole black-white pebbling $T^h_2$ was also derived in \cite{cook12}, it was Vanderzwet who proved a tight lower bound on the pebble number of the corresponding fractional game \cite{van12}.

\begin{theorem}[\cite{cook12}, \cite{van12}]
\label{thm:frac_num}
Every valid black-white whole pebbling sequence of $T^h_2$ contains a configuration with at least $\ceil{\frac{h}{2}} + 1$ pebbles. If fractional pebbles are allowed, at least $\frac{h}{2} + 1$ pebbles are required. Both these bounds are tight.
\end{theorem}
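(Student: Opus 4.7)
The theorem combines matching upper and lower bounds in both the whole and fractional black-white pebbling regimes. My plan is to handle the upper bounds by explicit recursive constructions and the lower bounds by a potential-function argument. For the upper bounds I would induct on $h$, pebbling ``two levels at a time.'' To place a full black pebble on the root of a subtree of height $h$ using $\ceil{h/2} + 1$ whole pebbles, first place a white pebble on the root, then inductively fully pebble each child in sequence (keeping coexisting pebbles bounded by a careful schedule), and finally use a black sliding move to resolve the white at the root while removing the children's pebbles. This yields the whole upper bound, and allowing fractional pebble values lets one shave an additional half at the boundary to reach the $\frac{h}{2} + 1$ fractional strategy.

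For the lower bounds I would construct a potential function $\Phi$ on pebble configurations satisfying (i) $\Phi = 0$ on the initial empty configuration, (ii) $\Phi$ is at most the total pebble count in any configuration, and (iii) $\Phi \geq \ceil{h/2} + 1$ (respectively $\frac{h}{2} + 1$) on any configuration whose root carries a full black pebble. These three properties immediately imply the existence of an intermediate configuration whose pebble count is at least the claimed bound. For the whole case, a step-function $\Phi$ keyed to node depth suffices and the verification reduces to a discrete case analysis over the three move types. The fractional case, due to Vanderzwet, requires a genuinely more subtle depth-dependent weighting.

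The main obstacle is the fractional lower bound. Fractional sliding moves can redistribute pebble mass between a parent and its children in arbitrary continuous increments, so a naive weighting lets $\Phi$ outpace the pebble count and the bound collapses. The key is to calibrate the depth-weights so that the marginal change in $\Phi$ under any legal fractional redistribution matches the change in pebble mass \emph{exactly} --- no looseness allowed --- which is what delivers the sharp $\frac{h}{2} + 1$ constant. Once this calibration is correct, verifying (i)--(iii) is a routine case check over the three move types, and the induction on $h$ essentially writes itself.
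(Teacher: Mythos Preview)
The paper does not prove this theorem; it is stated as a cited result from \cite{cook12} and \cite{van12}, so there is no in-paper argument to compare against. Your proposal must therefore be judged on its own merits, and it has a genuine gap.

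Your potential-function conditions for the lower bound are inconsistent. You ask for a function $\Phi$ on configurations with (ii) $\Phi \le$ total pebble count and (iii) $\Phi \ge \ceil{h/2}+1$ whenever the root carries a full black pebble. But the configuration consisting of a \emph{single} black pebble on the root is perfectly legal and has pebble count $1$, so (ii) forces $\Phi \le 1$ there while (iii) demands $\Phi \ge \ceil{h/2}+1$. No configuration-local potential can satisfy both. The point of the lower bound is precisely that the bottleneck occurs \emph{en route} to the root-pebbled configuration, not at it; a potential argument that works here must track something about the history (e.g., be monotone non-increasing under pebble-removing moves so that its maximum along the sequence is both large and witnessed by the pebble count at that moment), or else one abandons potentials and argues inductively by locating a critical time at which one subtree is ``fresh'' and recursing. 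Your calibration discussion for the fractional case presupposes that the basic (i)--(iii) framework is sound, so it does not rescue the argument.

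A secondary issue: your upper-bound sketch begins by placing a white pebble on the \emph{root}, but this does not advance the goal of getting a \emph{black} pebble there and wastes a pebble throughout the recursion. The standard $\ceil{h/2}+1$ strategy places the white pebble on a \emph{child} (or grandchild), recursively black-pebbles the sibling subtree, slides to the parent, and then recursively discharges the white pebble; the savings come from the time-reversal symmetry between placing black and removing white, which lets the two recursive calls share pebbles across levels.
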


\begin{corollary}
\label{cor:frac_opt}
There is a non-deterministic branching program solving $TEP^h_2(k)$ that contains $\Theta(k^{\frac{h}{2} + 1})$ states.
\end{corollary}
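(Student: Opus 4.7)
The plan is to convert the upper bound of Theorem \ref{thm:frac_num} on the fractional pebble number of $T^h_2$ into an upper bound on the size of a non-deterministic branching program solving $TEP^h_2(k)$. Concretely, I would fix an optimal valid fractional black-white pebbling sequence $(P_0, P_1, \dots, P_T)$ whose maximum configuration has at most $\frac{h}{2}+1$ pebbles, and build a branching program organized into $T+1$ layers, with layer $t$ containing one state per assignment of $[k]$-values to the nodes currently pebbled in $P_t$.

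First I would handle the whole-pebble version as a warm-up: layer $t$ has $k^{|P_t|}$ states, and transitions between consecutive layers implement the single pebbling move $P_t \to P_{t+1}$. A black pebble placement on a leaf $i$ is a deterministic query of $v_i$; a black sliding move onto internal node $i$ is a deterministic query of $f_i(x,y)$ using the stored children's values; a white pebble placement branches non-deterministically over the $k$ possible guessed values; and pebble removals simply drop stored values. A move that ``verifies'' a previously placed white pebble (by sliding a black pebble onto the same node and thus learning its true value) is handled by sending edges whose computed labels disagree with the guessed value into a non-output sink, so that computation paths built on incorrect guesses are incomplete. Correctness follows by induction on $t$: for any input $I$, the path that records the true $v_i^I$ at each pebble placement is complete and reaches the output state labelled $v_1^I$, while every other complete path must also terminate at $v_1^I$ because the verification mechanism eliminates inconsistent guesses. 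Summing $\sum_t k^{|P_t|}$ and using $T = \mathrm{poly}(h)$ yields $O(k^{\lceil h/2 \rceil + 1})$ states.

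To obtain the tight $\Theta(k^{\frac{h}{2}+1})$ bound I would follow Vanderzwet's fractional implementation \cite{van12}: a pebble value $\alpha \in [0,1]$ on node $i$ is interpreted as storing $\alpha \log k$ bits of the representation of $v_i$, so that the layer for $P_t$ has $k^{\sum_i (b_t(i)+w_t(i))}$ states and the largest layer has $k^{\frac{h}{2}+1}$. The main obstacle, and the step that I expect to require the most care, is implementing the fractional sliding moves: a query to $f_i$ requires the full $[k]$-value of both children, not partial bit representations, so the pebbling sequence must be scheduled so that at every slide the children's fractional black and white pebbles combine to form complete values just in time for the query. Once this scheduling (the technical content of Vanderzwet's construction) is in place, correctness and the layer-by-layer counting carry over from the whole-pebble case, producing a non-deterministic branching program with $\Theta(k^{\frac{h}{2}+1})$ states, matching the lower bound one would obtain from the pebbling argument.
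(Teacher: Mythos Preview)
Your proposal is correct and follows the standard construction; the paper itself states this corollary without proof, treating it as an immediate consequence of the optimal fractional pebbling from \cite{cook12} and \cite{van12} implemented as a branching program in exactly the layered fashion you describe.

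One small remark: the ``main obstacle'' you flag---that at a slide onto $i$ the children must be fully pebbled---is not actually an obstacle requiring extra scheduling, since the fractional game already imposes $b(j)+w(j)=1$ on every child $j$ as a precondition for increasing $b(i)$ or decreasing $w(i)$. So any valid fractional sequence, in particular Vanderzwet's optimal one, automatically has the children's black and white fractions summing to $1$ at each slide; your layer for configuration $P_t$ then stores, for each pebbled node, the $b_t(j)\log k$ ``known'' bits together with the $w_t(j)\log k$ ``guessed'' bits, which together specify a full value in $[k]$ whenever the query is made. The genuine technical care in the fractional case is rather in choosing the bit-partition consistently across layers and in verifying guessed bits as white pebble value is decreased, but you correctly defer that to \cite{van12}.
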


\section{Deterministic Thrifty}

As a warm-up, we present the proof of the lower bound for deterministic thrifty branching programs in \cite{cook12},
using this opportunity to illustrate the pebbling argument built upon in future sections. 
For some intuition behind this argument, consider an input $I$ to the Tree Evaluation Problem. We can view the states on $C(I)$ as storing information about the input, with the labeled edges between consecutive states acting as the mechanism of learning new information. Thus at the start state no information is known about the input, while at the output states precisely the correct value of the root is known. In general, the information learned along a computation path can be very complex, and non-determinism enables computation paths to ``guess" even more. However, for certain families of restricted branching programs, information about inputs can be learned and guessed only in very structured ways, and thus modeled by pebbling games. We adopt the following strategy to capitalize on this relationship:
\begin{enumerate}
\item[(1)] Given an input and induced complete computation path, associate the states on the path to configurations in a pebbling sequence.
\item[(2)] Argue that the pebbling sequence is \emph{valid} for some pebbling game.
\item[(3)] Argue that the pebbles associated with a state represent information about the input encoded at that state.
\item[(4)] Apply a pebbling lower bound to argue that each computation path has a \emph{supercritical state} which ``knows" a lot about the input. This state acts as a bottleneck for the inputs.
\end{enumerate}

\subsection{Pebbling Sequence}
Recall that in the deterministic setting, every input generates a unique computation path, which is always complete. We use the following proposition to associate pebbling sequences to these computation paths.
\begin{proposition}
\label{prop:td_crit}
Let $C(I)$ be a complete computation path on a deterministic thrifty branching program solving $TEP^h_2(k)$. Then every node of $T^h_2$ is queried on $C(I)$ at least once. Moreover, each non-root node is queried at least once before its parent is queried.
\end{proposition}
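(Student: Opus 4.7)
The plan is to use a local input-modification argument that exploits thriftiness. I will first prove the ``moreover'' clause---every non-root node $i$ is queried on $C(I)$ before its parent $p$---and then recover the first clause by starting at the root and inducting down the tree.

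For the moreover clause, let $\gamma$ be the first state on $C(I)$ at which $p$ is queried, and suppose for contradiction that $i$ is not queried on $C_0(I, \gamma)$. By thriftiness, the query made at $\gamma$ is $f_p(v_{2p}^I, v_{2p+1}^I)$; moreover, the two arguments of this query are encoded in the label of $\gamma$, so the coordinate equal to $v_i^I$ is fixed by $\gamma$ itself. Now construct $I'$ by a single local change to $I$ that alters only $v_i$: if $i$ is a leaf, replace $v_i^I$ with some $b \neq v_i^I$; if $i$ is internal, change the table entry $f_i(v_{2i}^I, v_{2i+1}^I)$ to some $b \neq v_i^I$. Because $i$ is not queried on the prefix $C_0(I, \gamma)$, no query on that prefix is affected, so by determinism $C(I')$ traces the same prefix and reaches $\gamma$. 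Thriftiness applied to $C(I')$ then forces the query at $\gamma$ to use $v_i^{I'} = b$, contradicting the fact that the query at $\gamma$ is fixed and already uses $v_i^I$.

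For the first clause, I first verify the root is queried: otherwise, altering the entry $f_1(v_2^I, v_3^I)$ yields an input $I'$ whose computation path is still $C(I)$ (no query on that path sees the change), but whose correct answer $v_1^{I'}$ differs from $v_1^I$, so the branching program outputs the wrong value on $I'$. Once the root is known to be queried, the moreover clause applied inductively down the tree shows every node is queried at some point on $C(I)$.

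The main subtlety is ensuring that the local modification really preserves $C_0(I, \gamma)$. This works because the altered entry is never accessed on the prefix, so every query on the prefix returns the same answer on $I$ as on $I'$, and determinism then forces the two paths to coincide up through $\gamma$. Whether $C(I')$ diverges from $C(I)$ immediately after $\gamma$ is irrelevant, since the contradiction is localized at the single query stored in the label of $\gamma$.
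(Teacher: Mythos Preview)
Your proof is correct and follows essentially the same approach as the paper: both modify the thrifty value of the relevant node to obtain an input $I'$ that agrees with $I$ along the prefix, and then derive a contradiction either with correctness (for the root) or with thriftiness at $\gamma$ (for the moreover clause). The only cosmetic difference is the order of presentation---you prove the conditional moreover clause first and then induct down from the root, whereas the paper handles the root first---but the substance is identical.
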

\begin{proof}
If $C(I)$ doesn't query the root, then $B$ makes a mistake on the input $I'$ which is identical to $I$ but has a different value at the thrifty root query $f_1(v^I_2, v^I_3)$, as $I'$ would follow $C(I)$ to an incorrect output state. Now suppose there exists a non-root node $i$ with parent $j$ such that $j$ is queried on $C(I)$ at some state $\gamma$, but $i$ is not queried on $C_0(I, \gamma)$. Let $I'$ be some input which differs from $I$ only on the value of the thrifty query to node $i$; then $I'$ agrees with $I$ before $\gamma$, and hence $C(I')$ includes $\gamma$. But then $\gamma$ makes a non-thrifty query with respect to $I'$.
\end{proof}

Proposition \ref{prop:td_crit} establishes that we may define the following special states along a computation path.
\begin{definition}[Critical State]
The \textbf{critical states} on $C(I)$ are defined recursively as follows:
\begin{itemize}
\item The critical state of the root is the last state on $C(I)$ that queries the root.
\item The critical state of a non-root node is the last state that queries it before the critical state of its parent.
\end{itemize}
\end{definition}
We now assign a \emph{black} pebbling in the obvious way, performing one pebbling move at each critical state:
\begin{itemize}
\item At the critical state of leaf, put a pebble on the leaf.
\item At the critical state of an internal node, put a black pebble on the node and simultaneously remove all pebbles from its children. Note that this is done in a single black sliding move.
\end{itemize}
The first critical state is associated with the empty configuration. The configuration produced as the result of a pebbling move at a critical state is associated with the \emph{next} critical state, with the exception of the final configuration with a pebble on the root, which is associated with the output state that ends $C(I)$. It follows immediately from the definition of the critical states that the resulting sequence is a valid black pebbling of $T^h_2$.

\subsection{A Tag Argument}
By Theorem \ref{thm:black_num}, each computation path $C(I)$ has a \emph{supercritical state}, a critical state whose associated pebbling configuration contains $h$ pebbles. 
Intuitively, one can use thriftiness to recover the correct values of the $h$ pebbled nodes from $C_1(I, \gamma)$.
We will say that the value of node $i$ is \emph{learned} on $C_1(I, \gamma)$ if the parent of $i$ is queried before $i$ is queried (or if $i$ is never queried). If the query to the parent has argument $a$ for node $i$, then we say that node $i$ is \emph{learned to have value} $a$; by thriftiness, $v^I_i = a$. 
Every pebbled node at $\gamma$ has its value learned on $C_1(I,\gamma)$.

We borrow the language of the proof of Theorem 5.15 in \cite{cook12} and define a tagging function on the set of inputs as follows: $U(I) = (\gamma, v, x)$ where
\begin{itemize}
\item $\gamma$ is the supercritical state of $I$
\item $v \in [k]^{2^h - 1 - h}$ is a string that specifies all of the correct node values except the first $h$ values learned on $C_1(I, \gamma)$. In particular, $v = u_1 u_2$, where $u_1$ specifies the correct values of the unlearned nodes queried on $C_1(I, \gamma)$ \emph{in order of their first occurrence}, and $u_2$ specifies the correct values of the remaining nodes.
\item $x \in [k]^{(k^2 - 1)(2^{h-1} - 1)}$ is a string that specifies the values of the non-thrifty queries for $I$.
\end{itemize}

Here is the crucial lemma needed to prove the lower bound.

\begin{lemma}
\label{lem:td_tag}
The tagging function $U$ is one-to-one.
\end{lemma}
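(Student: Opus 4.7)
The plan is to prove $U$ is one-to-one by producing an explicit inverse: given a triple $(\gamma, v, x)$ in the image, I reconstruct the unique $I$ that maps to it. The critical observation is that knowing all correct values $v_i^I$, combined with $x$, determines $I$ completely, because the single entry of each function table $f_j$ not specified by $x$ is the thrifty entry, which equals $f_j(v_{2j}^I, v_{2j+1}^I) = v_j^I$ by the thrifty property. Since $v$ already supplies $2^{h}-1-h$ correct values, only the $h$ values excluded from $v$ need to be recovered, and these are encoded implicitly in $\gamma$ together with $u_1$.

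To extract them, I simulate the deterministic branching program starting at $\gamma$, maintaining a partial assignment $\hat v$ of correct values initialized from $v$. At each visited state I resolve its query and follow the unique outgoing edge labelled by the resolved answer. If the query is a leaf query $v_i$, I answer with $\hat v_i$ when it is already defined, and otherwise consume the next entry of $u_1$, assigning it to $\hat v_i$. If the query is a function query $f_j(a,b)$, thriftiness forces $a = v_{2j}^I$ and $b = v_{2j+1}^I$, so I record $\hat v_{2j} := a$ and $\hat v_{2j+1} := b$ (consistently with any prior entry); the query's answer is $v_j^I$, which I read off $\hat v_j$ if defined and otherwise consume from the next entry of $u_1$. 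By Proposition \ref{prop:td_crit} every node of $T_2^h$ is queried on $C(I)$, so when the simulation reaches an output state the assignment $\hat v$ is complete, and together with $x$ this recovers $I$ exactly.

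The crucial correctness step, and the main obstacle, is verifying that the entries of $u_1$ are consumed in precisely the order they appear, matching the definition of $u_1$ as listing the missing values in order of their first occurrence on $C_1(I, \gamma)$. I would prove this by induction on the prefix of $C_1(I, \gamma)$ already simulated: since the simulation follows $C(I)$ step for step, each time it pulls from $u_1$ corresponds exactly to the first time the associated node is ``learned'' on $C_1(I, \gamma)$, matching the indexing used to build $u_1$. The thrifty property keeps the bookkeeping clean, because any function query $f_j(a,b)$ immediately records both children's values in $\hat v$, so no later query can demand a child value out of the prescribed order. Since the reconstruction produces a well-defined $I$ from any image triple and must reproduce the original $I$, $U$ is injective.
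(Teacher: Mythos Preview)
Your approach---constructing an explicit inverse by simulating the branching program forward from $\gamma$---is the same underlying idea as the paper's proof, which walks two inputs with equal tags along $C_1(I,\gamma)$ in lockstep; both use thriftiness to read off the children's correct values from each internal-node query.

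There is, however, a genuine gap in your final step. You assert that ``when the simulation reaches an output state the assignment $\hat v$ is complete,'' invoking Proposition~\ref{prop:td_crit}. But that proposition says every node is queried on the \emph{entire} path $C(I)$, not on the suffix $C_1(I,\gamma)$. A node $i$ all of whose queries lie on $C_0(I,\gamma)$, and whose parent is likewise queried only on $C_0(I,\gamma)$, never appears in your simulation: it is not queried (so you never pull its value from $u_1$) and its parent is not queried (so you never learn it from a thrifty argument). Concretely, take $h=3$ with the natural thrifty order $4,5,2,6,7,3,1$; the supercritical state $\gamma$ is the state querying node~$3$, and on $C_1(I,\gamma)$ only nodes $3$ and $1$ are queried. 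Node $2$ is learned from the query to $1$, and $6,7$ from the query to $3$, but nodes $4$ and $5$ have parent $2$, which is never queried on $C_1(I,\gamma)$---so $\hat v_4$ and $\hat v_5$ remain undefined. Their values live in $u_2$, which your reconstruction never touches.

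This also exposes the ambiguity in ``initialized from $v$'': you cannot pre-load any entry of $v$ into $\hat v$, because the node-indexing of both $u_1$ and $u_2$ is determined by $C_1(I,\gamma)$, which is only revealed by the simulation itself. The repair is to start with $\hat v$ empty, consume $u_1$ during the simulation exactly as you describe, and \emph{afterwards} read off from the recorded path which nodes constitute the first $h$ learned and which are ``remaining,'' then use $u_2$ to fill in the latter. The paper's two-input comparison sidesteps this bookkeeping: once it establishes $C_1(I,\gamma)=C_1(J,\gamma)$, both inputs share the same set of remaining nodes, and equality on those follows immediately from the shared tag component.
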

\begin{proof}
Let $I$ and $J$ be inputs such that $U(I) = U(J) = (\gamma, v, x)$. 
We first claim that for every state $\delta$ on $C_1(I, \gamma)$, $J$ follows $C_1(I, \gamma)$ up to $\delta$, and for every node $i$ that is either queried or learned between $\gamma$ and $\delta$, $v^I_i = v^J_i$. This is vacuously true for $\delta = \gamma$. Now pick some $\delta$ which isn't the output state, and suppose the claim holds. Consider the two possibilities for the node $i$ queried at $\delta$:
\begin{itemize}
\item If $i$ has been queried or learned between $\gamma$ and $\delta$, then $v_i^J = v_i^I$, and because this is a thrifty query, $J$ must follow the same edge as $I$ out of $\delta$.
\item Otherwise, the edge $J$ follows out of $\delta$ is specified by $v$ from the tag. The specific position is identical to that of the tag for $I$ because the computation paths are identical up to this point. Since $I$ and $J$ have the same tag, they again follow the same edge out, and so $v_i^I = v_i^J$.
\end{itemize}
Suppose node $j$ is learned at $\delta$ for $I$; since the conditions necessary for learning values depends only on the segment of $C_1(I, \gamma)$ before, which is followed by both $I$ and $J$, the two inputs both learn the value of $j$. Since this value is determined only by $\delta$, $v^I_j = v^J_j$.

It follows by induction that $C_1(I, \gamma) = C_1(J, \gamma)$, and moreover that $I$ and $J$ agree on the correct values of all nodes which are either queried or learned after $\gamma$. $I$ and $J$ have the exact same nodes which are neither queried nor learned after $\gamma$, and then because their tags are identical, they agree on the correct values of these nodes as well. Finally, $I$ and $J$ agree on all of their non-thrifty queries, which are completely specified by $x$ in the tag.
\end{proof}

\begin{theorem}[\cite{cook12}]
Every deterministic thrifty branching program solving $TEP^h_2(k)$ has at least $k^h$ states.
\end{theorem}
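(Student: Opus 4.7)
The plan is to finish the lower bound by a straightforward counting argument, leveraging the injectivity of the tagging function $U$ established in Lemma \ref{lem:td_tag}. Essentially all of the hard work has already been done; what remains is to compare the number of possible tags with the number of possible inputs.

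First, I would count the inputs to $TEP^h_2(k)$. There are $2^{h-1}$ leaves, each of which independently takes a value in $[k]$, and $2^{h-1}-1$ internal nodes, each specified by an arbitrary function table of $k^2$ entries in $[k]$. The total number of inputs is therefore
\[
k^{2^{h-1}} \cdot k^{k^2(2^{h-1}-1)} = k^{2^{h-1} + k^2(2^{h-1}-1)}.
\]

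Next, I would bound the number of possible tags $(\gamma, v, x)$. Writing $S$ for the number of states of $B$, the state $\gamma$ contributes a factor of at most $S$, the string $v \in [k]^{2^h-1-h}$ contributes a factor of $k^{2^h-1-h}$, and the string $x \in [k]^{(k^2-1)(2^{h-1}-1)}$ contributes a factor of $k^{(k^2-1)(2^{h-1}-1)}$. By Lemma \ref{lem:td_tag}, $U$ is injective, so the number of tags must be at least the number of inputs:
\[
S \cdot k^{2^h - 1 - h + (k^2-1)(2^{h-1}-1)} \;\geq\; k^{2^{h-1} + k^2(2^{h-1}-1)}.
\]
Solving for $S$ and simplifying the exponent (the $k^2(2^{h-1}-1)$ terms cancel, and the remaining $2^{h-1} + (2^{h-1}-1) - (2^h-1) = 0$, leaving only the $+h$ term) gives $S \geq k^h$, as claimed.

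The main obstacle is not in this final step, which is pure bookkeeping, but in everything that precedes it: Proposition \ref{prop:td_crit} to justify the definition of critical states, the pebbling sequence construction and Theorem \ref{thm:black_num} to locate a supercritical state of pebble weight $h$, and especially Lemma \ref{lem:td_tag} for the injectivity of $U$. The only subtle check at the end is that $v$ and $x$ together leave exactly $h$ values in $[k]$ undetermined in each input, since this is precisely what makes the exponent difference equal to $h$ and gives the sharp $k^h$ bound.
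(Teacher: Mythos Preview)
Your proposal is correct and matches the paper's own proof almost exactly: the paper simply observes that $v$ and $x$ together specify all but $h$ of the input values, so injectivity of $U$ forces at least $k^h$ choices for the supercritical state $\gamma$. You have spelled out the same counting argument in more detail, and your arithmetic is right.
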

\begin{proof}
Together, $v$ and $x$ specify all by $h$ of the input values. Since $U$ is one-to-one, there must be $k^h$ different choices for $\gamma$, the supercritical state. Another way to say this is that at most $1/k^h$ inputs can have the same supercritical state.
\end{proof}

\section{Non-Deterministic Read-Once Thrifty}
In this section, we will present our first new result, a lower bound for non-deterministic, read-once thrifty branching programs. We first use the notion of \emph{syntactic} read-once, forcing every complete path to query each input value at most once, regardless of whether the path can be followed by some input. After we present the pebbling argument for this restriction, we will replace this read-once restriction with a slightly more general one which suffices to give the same bound. Finally, we show how to extend the argument to \emph{semantic} read-once branching programs, at the expense of a constant factor.
We remark that our lower bounds are tight when $h$ is even when $h$ is even, because in this case the black-white \emph{whole} pebbling number of $T^h_2$ coincides with the \emph{fractional} one (in Theorem \ref{thm:frac_num}).

\subsection{Syntactic Read-Once}

It is easy to show that every complete computation path on a non-deterministic thrifty branching program must query each node of $T^h_2$, in a manner similar to Proposition \ref{prop:td_crit}. This fact combined with the read-once restriction means that every complete computation path corresponds to a \emph{permutation} of the nodes. We now follow the proof strategy outlined in the previous section, using this fact to define a black-white whole pebbling sequence.

One key idea used in this proof is a composability property of syntactic read-once branching programs. Essentially, this says that given two complete computation paths through a state $\gamma$, it is possible to ``switch" between them at $\gamma$. Note that this property does \emph{not} hold for semantic read-once branching programs. 
\begin{proposition}[Composability]
\label{prop:comp}
Let $C(I)$ and $C(J)$ be two complete computation paths on a syntactic read-once branching program that both contain some state $\gamma$. Then there exists an input $K$ which has a complete computation path $C(K)$ that follows $C_0(I, \gamma)$ up to $\gamma$, and then $C_1(J, \gamma)$ after $\gamma$.
\end{proposition}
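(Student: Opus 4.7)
The plan is to construct $K$ directly from the concatenated path and rely on syntactic read-once to guarantee that this construction is consistent. Let $P$ denote the path obtained by concatenating $C_0(I,\gamma)$ with $C_1(J,\gamma)$ at $\gamma$. By definition $P$ starts at the start state (since $C(I)$ does) and ends at an output state (since $C(J)$ does), so $P$ is a complete path in $B$ that passes through $\gamma$. The syntactic read-once restriction applies to this entire path: each input variable (either a leaf value $v_i$ or a particular function table entry $f_i(x,y)$) is queried at most once along $P$.

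Using this, I would define $K$ as follows. For each non-output state $\delta$ lying on $P$, let $e_\delta$ be the label of the outgoing edge that $P$ takes at $\delta$, and set $K$'s value for the variable queried at $\delta$ to be $e_\delta$. Variables that are not queried anywhere on $P$ can be set to arbitrary values in $[k]$. The key point is that this assignment is well-defined: if the same variable were assigned two conflicting values, it would have been queried at two different states of $P$, contradicting syntactic read-once. By construction, at every state $\delta$ on $P$ the edge $P$ takes is consistent with $K$, so $P$ itself is a complete computation path induced by $K$. We may therefore take $C(K) = P$, which by definition agrees with $C_0(I,\gamma)$ before $\gamma$ and with $C_1(J,\gamma)$ after $\gamma$.

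The one thing to be careful about is the role of syntactic (as opposed to semantic) read-once. The argument uses that the constructed path $P$ queries each variable at most once, but this path need not a priori be followable by any input --- it is merely a path in the graph. Semantic read-once only restricts paths which are followed by some input, so it would not rule out a repeated query on $P$; syntactic read-once applies to every complete path regardless of consistency, and this is precisely what makes the construction of $K$ well-defined. The remark in the surrounding text that composability fails for semantic read-once is consistent with this: without syntactic read-once, the concatenated path could query some variable twice along incompatible edges, and no input would be able to realise both edges. I do not expect any serious obstacle beyond correctly invoking the syntactic restriction at this step.
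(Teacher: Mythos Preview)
Your proof is correct and follows essentially the same approach as the paper: both arguments observe that the concatenated path $P$ is a complete path, invoke syntactic read-once on $P$ to ensure no variable is queried twice, and then define $K$ to match the edge labels taken along $P$. Your version is more explicit (and your remark distinguishing syntactic from semantic read-once is exactly the point), but the underlying idea is identical.
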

\begin{proof}
The read-once restriction implies that $C_0(I, \gamma)$ and $C_1(J, \gamma)$ never make the same query. Therefore we can choose $K$ to agree with $I$ on the queries before $\gamma$, and with $J$ on the queries at and after $\gamma$.
\end{proof}

\subsubsection{Pebbling Sequence}
We will associate a pebbling to each complete computation path $C(I)$, with the intuition that black pebbles represent values \emph{known} to be correct because they've already been queried, and white pebbles represent guessed values that \emph{should} be correct due to thriftiness, but have yet to be queried (i.e., verified).

We use the following rules to associate pebbling configurations to the states on $C(I)$ in order. Unlike the previous section, we may perform more than one move at a state, and hence multiple configurations may be associated with the same state.
For a state $\gamma$ on $C(I)$ querying node $i$, these steps are performed in order:

\begin{enumerate}
\item[(1)] Place white pebbles on any children of $i$ that are currently unpebbled. The configurations produced are associated with state $\gamma$. 
\item[(2a)] If $i$ is pebbled, it must be white-pebbled. Remove the pebble from $i$, and remove all black pebbles on the children of $i$. The configurations produced are associated with $\gamma$.
\item[(2b)] If $i$ is not pebbled, put a black pebble on $i$, and simultaneously remove all black pebbles on the children of $i$. The configuration produced is associated with the state \emph{following} $\gamma$.
%\item[(3)] If $i = 1$ is the root node, also remove the black pebble on $i$. The configuration produced is associated with the state \emph{following} $\gamma$. 
\end{enumerate}
When an internal node is queried, the thriftiness condition ensures that the values of the children of $i$ must be guessed if they haven't yet been queried; and if they have been queried, their values can be forgotten because $i$ will only be queried once.
The result of the query is either remembered by the following state, or is a verification of a previous guess, after which the guess can be forgotten.
Given that the starting configuration is empty, it is easy to check that these rules generate a valid pebbling sequence, except that the last configuration has a black pebble on the root. Performing a final move of removing the black pebble results in a valid pebbling sequence.

The following proposition connects the pebbles with the order of queries on $C(I)$.

\begin{proposition} 
\label{prop:rot_order}
Let $C(I)$ be a complete computation path on a non-deterministic, syntactic read-once, thrifty branching program solving $TEP^h_2(k)$. Let $\gamma$ be any state in $C(I)$, and let $\mathcal C$ be any (not necessarily the last) pebbling configuration associated with $\gamma$ according to the above rules. For each node $i$:
\begin{itemize}
\item If $i$ has a black pebble at $\mathcal C$, then it has been queried on $C_0(I, \gamma)$, and its parent (if it is not the root) is queried on $C_1(I, \gamma)$.
\item If $i$ has a white pebble at $\mathcal C$, then it is queried on $C_1(I, \gamma)$, and its parent has been queried on $C_0(I, \gamma)$ or is queried at $\gamma$ itself. This precludes the root from being white-pebbled.
\end{itemize}
\end{proposition}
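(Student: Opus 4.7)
The plan is to characterize, for each node $i$, the precise interval of configurations during which $i$ carries a black pebble and the interval during which it carries a white pebble, and then read the two invariants off of these intervals. A key preliminary fact, established analogously to Proposition~\ref{prop:td_crit}, is that on a non-deterministic thrifty branching program every node of $T^h_2$ is queried on any complete computation path $C(I)$; combined with syntactic read-once, each node is queried \emph{exactly} once. Let $\alpha_i$ denote the unique state that queries $i$, and, when $i$ is not the root, let $\beta_i$ denote the unique state that queries its parent.

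Next I would inspect the pebbling rules to pin down the four placement/removal events. A black pebble on $i$ is placed only by rule~(2b), which fires at $\alpha_i$ and produces a configuration associated with the state \emph{immediately after} $\alpha_i$; it is removed only by rule~(2a) or~(2b) applied to the parent, i.e., while processing $\beta_i$. A white pebble on $i$ is placed only by rule~(1), which fires at $\beta_i$ and produces configurations associated with $\beta_i$, and it is removed only by rule~(2a) at $\alpha_i$, whose output is also associated with $\alpha_i$. Syntactic read-once is essential here to keep these events clean: it makes $\alpha_i$ and $\beta_i$ well-defined, and it forbids pathological scenarios such as rule~(1) at $\beta_i$ attempting to place a white pebble on an already white-pebbled child (which would require a second query to the parent). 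Consequently, the configurations containing a black pebble on $i$ are precisely those associated with states strictly after $\alpha_i$ and at or before $\beta_i$ (at $\beta_i$, only until step~(2a)/(2b) fires); the configurations containing a white pebble on $i$ are precisely those associated with states at or after $\beta_i$ (only once step~(1) fires) and at or before $\alpha_i$ (only until step~(2a) fires).

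The two invariants then follow by unpacking these intervals. If a configuration $\mathcal C$ associated with $\gamma$ contains a black pebble on $i$, then $\alpha_i$ is strictly before $\gamma$ on $C(I)$, so $i$ is queried on $C_0(I,\gamma)$, while $\beta_i$ is either $\gamma$ or later, so the parent is queried on $C_1(I,\gamma)$. Similarly, a white pebble on $i$ at $\mathcal C$ forces $\beta_i$ to be at or before $\gamma$, putting the parent on $C_0(I,\gamma)$ or at $\gamma$, and forces $\alpha_i$ to be at or after $\gamma$, putting $i$ itself on $C_1(I,\gamma)$. The root can never be white-pebbled, since the only rule that places a white pebble is~(1) and it is only triggered by a query to a parent. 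The main obstacle I anticipate is the bookkeeping for intermediate configurations associated with $\gamma$ itself, where several substeps at $\gamma$ may still be pending; carefully tracking which of the three substeps of the pebbling rules has already fired is what determines whether $\alpha_i = \gamma$ or $\beta_i = \gamma$ is compatible with a given pebble still being present in $\mathcal C$.
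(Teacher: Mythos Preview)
Your proposal is correct and takes essentially the same approach as the paper: both arguments simply track when each type of pebble is placed and removed (in your notation, at $\alpha_i$ and $\beta_i$) and then read the query-order invariants off of those intervals. The paper's proof is a terse two-paragraph version of the same bookkeeping you spell out with explicit $\alpha_i$, $\beta_i$ notation and substep-by-substep tracking.
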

\begin{proof}
In the above rules, a black pebble is placed on a node on the state immediately after the one querying that node. Also, a black pebble is always removed at the state that queries its parent, so it must be that if $i$ is black pebbled at $\gamma$, its parent has not yet been queried. Because $C(I)$ is a permutation of the nodes, its parent must be queried at some point on $C(I)$, and hence this must be after $\gamma$.

Similarly, a white pebble is placed on a node only when its parent is queried, and is only removed when it is queried.
\end{proof}

\subsubsection{The Lower Bound}
The following critical lemma establishes the significance of the pebbles: states must ``remember" the correct values of nodes that are pebbled at their configurations.

\begin{lemma}
\label{lem:rot_pebbles}
Let $C(I)$ be a complete computation path on a non-deterministic, syntactic read-once, thrifty branching program solving $TEP^h_2(k)$.
Suppose node $i$ is pebbled on state $\gamma$ of $C(I)$; then every complete computation path $C(J)$ through $\gamma$ must satisfy $v_i^J = v_i^I$.
\end{lemma}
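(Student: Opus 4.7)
The plan is to case-split on the color of the pebble on $i$ and, in each case, to use the composability property (Proposition~\ref{prop:comp}) to produce a hybrid input $K$ whose computation follows one of $C(I), C(J)$ up to $\gamma$ and the other after $\gamma$. Thriftiness at a shared query state then pins down $v_i^K$ against both $v_i^I$ and $v_i^J$, collapsing the three values.

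Suppose first that $i$ carries a black pebble and is not the root. By Proposition~\ref{prop:rot_order}, $i$ is queried at some $\delta \in C_0(I,\gamma)$ and the parent of $i$ is queried at some $\epsilon \in C_1(I,\gamma)$. I would form $K$ following $C_0(J,\gamma)$ and then $C_1(I,\gamma)$. Applying thriftiness at $\epsilon$ for both $C(I)$ and $C(K)$ (the argument labels depend only on the state) gives $v_i^K = v_i^I$. Since every complete path on a non-deterministic thrifty BP queries every node, and since read-once together with the query at $\delta$ on $C(I)$ rules out $i$ being queried on $C_1(I,\gamma) = C_1(K,\gamma)$, the node $i$ must be queried on $C_0(K,\gamma) = C_0(J,\gamma)$. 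At this shared query, thriftiness yields $v_i^K = v_i^J$, and the chain closes.

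The white-pebble case is symmetric: Proposition~\ref{prop:rot_order} now puts the query of $i$ itself on $C_1(I,\gamma)$ and the query of its parent on $C_0(I,\gamma) \cup \{\gamma\}$, so I would reverse the composition and take $K$ to follow $C_0(I,\gamma)$ then $C_1(J,\gamma)$, repeating the same argument. If the parent is queried at $\gamma$ itself, no composition is needed: both $C(I)$ and $C(J)$ execute the same parent query at $\gamma$, and thriftiness at $\gamma$ gives $v_i^I = v_i^J$ immediately. The only remaining case is $i$ being a black-pebbled root, where Proposition~\ref{prop:rot_order} provides no suffix parent query. Here I would again use $K = C_0(I,\gamma) + C_1(J,\gamma)$ but combine two facts: the output state ending $C(K)$ is the one ending $C(J)$, so BP correctness gives $v_1^K = v_1^J$, while thriftiness at the root query $\delta \in C_0(I,\gamma)$ forces $v_1^K = v_1^I$.

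The main obstacle I anticipate is exactly this root-black case, which resists a uniform treatment with the others and instead relies on the correctness of the output label in place of a suffix parent query. Beyond that, the work is careful bookkeeping: verifying that a query on the composed path behaves identically to the corresponding query on the original path (which follows from path agreement plus the fact that queries depend only on state labels), and invoking read-once together with the ``every node is queried'' property of non-deterministic thrifty BPs to force $i$'s unique query onto the correct half of $C(K)$.
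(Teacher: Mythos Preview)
Your proof is correct and follows essentially the same composability-plus-thriftiness strategy as the paper. The only variation is that for the non-root black case you compose $C_0(J,\gamma)+C_1(I,\gamma)$, whereas the paper uses a single hybrid $K$ with $C_0(K,\gamma)=C_0(I,\gamma)$ and $C_1(K,\gamma)=C_1(J,\gamma)$ for both colors; your explicit treatment of the black-pebbled root via output-state correctness is in fact a detail the paper's proof glosses over.
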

\begin{proof}
Let $C(J)$ be any complete computation path through $\gamma$. 
By Proposition \ref{prop:comp}, we can choose a complete computation path $C(K)$ such that $C_0(K, \gamma) = C_0(I, \gamma)$ and $C_1(K, \gamma) = C_1(J, \gamma)$.

First suppose node $i$ has a black pebble at $\gamma$ (with respect to $I$). Then by Proposition \ref{prop:rot_order}, $C_0(I,\gamma)$ queries $i$ and $C_1(I, \gamma)$ queries its parent. 
Then $v_i^K = v_i^I$ because $C_0(K,\gamma) = C_0(I,\gamma)$. Since $C_1(K,\gamma) = C_1(J,\gamma)$ queries the parent of $i$, $v^K_i = v^J_i$ due to thriftiness.

Now suppose $i$ has a white pebble at $\gamma$ (with respect to $I$). If $\gamma$ queries the parent of $i$, then $v_i^I = v_i^J$ by thriftiness. Otherwise, by Proposition \ref{prop:rot_order} again, the parent of $i$ is queried on $C_0(I, \gamma)$, and hence $v_i^K = v_i^I$. But $i$ itself is queried on $C_1(K, \gamma) = C_1(J,\gamma)$, and so $v_i^K = v_i^J$.
\end{proof}

We will use this lemma in combination with the known pebble number for $T^h_2$ to derive a lower bound for the size of these branching programs.

\begin{proof}[Proof of Theorem \ref{thm:ro_ndet}, syntactic read-once case]
We have established that any complete computation path $C(I)$ may be associated with a valid black-white whole pebbling. Thus we may define the \emph{supercritical state} of $C(I)$ to be the first state that has an associated pebble configuration with at least $\ceil{\frac{h}{2}} + 1$ pebbles, using the bound of Theorem \ref{thm:frac_num}. By Lemma \ref{lem:rot_pebbles}, these pebbles determine at least $\ceil{\frac{h}{2}} + 1$ of the correct node values. Thus if we define a map each input to its supercritical state, at most $1/k^{\ceil{\frac{h}{2}} + 1}$ of the inputs can be mapped to the same state, and the theorem follows.
\end{proof}

We remark that this proof is implicitly uses the same ``tag'' argument as the previous section; however, since the tag here only needs two components (the supercritical state and a string specifying all of the other input values), we omitted the notation.

\subsection{Null-Path-Free}
Recall that a \emph{null-path} in a non-deterministic branching program is a complete path that is inconsistent, i.e., that has two states that query the same variable but takes edges with different labels out of each state. Such paths are ``useless" in the sense that no input can follow them to an output state; however, it is known that their presence can result in an exponential decrease in the size of branching programs for certain problems \cite{juk13}. A branching program is \emph{null-path-free} if it contains no null-paths.
Every syntactic read-once branching program is null-path-free.

We now generalize the previous argument to non-deterministic thrifty null-path-free branching programs. Crucially, composability still applies. Moreover, even though states may now be queried more than once, considering only the \emph{first} time each node is queried yields a permutation; we will call these states the \emph{critical states} for nodes. We can then apply the same pebbling moves as before to the critical states, associating the configurations with these states (and ignoring all other states on the computation path). The rules still generate a valid pebbling sequence, and the following analogue of Proposition \ref{prop:rot_order} holds by a very similar argument.

\begin{proposition}
\label{prop:npft_order}
Let $C(I)$ be a complete computation path on a non-deterministic, null-path-free, thrifty branching program solving $TEP^h_2(k)$. Let $\gamma$ be a critical state in $C(I)$, and let $\mathcal C$ be any (not necessarily the last) pebbling configuration associated with $\gamma$ according to the rules from the previous Section. For each node $i$:
\begin{itemize}
\item[(1)] If $i$ has a black pebble at $\mathcal C$, then it has been queried on $C_0(I, \gamma)$, and its parent (if it is not the root) is queried \emph{for the first time} on $C_1(I, \gamma)$.
\item[(2)] If $i$ has a white pebble at $\mathcal C$, then it is queried \emph{for the first time} on $C_1(I, \gamma)$, and its parent has been queried on $C_0(I, \gamma)$ or is queried at $\gamma$ itself. This precludes the root from being white-pebbled.
\end{itemize}
\end{proposition}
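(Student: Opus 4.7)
The plan is to mirror the proof of Proposition \ref{prop:rot_order} almost verbatim, making the single substitution that every reference to the (unique) query of node $i$ becomes a reference to the \emph{critical state} of $i$, i.e.\ its first query. The excerpt already notes that every complete computation path on a non-deterministic thrifty branching program queries each node of $T^h_2$, so in the null-path-free setting the critical states along $C(I)$ are in bijection with the nodes of $T^h_2$ and appear in some total order, exactly as in the syntactic read-once case. Since the pebbling moves are applied only at critical states, the resulting configurations behave identically to those in the syntactic read-once case, and the validity of the sequence follows as before.

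For claim (1), I would trace when a black pebble on $i$ can appear and disappear. Rule (2b) is the only mechanism that places a black pebble on $i$, and it is applied at $i$'s critical state, producing a configuration associated with the state immediately following that critical state. Hence if $i$ carries a black pebble in a configuration $\mathcal C$ attached to some critical state $\gamma$ in $C(I)$, then the critical state of $i$ strictly precedes $\gamma$, so $i$ has been queried on $C_0(I,\gamma)$. A black pebble on $i$ is removed only by the simultaneous clause of rule (2a) or (2b), which fires exclusively at the critical state of the parent $j$ of $i$. Because the pebble on $i$ is still present at $\mathcal C$, the critical state of $j$ has not yet been fully processed, so the first query of $j$ occurs at some state of $C_1(I,\gamma)$.

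Claim (2) is symmetric: a white pebble on $i$ is placed only in rule (1) at the critical state of the parent $j$, and is removed only in rule (2a) at the critical state of $i$. Consequently, if $i$ is white-pebbled at $\mathcal C$, then $j$'s critical state has already occurred on $C_0(I,\gamma)$ or is $\gamma$ itself, while $i$'s critical state still lies in $C_1(I,\gamma)$; the same observation precludes the root from ever carrying a white pebble, since the root has no parent to query.

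I do not anticipate any substantive obstacle: the argument is essentially bookkeeping on top of the already-established validity of the pebbling sequence. The only subtlety to double-check is the placement of the configuration $\mathcal C$ at $\gamma$ relative to the sub-steps of rules (1), (2a), (2b); in particular, a black pebble on a child of $j$ remains visible in step-(1) configurations associated with $j$'s critical state $\gamma$ itself, which is exactly why the proposition says ``queried for the first time \emph{on} $C_1(I,\gamma)$'' — a set that contains $\gamma$. Once these alignments are confirmed the proof proceeds line-for-line as in the syntactic read-once case.
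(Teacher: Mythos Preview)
Your proposal is correct and mirrors exactly what the paper does: it states that Proposition~\ref{prop:npft_order} ``holds by a very similar argument'' to Proposition~\ref{prop:rot_order}, which is precisely the bookkeeping you carry out by replacing each reference to the unique query of a node with the first query (critical state) of that node. Your attention to the edge case where $\mathcal C$ is an intermediate configuration at $\gamma$ itself is appropriate and matches the intended reading of $C_1(I,\gamma)$.
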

We now prove the key technical lemma, whose statement remains unchanged from the previous subsection, while the proof contains but a few subtle differences. The lower bound then follows directly, as before.

\begin{lemma}
\label{lem:npft_pebbles}
Let $C(I)$ be a complete computation path on a non-deterministic, null-path-free, thrifty branching program solving $TEP^h_2(k)$.
Suppose node $i$ is pebbled on a critical state $\gamma$ of $C(I)$; then every complete computation path $C(J)$ through $\gamma$ must satisfy $v_i^J = v_i^I$.
\end{lemma}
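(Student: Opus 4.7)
The plan is to mirror the proof of Lemma \ref{lem:rot_pebbles} almost verbatim, swapping in Proposition \ref{prop:npft_order} for Proposition \ref{prop:rot_order} wherever the order of queries is invoked. The first step is to re-establish composability in the null-path-free setting: given complete paths $C(I)$ and $C(J)$ both containing $\gamma$, the concatenated path $C_0(I,\gamma) \cdot C_1(J,\gamma)$ is a complete path in $B$, and since $B$ has no null-paths it is consistent, so some input $K$ follows it with $C_0(K,\gamma) = C_0(I,\gamma)$ and $C_1(K,\gamma) = C_1(J,\gamma)$. This produces the same ``hybrid'' input used in the syntactic read-once proof, allowing us to argue $v_i^I = v_i^K = v_i^J$.

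With $K$ in hand, I split on the color of the pebble on $i$ at $\gamma$. In the black case, Proposition \ref{prop:npft_order} guarantees that $i$ is queried somewhere on $C_0(I,\gamma)$ and the parent of $i$ is queried \emph{for the first time} on $C_1(I,\gamma)$. The query to $i$ on the shared prefix forces $v_i^K = v_i^I$, and since the parent of $i$ is then queried on the shared suffix $C_1(K,\gamma) = C_1(J,\gamma)$, thriftiness of that query with respect to both $K$ and $J$ shows the edge label there equals both $v_i^K$ and $v_i^J$.

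In the white case, Proposition \ref{prop:npft_order} says that $i$ is queried for the first time on $C_1(I,\gamma)$, and the parent of $i$ is either queried at $\gamma$ or earlier on $C_0(I,\gamma)$. If the parent is queried at $\gamma$ itself, thriftiness of that single shared state immediately yields $v_i^I = v_i^J$. Otherwise the parent is queried somewhere on $C_0(I,\gamma)$, giving $v_i^K = v_i^I$ by thriftiness on the shared prefix; then because $i$ is not queried at all on $C_0(I,\gamma) = C_0(K,\gamma)$, and every node must be queried on a thrifty complete path, $i$ is queried at some state on $C_1(K,\gamma) = C_1(J,\gamma)$, at which the shared outgoing edge label reveals $v_i^K = v_i^J$.

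The only real wrinkle over the read-once case is justifying composability from null-path-freeness rather than from read-onceness, and being careful with the ``first time queried'' qualifier so that when we appeal to Proposition \ref{prop:npft_order} we correctly locate the relevant queries in either the prefix or the suffix of $\gamma$. The main potential obstacle I anticipate is the white-pebble subcase where the parent has already been queried on $C_0(I,\gamma)$: one must verify that $i$ itself really does appear on $C_1(K,\gamma)$ rather than somewhere earlier on $C_0(K,\gamma)$, which follows because $i$ is queried for the \emph{first} time after $\gamma$ under $I$, hence not at all before $\gamma$ on the identical prefix of $K$.
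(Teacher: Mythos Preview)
Your proof is correct and follows the same overall template as the paper's --- composability plus a case split on the pebble color --- but it is actually cleaner than the paper's own argument. The paper, in both the black and white subcases, introduces a \emph{second} composite input $K'$ (following $C_0(J,\gamma)$ and $C_1(I,\gamma)$) to handle the possibility that the relevant node (the parent of $i$ in the black case, $i$ itself in the white case) is queried only on $C_0(J,\gamma)$ and not on $C_1(J,\gamma)$. You sidestep this entirely by exploiting the ``first time'' qualifier in Proposition~\ref{prop:npft_order} more sharply: since the parent of $i$ (black case) or $i$ (white case) is not queried at all on $C_0(I,\gamma) = C_0(K,\gamma)$, and since the composite $C(K)$ is a complete thrifty path and hence must query every node, that node necessarily appears on $C_1(K,\gamma) = C_1(J,\gamma)$. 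This makes the paper's ``Otherwise'' branches vacuous, and your single-$K$ argument suffices. The trade-off is minor: the paper's proof works even if one forgets that the prefix contains \emph{no} query to the node in question, whereas yours depends on that observation --- but since Proposition~\ref{prop:npft_order} states exactly this, your version is the tighter one.
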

\begin{proof}
Let $C(J)$ be any complete computation path through $\gamma$. 
Since composability still holds, we can choose a complete computation path $C(K)$ agreeing with $C(I)$ before $\gamma$ and $C(J)$ after $\gamma$.
%Note that because every complete computation path in this setting corresponds to a permutation of the nodes, the read-once property implies that $C(J)$ makes the same queries as $C(I)$ before $\gamma$, and the same queries as $C(I)$ after $\gamma$.

First suppose node $i$ has a black pebble at $\gamma$ (with respect to $I$). Then by Proposition \ref{prop:npft_order}, $C_0(I,\gamma)$ queries $i$, hence $v_i^K = v_i^I$. 
If $C_1(K,\gamma) = C_1(J,\gamma)$ queries the parent of $i$, then by thriftiness $v_i^K = v_i^J$. Otherwise, $C_0(J,\gamma)$ queries the parent of $i$, so consider instead the computation path $C(K')$ which uses $C_0(J,\gamma)$ and $C_1(I,\gamma)$. Then $C_0(K',\gamma)$ queries the parent of $i$, and by Proposition \ref{prop:npft_order} so does $C_1(K',\gamma) = C_1(I,\gamma)$, and by thriftiness these are the same query, and so $v_i^J = v_i^I$.

Now suppose $i$ has a white pebble at $\gamma$ (with respect to $I$). If $\gamma$ queries the parent of $i$, then $v_i^I = v_i^J$ by thriftiness. Otherwise, by Proposition \ref{prop:npft_order} again, the parent of $i$ is queried on $C_0(I, \gamma)$, and hence $v_i^K = v_i^I$. If $i$ is queried on $C_1(J, \gamma)$ then $v_i^K = v_i^J$ and we are done. Otherwise, $i$ is queried on $C_0(J,\gamma)$ and (by Proposition \ref{prop:npft_order}) $C_1(I,\gamma)$. By composability, there is a computation path $C(K')$ agreeing with $C_0(J,\gamma)$ and $C_1(I,\gamma)$; by thriftiness it makes the same query to node $i$ on both segments, and by the null-path-free property the result is the same, hence $v_i^I = v_i^J$.
\end{proof}

Thus Theorem \ref{thm:ro_ndet} holds for the null-path-free case, by analogy to the syntactic read-once case.

\subsection{Semantic Read-Once}
This final subsection deals with the weaker \emph{semantic} read-once restriction, where only complete \emph{computation} paths (i.e., paths can can be followed by some input) must be read-once. We will show that the same black-white whole pebbling argument holds in this setting, up to a constant factor in the lower bound.

First, the observation that a complete computation path corresponds to a permutation of the nodes still holds for semantic read-once branching programs. Therefore we can apply the same pebbling rules as before, and associate to each complete computation path $C(I)$ a \emph{supercritical state} $\gamma$, the first state on $C(I)$  whose associated pebbling configuration has $\ceil{\frac{h}{2}} + 1$ pebbles. We now proceed slightly differently, returning to the tag argument of Section 3. Define the tagging function $U(I) = (u, \gamma, x)$, where:
\begin{itemize}
\item $u$ is a number encoding which permutation corresponds to $C(I)$.
\item $\gamma$ is the supercritical state of $C(I)$.
\item $x \in [k]^{2^h - 1 + (k^2 - 1)(2^{h-1} - 1) - \ceil{\frac{h}{2}} - 1}$ specifies all of the input values except the correct values of the nodes pebbled at $\gamma$.
\end{itemize}

Here is the main technical lemma that allows us to prove the lower bound. 

\begin{lemma}
The tagging function $U$ is one-to-one.
\end{lemma}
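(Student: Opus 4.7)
The plan is to show that $U(I)=U(J)=(u,\gamma,x)$ forces $I=J$, which, since $x$ already specifies every input value other than the correct values $v_j$ of the nodes $j$ pebbled at $\gamma$, amounts to establishing $v_j^I=v_j^J$ for each such $j$. First, I would note that $u$ uniquely determines the pebbling sequence, so both computation paths query the same node $q_t$ at query time $t$, have the same pebbled set at $\gamma$, and are pinned down to agree at $\delta_1^I=\delta_1^J=s_0$ (the start state) and $\delta_{t_\gamma}^I=\delta_{t_\gamma}^J=\gamma$ (from the tag).

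The core of the proof is a combined induction, anchored at times $1$ and $t_\gamma$ and extending outward toward each other, showing $\delta_t^I=\delta_t^J$ for every $t$. At a shared state $\delta_t$ querying $q_t$, thriftiness on both paths forces identical query arguments, so if $q_t$ is internal its children's correct values automatically agree; the inductive step then reduces to matching the outgoing edge labels $v_{q_t}^I$ and $v_{q_t}^J$. When $q_t$ is non-pebbled, $x$ supplies the matching value directly. When $q_t$ is pebbled, let $t_{\rm par}$ be the query time of its parent: by thriftiness, at the state $\delta_{t_{\rm par}}$ the node $v_{q_t}$ appears as one of the two arguments, so state agreement at $t_{\rm par}$ yields $v_{q_t}^I=v_{q_t}^J$. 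Three sub-cases arise: $t_{\rm par}=t_\gamma$ (the parent is $i_\gamma$, handled immediately by the shared $\gamma$), $t_{\rm par}<t_\gamma$ (the white-pebbled-with-early-parent case, handled by the backward branch of the induction), and $t_{\rm par}>t_\gamma$ (the black-pebbled case, handled by the forward branch).

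The main obstacle is the interlocking between the two branches: the backward induction at $t\leq t_\gamma$ depends on forward agreements after $\gamma$ to handle black-pebbled nodes, and vice versa. I would resolve this by observing that the pebbling structure guarantees each pebbled node's witness parent-time lies strictly on the opposite side of $t_\gamma$ from its own query-time, so one can extend the two agreed intervals $[1,a]$ and $[b,T]$ outward in an order that always invokes a witness already inside the other interval; because the pebbled frontier is finite of size $\ceil{\frac{h}{2}}+1$, the process terminates with $a\geq b$. I would also carefully verify that the step from a shared $\delta_t$ with matching edge label to a shared $\delta_{t+1}$ does not fall prey to the non-determinism of the branching program: correctness combined with thriftiness forces each successor state to encode precisely the pebbled memory needed to make all remaining thrifty queries, so identical edge labels out of a shared state must lead to identical next states (otherwise one could exhibit a non-thrifty or incorrect complete computation path). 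Having established $\delta_t^I=\delta_t^J$ for every $t$, reading off the arguments at each witness state yields $v_j^I=v_j^J$ for every pebbled $j$, completing the proof.
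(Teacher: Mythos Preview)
Your approach has a genuine gap in handling non-determinism. The inductive step you rely on---from a shared state $\delta_t^I=\delta_t^J$ with matching outgoing edge label to a shared successor $\delta_{t+1}^I=\delta_{t+1}^J$---is exactly what fails in a non-deterministic branching program: a state may have several out-edges carrying the same label, and the \emph{chosen} complete paths $C(I)$ and $C(J)$ are free to follow different ones. Your proposed fix, that ``correctness combined with thriftiness forces each successor state to encode precisely the pebbled memory,'' is not a theorem here; nothing in the definition of a thrifty, semantic read-once branching program prevents two distinct $a$-labelled edges out of $\delta$ from each continuing to correct, thrifty, read-once accepting computations. The same objection applies to the backward direction of your induction via in-edges. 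Since your mechanism for recovering $v_{q_t}^I=v_{q_t}^J$ depends on state agreement at the parent's query time, the whole scheme collapses once path identity is unavailable.

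The paper avoids this entirely by never attempting to show $C(I)=C(J)$. It uses $u$ for one purpose only: since $C(I)$ and $C(J)$ correspond to the same permutation and both pass through $\gamma$, the sets of nodes queried on $C_0(I,\gamma)$ and on $C_1(J,\gamma)$ are disjoint. That disjointness is precisely what is needed for composability to hold even in the semantic read-once setting, yielding an input $K$ with a complete path that follows $C_0(I,\gamma)$ and then $C_1(J,\gamma)$. From there the argument of Lemma~\ref{lem:rot_pebbles} applies verbatim: for a black-pebbled node $i$, $C_0(K,\gamma)=C_0(I,\gamma)$ gives $v_i^K=v_i^I$, and thriftiness of the parent query on $C_1(K,\gamma)=C_1(J,\gamma)$ gives $v_i^K=v_i^J$; white pebbles are handled symmetrically. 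No induction along the paths and no control over non-deterministic branching is required.
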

\begin{proof}
Let $I$ and $J$ be two inputs with corresponding complete computation paths $C(I)$ and $C(J)$, and suppose $U(I) = U(J) = (u, \gamma, x)$. First, because $C(I)$ and $C(J)$ correspond to the same permutation, and both pass through $\gamma$, the segments $C_0(I,\gamma)$ and $C_0(J,\gamma)$ query the same set of nodes, as do $C_1(I,\gamma)$ and $C_1(J,\gamma)$. Then the sets of queries made by $C_0(I,\gamma)$ and $C_1(J,\gamma)$ are disjoint, and therefore composability holds: there is an input $K$ with a complete computation path $C(K)$ that follows $C_0(I,\gamma)$ and then $C_1(J,\gamma)$.
The same argument as Lemma \ref{lem:rot_pebbles} shows that $I$ and $J$ agree on the values of all $\ceil{\frac{h}{2}} + 1$ pebbled nodes at $\gamma$, and this together with the $x$ part of the tag implies that $I = J$.
\end{proof}
Since the number of possible permutations depends on $h$ but not $k$, the asymptotic lower bound follows in Theorem \ref{thm:ro_ndet} for the semantic read-once restriction.

\section{Read-Once Deterministic}

In this section, we present our most complex result, a lower bound for deterministic, read-once branching programs solving the Tree Evaluation Problem. Note that for deterministic branching programs, there is no difference between the syntactic and semantic notions of read-once. While the overall strategy used to derive the lower bound is the same as before, each component is more complicated than the two previous sections.

Our first goal is to take a computation path $C(I)$ and assign a pebbling sequence to its states. 
Unlike thrifty branching programs, where the behaviour of inputs depends only on their correct node values, now states must be able to encode information about non-thrifty queries. 
The high level idea is the same as before: we define an algorithm that takes a complete computation path $C(I)$, processes the states one by one, and outputs a sequence of pebbling configurations. 
Each state is processed in two phases. 
The result of the state's query is used to update several pieces of ``auxiliary data" used by the algorithm, and then this data is used to make a sequence of pebbling moves.
This algorithm is deterministic; at any point, the values of the auxiliary data and pebbling configuration depend only on the states and edges that have been processed so far.
Before we describe the algorithm itself, we define the variations in the pebbling and the auxiliary data used by the algorithm.

\subsection{Definitions}
For each node $i$ and $a \in [k]$, the \emph{logical variable} $[i, a]$ represents the statement ``$v_i = a$".
A computation path can be interpreted as a derivation of a singleton formula $[1, a]$, where $a$ corresponds to the output state that is reached.
There are two types of pebbles used to represent the information encoded by a state:
\begin{itemize}
\item {\bf Grey pebbles} represent an implication among the variables as a result of queries. 
These have labels of the form 
$[2i, a] \wedge [2i + 1, b] \imp [i, c]$, where $i$ is the pebbled node.
\item {\bf Black pebbles} represent node values that must be correct. These have labels of the form $[i, a]$, where $i$ is the pebbled node. 
\end{itemize}
Intuitively, grey pebbles represent information about queries which could be thrifty or non-thrifty, and black pebbles remain encodings of node values which are known to be correct.

The auxiliary data is the set of objects defined below.

\begin{definition}[Range] 
Let $\gamma$ be a state on a computation path $C(I)$. The set $Range^I_\gamma(i) \subseteq [k]$ stores the possible ``correct values" of node $i$ at $\gamma$.
That is, $a \in Range^I_\gamma(i)$ if and only if there exists an input $I'$ that agrees with $I$ before $\gamma$ and has $v_i^{I'} = a$. When the context is clear, we will drop the $I$ and write $Range_\gamma(i)$.
\end{definition}

\begin{definition}[Gap, Completely Queried]
Let $\gamma$ be a state on a computation path $C(I)$. A \emph{gap} for an internal node $i$ at $\gamma$ on $C(I)$ is a function value $f_i(x,y)$ where $x \in Range^I_\gamma(2i), y \in Range^I_\gamma(2i + 1)$, and the function value was not queried by $C(I)$ before $\gamma$. $I$ has a gap for a leaf $i$ if $v_i$ has not yet been queried.

If $i$ has no gaps at $\gamma$, it is \emph{completely queried} at $\gamma$ with respect to $I$.
\end{definition}

The next two definitions capture what is \emph{yet to be learned} at $\gamma$.

\begin{definition}[Equivalence]
Let $\gamma$ be a state on a computation path $C(I)$. For $a_1, a_2 \in Range^I_\gamma(i)$, we define equivalence between these corresponding variables, denoted $[i, a_1] \approx^I_\gamma [i,a_2]$,
according to the following recursive definition:
\begin{itemize}
\item $[1, a_1] \approx^I_\gamma [1, a_2] \iff a_1 = a_2$.
\item Let $i$ be a non-root node with sibling $i'$ and parent $j$. 
Then $[i, a_1] \approx^I_\gamma [i, a_2]$ if and only if for all $b \in Range^I_\gamma(i')$, $f_j(a_1, b)$ and $f_j(a_2, b)$ have been queried on $C_0(I,\gamma)$, and $[j, f_j(a_1, b)] \approx^I_\gamma [j, f_j(a_2, b)]$.
\end{itemize}
\end{definition}
One can check that this is an actual equivalence relation for each state and node.
Intuitively, two variables are equivalent if changing the node's correct value from one to the other doesn't change the correct root value.
On the other hand, if a node has two or more equivalence classes at $\gamma$, $C_1(I,\gamma)$ should make more queries to the node or its descendants to determine the correct equivalence class.
This motivates the next definition.

\begin{definition}[Node Activity]
Let $\gamma$ be a state on a computation path $C(I)$. 
A node $i$ is \emph{active} for $I$ at $\gamma$ if its parent is active and there exist $a_1, a_2 \in Range^I_\gamma(i)$ such that $[i, a_1] \not \approx^I_\gamma [i, a_2]$.
Otherwise, the node is \emph{inactive} at $\gamma$.
\end{definition}

All nodes begin active at the initial state, because all variables are inequivalent. A computation path may be interpreted as a sequence of queries made until the root becomes inactive. The following intuition may be helpful later on. 
There are two ways for a node to become inactive: either its correct value is found (i.e., $|Range_\gamma(i)| = 1$), or it or one of its ancestors has been completely queried and that node's equivalence classes merged into one.
The former can be done through thrifty queries (``efficiently"), while the latter requires many queries.

The final definition encompasses all of the previous ones, formalizing the notion of auxiliary data.

\begin{definition}[Memory]
The \emph{memory} of the pebbling algorithm at $\gamma$ relative to path $C(I)$ is the set of queries that have been made on $C_0(I, \gamma)$ and their results (put another way, the states and the edges traversed).
From this the pebbling algorithm can calculate the Ranges, equivalence classes, and node activity.
Thus when we refer to the algorithm's memory at a point in time, we implicitly include these three properties for every node.
\end{definition}

The pebbling configuration output immediately before processing $\gamma$ is the \emph{configuration associated with} $\gamma$.
As we will discuss in the next subsection, we distinguish between the memory and the configuration of a state because of their different functions in the pebbling algorithm. 
Roughly speaking, the memory is the internal storage used by the algorithm, while the pebble configurations are the actual output.

\begin{algorithm}
\caption{Pebbling Algorithm}
\begin{algorithmic}[1]
\Require $C(I)$, a computation path in the branching program.
\Ensure A sequence of pebble configurations corresponding to $C(I)$.
\hrule
\vspace{5pt}

\ForAll {states $\gamma$ on $C(I)$ in path order}
	\ForAll {nodes $i$, in bottom-up order}
		\If {$i$ is a leaf and has been queried on $C_0(I, \gamma)$ with value $a$}
			\State $Range_\gamma(i) \gets \{a\}$.
		\ElsIf {$i$ is an internal node and completely queried at $\gamma$}
			\State $Range_\gamma(i) \gets \{f_i(x,y) \mid (x,y) \in Range_\gamma(2i) \times Range_\gamma(2i + 1)\}$.
		\Else
			\State $Range_\gamma(i) \gets [k]$
		\EndIf
	\EndFor
		
	\State Update equivalence classes using the recursive definition (this is top-down).
	\State Update activity of each node using the definition.
	
\vspace{10pt}

	\If {the state preceding $\gamma$ queries leaf $i$ with value $a$}
		\State Place a black pebble $[i, a]$ (on $i$).
	\ElsIf {the state preceding $\gamma$ queries function $f_i(a, b)$ with result value $c$}
		\State Place a grey pebble $[2i, a] \wedge [2i+1, b] \imp [i, c]$ (on $i$).
	\EndIf
	
\vspace{5pt}

	\ForAll {nodes $i$, in bottom-up order}
		\If {$i$ is inactive}
			\State Remove all grey pebbles from $i$.
			\If {$Range_\gamma(i) = \{a\}$ for some $a$ and $i$ hasn't been previously black pebbled}
				\State Place a black pebble $[i, a]$ (on $i$).
			\EndIf
		\EndIf
		\If {$i$ is inactive or completely queried}
				\State Remove any black pebbles on the children of $i$.
		\EndIf
		
		\ForAll {$a \notin Range_\gamma(i)$}
			\State Remove all grey pebbles with $[i, a]$ in antecedent.
		\EndFor
	\EndFor
	\State Associate the latest pebbling configuration produced with $\gamma$.
\EndFor

\vspace{5pt}

\State \Return the sequence of pebbling configurations produced.
\end{algorithmic}
\end{algorithm}

\subsection{Description of the Algorithm}
Lines 2-12 comprise the ``computation" done by the algorithm.
First, path segment $C_0(I,\gamma)$ is used to update the $Range_\gamma(i)$ (Lines 2-10), equivalence classes (Line 11), and node activity (Line 12).
Note that this is all done independently of the pebbling configuration, which is only updated in the second phase.

The remaining steps (Lines 13-33) use the updated memory to produce new pebble configurations.
First, a new pebble gets placed as a result of the query (Lines 13-17).
Lines 20 and 25-27 remove all pebbles from inactive nodes, except for when a black pebble is still necessary to determine if a future query to the parent could be the thrifty query.
Line 22 essentially replaces a grey pebble with a black pebble, as this is the only scenario when additional black pebbles can be placed.
Note that we allow the algorithm to also remember when nodes have been black pebbled, so that each node gets black-pebbled at most once during the algorithm.
Finally, Lines 28-30 remove the grey pebbles which correspond to queries that cannot possibly be thrifty, and hence need not be remembered.

\subsection{Properties of the Pebbling}
In the following proofs, we will often use the fact that if two inputs $I$ and $I'$ agree before $\gamma$, then they must have the same memory at $\gamma$. We also remind the reader that Proposition \ref{prop:comp} still holds; the following proofs will frequently compose computation paths.

We first prove the basic correctness of the pebbling algorithm in relation to the definition of $Range$ and the intuitive significance of black and grey pebbles.

\begin{proposition}
\label{prop:range}
Let $\gamma$ be a state on a computation path $C(I)$ on a deterministic, read-once branching program solving $TEP^h_2(k)$. 
\begin{enumerate}
\item[(i)] For all inputs $I'$ agreeing with $I$ before $\gamma$, and all nodes $i$, $v_i^{I'} \in Range_\gamma^I(i)$.
\item[(ii)] Let $C$ be any collection of nodes such that none is an ancestor of any other. For each $i \in C$ let $a_i \in Range^I_\gamma(i)$. 
Then there exists an input $I'$ agreeing with $I$ before $\gamma$ such that $v_i^{I'} = a_i$ for all $i \in C$.
\item[(iii)] If the pebbling configuration at $\gamma$ has black pebble $[i, a]$, then $v^I_i = a$. 
\item[(iv)] If the pebbling configuration has grey pebble $[2i, a] \wedge [2i+1, b] \imp [i, c]$ then there exists an input $I'$ that agrees with $I$ before $\gamma$, and $v_{2i}^{I'} = a$, $v_{2i+1}^{I'} = b$, and $v_i^{I'} = c$.
\end{enumerate}
\end{proposition}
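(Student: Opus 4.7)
The plan is to prove (i) and (ii) first, then derive (iii) and (iv) as easy consequences. Part (i) should follow by a bottom-up induction on the tree. For a leaf $i$ queried on $C_0(I,\gamma)$ with result $a$, any input $I'$ agreeing with $I$ before $\gamma$ makes the same leaf query and receives the same answer, so $v_i^{I'}=a$ is the unique element of $Range_\gamma(i)=\{a\}$; if $i$ has not been queried then $Range_\gamma(i)=[k]$ and the claim is trivial. For an internal node $i$ that is completely queried, the inductive hypothesis places $v_{2i}^{I'}\in Range_\gamma(2i)$ and $v_{2i+1}^{I'}\in Range_\gamma(2i+1)$; complete queriedness forces the pair $(v_{2i}^{I'},v_{2i+1}^{I'})$ to have been queried on $C_0(I,\gamma)$, so $f_i^{I'}$ and $f_i^I$ agree there and $v_i^{I'}\in Range_\gamma(i)$. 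If $i$ has a gap then $Range_\gamma(i)=[k]$ and the claim is immediate.

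For part (ii), I would exploit that $C$ is an antichain, so the subtrees rooted at the nodes of $C$ are pairwise disjoint and the input $I'$ can be built independently on each one. For each $c\in C$ with target $a_c$, I would recurse top-down: at a node $i$ with target $a\in Range_\gamma(i)$, if $i$ is a leaf set $v_i^{I'}=a$, where consistency with a prior leaf query is automatic since then $Range_\gamma(i)=\{v_i^I\}$ forces $a=v_i^I$; if $i$ is completely queried, the definition of $Range_\gamma(i)$ supplies $(x,y)\in Range_\gamma(2i)\times Range_\gamma(2i+1)$ with $f_i^I(x,y)=a$, and I recurse on the children with targets $x,y$ while keeping $f_i^{I'}=f_i^I$; if $i$ has a gap, I pick a gap pair $(x,y)\in Range_\gamma(2i)\times Range_\gamma(2i+1)$, set $f_i^{I'}(x,y)=a$ (which conflicts with no prior query, since this pair was unqueried), and recurse. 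Outside these subtrees, fill leaves and function tables arbitrarily subject to matching $I$ on every query made by $C_0(I,\gamma)$.

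Parts (iii) and (iv) then drop out. A black pebble $[i,a]$ surviving to $\gamma$ was placed either by a leaf query, in which case $v_i^I=a$ immediately, or at an inactive node whose range had just collapsed to $\{a\}$, in which case (i) yields $v_i^I=a$. A grey pebble $[2i,a]\wedge[2i+1,b]\imp[i,c]$ surviving to $\gamma$ was placed when $f_i(a,b)$ was queried on $C_0(I,\gamma)$ with result $c$, and since the pruning step inside the main loop would have removed it otherwise, $a\in Range_\gamma(2i)$ and $b\in Range_\gamma(2i+1)$. Applying (ii) to the antichain $\{2i,2i+1\}$ then produces an $I'$ agreeing with $I$ before $\gamma$ with $v_{2i}^{I'}=a$ and $v_{2i+1}^{I'}=b$, and agreement of the function tables on the queried pair forces $v_i^{I'}=f_i^I(a,b)=c$.

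The hard part is (ii): the recursion has to respect every leaf value and function value already pinned down by $C_0(I,\gamma)$ while simultaneously hitting the prescribed targets at $C$. The key checks, both of which I expect to follow directly from the definitions of $Range$ and of a gap, are that when $i$ is completely queried the required decomposition of $a$ as $f_i^I(x,y)$ with $(x,y)\in Range_\gamma(2i)\times Range_\gamma(2i+1)$ is guaranteed, and that when $i$ is not completely queried at least one gap pair lies in $Range_\gamma(2i)\times Range_\gamma(2i+1)$, so that we may freely set $f_i^{I'}(x,y)$ without contradicting any earlier query.
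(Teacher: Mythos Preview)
Your proposal is correct and follows essentially the same approach as the paper: bottom-up induction for (i), an antichain/disjoint-subtree recursion for (ii), and then (iii) and (iv) as immediate corollaries via the pruning rules. If anything, your write-up is slightly more explicit than the paper's in the gap case of (ii), where you spell out that one must also recurse on the children to make the gap pair $(x,y)$ their correct values; the paper leaves this implicit.
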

\begin{proof}[Proof of (i)]
We only need to consider the case where $Range^I_\gamma(i) \neq [k]$.
If $i$ is a leaf, then $Range^I_\gamma(i) = \{a\}$ where $a = v_i^I$ was queried by $C(I)$ before $\gamma$, and hence $v_i^{I'} = a \in Range^I_\gamma(i)$.
Suppose $i$ is an internal node. By induction, $v_{2i}^{I'} \in Range^I_\gamma(2i)$ and $v_{2i+1}^{I'} \in Range^I_\gamma(2i+1)$. 
If $Range^I_\gamma(i) \neq [k]$ then $i$ is completely queried at $\gamma$, and in particular $f_i(v_{2i}^{I'}, v_{2i+1}^{I'}) = v_i^{I'} \in Range^I_\gamma(i)$. 
\end{proof}

\begin{proof}[Proof of (ii)]
Let $i \in C$.
If $i$ is a leaf and was queried on $C_0(I,\gamma)$, any $I'$ agreeing with $I$ before $\gamma$ has $v^{I'}_i = v^I_i$, and $Range^I_\gamma(i) = \{v^I_i\}$.
If $i$ is a leaf and wasn't queried before $\gamma$, then we can choose $I'$ to have $v_i^{I'} = a_i$ and still agree with $I$ before $\gamma$.
Suppose $i$ is an internal node.
If there a gap $f_i(x,y)$ at $\gamma$ for $I$, we can choose $f^{I'}_i (x,y) = a_i$.
Otherwise, $i$ is completely queried at $\gamma$ and there must exist $x \in Range^I_\gamma(2i)$ and $y \in Range^I_\gamma(2i+1)$ such that $f_i (x,y) = a_i$. 
Setting $v_{2i}^{I'} = x$ and $v_{2i+1}^{I'} = y$ by induction results in $v_i^{I'} = a_i$.

Now we observe that such an $I'$ can be found for $a_i$ by choosing certain function and/or leaf values in the subtree rooted at $i$, none of which were queried on $C_0(I,\gamma)$.
This can be done independently for each $i \in C$ because they have disjoint subtrees.
\end{proof}

\begin{proof}[Proof of (iii)]
This follows immediately from (i), since $i$ has a black pebble only if $|Range(i)| = 1$. 
\end{proof}
\begin{proof}[Proof of (iv)]
This follows immediately from applying (ii) to $[2i, a]$ and $[2i + 1, b]$.
Note that $a \in Range^I_\gamma(2i)$ and $b \in Range^I_\gamma(2i+1)$, as otherwise this grey pebble would have been removed by Line 29.
\end{proof}

The following two propositions illustrate the significance of the active nodes.
\begin{proposition}
\label{prop:equiv_1}
Let $\gamma$ be a state on a computation path $C(I)$ on a deterministic, read-once branching program solving $TEP^h_2(k)$. 
Let $i_0$ be an active node at $\gamma$, with $a_1, a_2 \in Range_\gamma(i_0)$ such that $[i_0, a_1] \not \approx^I_\gamma [i_0, a_2]$.
Let $i_1, \dots, i_m = 1$ be the ancestors of $i_0$, and let $i'_l$ denote the sibling of $i_l$ for $l = 0, \dots, m-1$.
Then there exist $b_0 \in Range_\gamma(i'_0), \dots, b_{m-1} \in Range_\gamma(i'_{m-1})$ and an input $I_1$ with the following properties:
\begin{itemize}
\item $I_1$ agrees with $I$ before $\gamma$.
\item $a_1, b_0, \dots, b_{m-1}$ are the correct values of the corresponding nodes for $I_1$.
\item Let $I_2$ to be identical to $I_1$ except possibly on the subtree rooted at $i_0$, and having $v^{I_2}_{i_0} = a_2$. Then $v_1^{I_2} \neq v_1^{I_1}.$
(Note that this is a claim about the input and not the state $\gamma$; i.e., we are not claiming here that $I_2$ reaches $\gamma$.)
\end{itemize}
\end{proposition}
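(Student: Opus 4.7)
The plan is to prove this by induction on $m$, the distance from $i_0$ to the root. In the base case $m=0$ (so $i_0$ is the root), equivalence at the root reduces to equality, so $a_1 \neq a_2$. Proposition \ref{prop:range}(ii) applied to the singleton collection $\{1\}$ with target value $a_1$ produces the desired $I_1$, and then any $I_2$ obtained by modifying the whole tree while keeping $v_1^{I_2}=a_2$ trivially satisfies $v_1^{I_2}\neq v_1^{I_1}$.

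For the inductive step, I would unpack the definition of $[i_0, a_1] \not\approx^I_\gamma [i_0, a_2]$ to obtain a witness $b_0 \in Range_\gamma(i'_0)$ for which either (A) both $f_{i_1}(a_1, b_0)$ and $f_{i_1}(a_2, b_0)$ are queried on $C_0(I,\gamma)$ with results $c_1, c_2$ already satisfying $[i_1, c_1] \not\approx [i_1, c_2]$, or (B) at least one of these function values is unqueried. In case (A), the pair $c_1, c_2$ is handed to us. In case (B), $i_1$ cannot be completely queried (since $a_1, a_2 \in Range_\gamma(i_0)$ and $b_0 \in Range_\gamma(i'_0)$ would then force both function entries to be queried), so $Range_\gamma(i_1) = [k]$; since $i_1$ is active (because $i_0$ is), I would use this to select non-equivalent $c_1, c_2 \in [k]$ while requiring one of them to match any queried function value. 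Either way, I obtain $c_1, c_2 \in Range_\gamma(i_1)$ with $[i_1, c_1] \not\approx [i_1, c_2]$, and the entries of $f_{i_1}$ at $(a_1, b_0)$ and $(a_2, b_0)$ can be set (or already are) to $c_1$ and $c_2$ respectively.

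Applying the inductive hypothesis at $i_1$ with $c_1, c_2$ yields values $b_1, \ldots, b_{m-1}$ and an input $I_1''$ realizing the conclusion one level up. I would then construct the desired $I_1$ by taking $I_1''$ outside the subtree rooted at $i_1$, replacing the subtrees of $i_0$ and $i'_0$ with the contents of an input given by Proposition \ref{prop:range}(ii) applied to $\{i_0, i'_0\}$ with targets $a_1, b_0$, and finally fixing any unqueried entries of $f_{i_1}$ needed to enforce $f_{i_1}(a_1, b_0) = c_1$ and $f_{i_1}(a_2, b_0) = c_2$. Because both source inputs agree with $I$ before $\gamma$, so does $I_1$. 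Tracing upward, $v_{i_1}^{I_1} = c_1$; and the $I_2$ in the statement (which flips only the subtree of $i_0$) forces $v_{i_1}^{I_2} = f_{i_1}(a_2, b_0) = c_2$. Outside the subtree of $i_1$, the pair $(I_1, I_2)$ matches the pair $(I_1'', I_2'')$ from the inductive hypothesis, so the conclusion $v_1^{I_1''} \neq v_1^{I_2''}$ transfers directly to $v_1^{I_1} \neq v_1^{I_2}$.

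The main obstacle I expect is case (B), where $c_1, c_2$ must simultaneously lie in $Range_\gamma(i_1)$, be non-equivalent, and be compatible with any already-queried $f_{i_1}(a_\cdot, b_0)$ value. Activity of $i_1$ gives a non-equivalent pair in $Range_\gamma(i_1)$, and any forced value is equivalent to at most one element of such a pair, so a valid completion always exists. A secondary subtlety is verifying that stitching together the inductive input and the input from Proposition \ref{prop:range}(ii) still agrees with $I$ before $\gamma$: this holds because each piece individually does and because the subtrees of $i_0, i'_0$ are disjoint from each other and from the portion of the tree outside $i_1$, so the queried portion of $C_0(I,\gamma)$ is never overwritten inconsistently.
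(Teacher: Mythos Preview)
Your proposal is correct and follows essentially the same approach as the paper's proof: induction on the distance $m$ to the root, with a case split at the witness $b_0$ according to whether the relevant entries $f_{i_1}(a_1,b_0)$ and $f_{i_1}(a_2,b_0)$ have been queried, invoking activity of $i_1$ to produce a non-equivalent pair $c_1,c_2$ when an entry is missing, and then recursing at $i_1$. Your write-up is in fact more explicit than the paper's on two points the paper leaves implicit: that $Range_\gamma(i_1)=[k]$ in case~(B) (so the forced $c_1$ automatically lies in it), and the actual stitching of $I_1$ from the inductive $I_1''$ and the input supplied by Proposition~\ref{prop:range}(ii), together with the verification that the resulting $(I_1,I_2)$ pair matches $(I_1'',I_2'')$ outside the subtree of $i_1$.
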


\begin{figure}[h]
\begin{center}
\includegraphics[height=100pt]{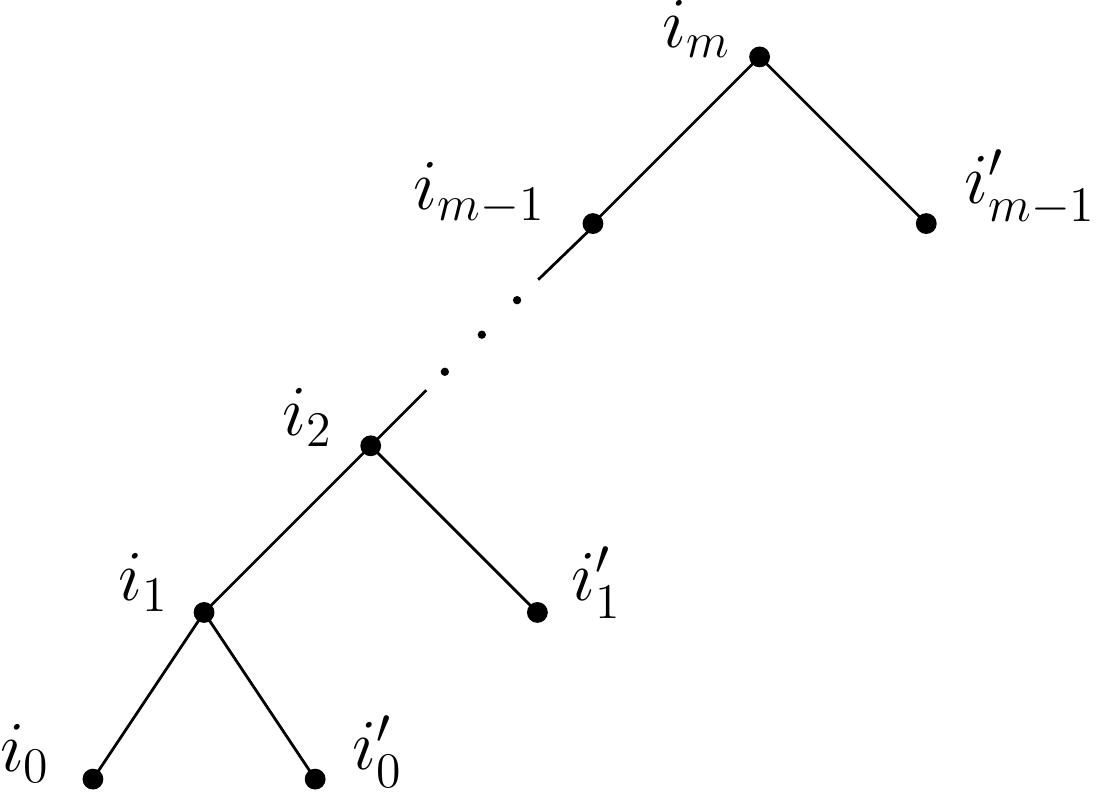}
\caption{Illustration of the notation in Proposition \ref{prop:equiv_1}.}
\end{center}
\end{figure}

\begin{proof}
If $i_0$ is the root, simply take $I_0$ to have $a_1$ be the correct root value as in Proposition \ref{prop:range}(ii).
Suppose $i_0$ is a non-root node. 
If $f_{i_1}(a_1, b)$ and $f_{i_1}(a_2, b)$ have been queried for all $b \in Range_\gamma(i'_0)$, then because $[i_0, a_1] \not \approx [i_0, a_2]$, there exists $b_0 \in Range_\gamma(i'_0)$ such that $[i_1, f_{i_1}(a_1, b_0)] \not \approx [i_1, f_{i_1}(a_2, b_0)]$.
By Proposition \ref{prop:range}(ii), there exists an input $I_1$ that agrees with $I$ before $\gamma$ and has $a_1$ and $b_0$ the correct values of $i_0$ and $i'_0$, respectively.
We can then use induction on $i_1$.
Now suppose there exists $b_0 \in Range_\gamma(i'_0)$ such that $f_{i_1}(a_1, b_0)$ has been queried with result value $c_1$ and $f_{i_1}(a_2, b_0)$ hasn't been queried.
Because $i_1$ is active, it has more than one equivalence class.
Let $c_2 \in Range_\gamma(i_1)$ with $[i_1, c_1] \not \approx [i_1, c_2]$.
Therefore we can choose $I_1$ to have $f_{i_1}(a_2, b_0) = c_2$, and use induction.
A similar argument holds for when neither $f_{i_1}(a_1, b_0)$ nor $f_{i_2}(a_2, b_0)$ have been queried.
\end{proof}

\begin{proposition}
\label{prop:equiv_2}
Let $\gamma$ be a state on a computation path $C(I)$ on a deterministic, read-once branching program solving $TEP^h_2(k)$.
Let $i$ be an active node at $\gamma$. 
Then there exist inputs $I_1$ and $I_2$ that agree with $I$ before $\gamma$, differ only on the subtree rooted at $i$, but have different correct root values.
Moreover, if $I$ has a gap $f_i(x,y)$, then $I_1$ and $I_2$ can be made to differ \emph{only} at the value of $f_i(x,y)$.
\end{proposition}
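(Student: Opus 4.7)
The plan is to derive this essentially as a corollary of Proposition \ref{prop:equiv_1}. Since $i$ is active at $\gamma$, the definition supplies witnesses $a_1, a_2 \in Range^I_\gamma(i)$ with $[i, a_1] \not\approx^I_\gamma [i, a_2]$. Applying Proposition \ref{prop:equiv_1} with $i_0 = i$ to these witnesses yields sibling values $b_0, \ldots, b_{m-1}$ and an input $I_1$ agreeing with $I$ before $\gamma$ with $v_i^{I_1} = a_1$, together with the stronger guarantee that \emph{any} input $I_2$ that agrees with $I_1$ outside the subtree rooted at $i$ and has $v_i^{I_2} = a_2$ automatically satisfies $v_1^{I_2} \neq v_1^{I_1}$. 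All remaining work is to produce appropriate $I_2$'s that also agree with $I$ before $\gamma$.

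For the main statement, I would construct $I_2$ by splicing: copy $I_1$ outside the subtree of $i$, and inside the subtree copy an input supplied by Proposition \ref{prop:range}(ii) applied to the singleton $\{i\}$ with target value $a_2 \in Range^I_\gamma(i)$. Both pieces agree with $I$ before $\gamma$ on their respective queries, so the spliced $I_2$ does too; by construction it differs from $I_1$ only on the subtree rooted at $i$ and has $v_i^{I_2} = a_2$, so Proposition \ref{prop:equiv_1} delivers distinct root values.

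For the moreover, the gap condition makes $i$ not completely queried, hence $Range^I_\gamma(i) = [k]$, so the activity witnesses $a_1, a_2$ can be chosen as any inequivalent pair in $[k]$. The plan is to strengthen the construction of $I_1$ by additionally requiring $v_{2i}^{I_1} = x$, $v_{2i+1}^{I_1} = y$, and $f_i^{I_1}(x, y) = a_1$. The collection $\{2i,\, 2i+1,\, i'_0, \ldots, i'_{m-1}\}$ is pairwise non-ancestor with range-valid targets $(x, y, b_0, \ldots, b_{m-1})$, so Proposition \ref{prop:range}(ii) hands over an input with these correct values; since $f_i(x, y)$ is a gap it may be freely assigned to $a_1$, forcing $v_i^{I_1} = a_1$. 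Finally let $I_2$ be $I_1$ with the sole modification $f_i^{I_2}(x, y) = a_2$; because the gap was not queried before $\gamma$, $I_2$ still agrees with $I$ before $\gamma$, and $v_i^{I_2} = a_2$, so Proposition \ref{prop:equiv_1} closes the argument.

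The one potentially delicate point is checking that the strengthened $I_1$ in the moreover is compatible with the implicit function-value assignments made inside the inductive proof of Proposition \ref{prop:equiv_1}. Those assignments all live at ancestors of $i$ (i.e., strictly outside the subtree of $i$), whereas my added constraints live strictly inside the subtree of $i$; the two sets of constraints therefore do not interact, and a single application of Proposition \ref{prop:range}(ii) can absorb both.
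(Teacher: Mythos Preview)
Your proposal is correct and follows essentially the same approach as the paper: invoke Proposition~\ref{prop:equiv_1} to obtain $I_1$, use Proposition~\ref{prop:range}(ii) to modify the subtree rooted at $i$ to produce $I_2$ with $v_i^{I_2}=a_2$, and for the moreover additionally force $v_{2i}^{I_1}=x$, $v_{2i+1}^{I_1}=y$ via Proposition~\ref{prop:range}(ii) so that flipping the single gap value $f_i(x,y)$ suffices. Your discussion of the ``delicate point'' (that the ancestor-side assignments from Proposition~\ref{prop:equiv_1} and the subtree-side assignments here live in disjoint parts of the input) is a welcome clarification that the paper's terse proof glosses over.
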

\begin{proof}
Choose $a_1,a_2 \in Range_\gamma(i)$ and $I_1$ as in Proposition \ref{prop:equiv_1} (identifying $i$ with $i_0$).
Note that we can apply Proposition \ref{prop:range}(ii) to make $a_2$ the correct value of node $i_0$ instead of $a_1$ by changing only the queries of the subtree rooted at $i_0$, and this yields the other input $I_2$.
If $I$ had a gap at $f_i(x,y)$, simply take $I_2$ to be $I_1$ except $f_i(x,y) = a_2$ instead of $a_1$, making $x$ and $y$ the correct values of the children of $i$ for $I_1$ and $I_2$ by Proposition \ref{prop:range}(ii).
\end{proof}

\begin{proposition}
\label{prop:parent_gap}
Let $\gamma$ be a state on a computation path $C(I)$ on a deterministic, read-once branching program solving $TEP^h_2(k)$.
Let $i$ be any non-root node.
If $|Range^I_\gamma(i)| > 1$ and the parent of $i$ is active and has a gap at $\gamma$, then $i$ is active.
\end{proposition}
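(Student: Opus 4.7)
My plan is to unpack the definition of activity and exhibit the required witnesses using the gap at the parent. To say $i$ is active, we need (a) that its parent is active (given) and (b) that there exist $a_1, a_2 \in Range^I_\gamma(i)$ with $[i, a_1] \not\approx^I_\gamma [i, a_2]$. So the entire content of the proposition is to establish (b) from the assumption that $|Range^I_\gamma(i)| > 1$ together with the existence of a gap at the parent $j$ of $i$.

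Let $i'$ denote the sibling of $i$. By hypothesis $j$ has a gap at $\gamma$, meaning there exist $x_0 \in Range^I_\gamma(i)$ and $y_0 \in Range^I_\gamma(i')$ (relabeling the two children of $j$ appropriately) such that $f_j(x_0, y_0)$ has not been queried on $C_0(I,\gamma)$. I propose to set $a_1 = x_0$ and pick $a_2$ to be any other element of $Range^I_\gamma(i)$, which exists because $|Range^I_\gamma(i)| > 1$. The definition of $[i, a_1] \approx^I_\gamma [i, a_2]$ requires that for \emph{every} $b \in Range^I_\gamma(i')$, both $f_j(a_1, b)$ and $f_j(a_2, b)$ have been queried on $C_0(I,\gamma)$. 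Taking $b = y_0$, we have $f_j(a_1, y_0) = f_j(x_0, y_0)$, which is precisely the unqueried gap, so the condition fails and $[i, a_1] \not\approx^I_\gamma [i, a_2]$. Combined with the assumption that the parent is active, this shows $i$ is active.

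There is essentially no obstacle here beyond correctly handling the notational asymmetry between being the ``left'' or ``right'' child of $j$; this is cosmetic since the gap definition is symmetric in the two children. Note that I am not using Proposition~\ref{prop:equiv_1} or \ref{prop:equiv_2} at all — the argument is purely a direct application of the recursive definition of equivalence, exploiting the fact that the definition demands both function values be queried, so any unqueried gap in the parent immediately witnesses an inequivalence among range elements of the child that participate in that gap.
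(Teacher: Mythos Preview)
Your proof is correct and matches the paper's approach exactly: the paper's one-line argument is that the gap $f_j(x,y)$ forces $[i,x]$ to be its own equivalence class, which is precisely what you spell out by taking $a_1 = x_0$, $b = y_0$, and observing the required queried condition fails. Your version is simply a more detailed unpacking of the same idea.
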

\begin{proof}
This follows from the definition of equivalence, since if the parent $j$ of $i$ has a gap $f_j(x,y)$, then (assuming $i$ is the left child) $[i,x]$ is its own equivalence class.
\end{proof}

\subsection{Meaningful Pebbles}

As previously observed, fixing the entire computation path up to a state $\gamma$ determines the memory encoded at $\gamma$.
The goal of the remainder of this subsection is to show that the state $\gamma$ \emph{alone} is enough to almost completely specify this information. First we need a few more easy properties, taking advantage of the read-once restriction.

\begin{proposition} \label{prop:same_gaps}
Let $\gamma$ be a state on a computation path $C(I)$ on a deterministic, read-once branching program solving $TEP^h_2(k)$.
Let $i$ be an active node with a gap at $\gamma$. 
Then for every input $J$ that reaches $\gamma$, $C_0(J,\gamma)$ does not query this gap.
\end{proposition}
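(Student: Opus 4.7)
The plan is to derive a contradiction by using Proposition \ref{prop:equiv_2} to manufacture two inputs whose computation paths after $\gamma$ must both query the gap, and then applying the read-once restriction to a concatenated complete path through $\gamma$. Suppose, toward a contradiction, that some input $J$ reaches $\gamma$ and $C_0(J,\gamma)$ queries $f_i(x,y)$. Because $i$ is active at $\gamma$ and $I$ has the gap $f_i(x,y)$, Proposition \ref{prop:equiv_2} supplies inputs $I_1, I_2$ that both agree with $I$ before $\gamma$, differ only in the value of $f_i(x,y)$, and satisfy $v_1^{I_1} \neq v_1^{I_2}$.

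I would next argue that $C_1(I_1,\gamma)$ itself must query $f_i(x,y)$. The shared prefix $C_0(I_1,\gamma) = C_0(I_2,\gamma) = C_0(I,\gamma)$ does not query $f_i(x,y)$, since this is a gap for $I$. But $C(I_1)$ and $C(I_2)$ terminate at different output states (labelled $v_1^{I_1}$ and $v_1^{I_2}$), so they must diverge on their suffixes $C_1(I_1,\gamma)$ and $C_1(I_2,\gamma)$; and because $I_1$ and $I_2$ agree on every input coordinate \emph{except} $f_i(x,y)$, the only query that can return different results for them is $f_i(x,y)$ itself. In particular, $C_1(I_1,\gamma)$ contains such a query.

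Finally, concatenate $C_0(J,\gamma)$ with $C_1(I_1,\gamma)$ at the shared state $\gamma$ to obtain a complete path $\pi$ in the BP from the start state to an output state. By our contradiction hypothesis $\pi$ queries $f_i(x,y)$ on its first segment, and by the previous paragraph $\pi$ also queries $f_i(x,y)$ on its second segment, contradicting the read-once restriction (which, in the deterministic setting, applies to every complete path in the BP). The main obstacle is the middle step, namely certifying that $C_1(I_1,\gamma)$ really does query $f_i(x,y)$; this relies on the BP's correctness forcing $C(I_1)$ and $C(I_2)$ to diverge, combined with the observation that $f_i(x,y)$ is the unique input coordinate on which they disagree, so that the diverging query can only be $f_i(x,y)$.
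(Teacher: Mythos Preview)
Your proof is correct and follows essentially the same approach as the paper: both invoke Proposition~\ref{prop:equiv_2} to obtain $I_1,I_2$ differing only at the gap and reaching different outputs, deduce that the gap must be queried on $C_1(I_1,\gamma)$, and then use the (syntactic) read-once property on the concatenated path through $\gamma$. Your version simply spells out in more detail the divergence argument and the concatenation step that the paper compresses into ``the claim then follows by the read-once property.''
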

\begin{proof}
By Proposition \ref{prop:equiv_2} there are two inputs $I_1$ and $I_2$ that agree with $I$ before $\gamma$, have different correct root values, and differ only at the value of the gap query.
Then $I_1$ and $I_2$ must query this value after $\gamma$ to be able to reach different output states.
The claim then follows by the read-once property.
\end{proof}

\begin{proposition}
\label{prop:same_active}
Let $\gamma$ be a state on two computation paths $C(I)$ and $C(J)$ on a deterministic, read-once branching program solving $TEP^h_2(k)$.
Each node $i$ is active at $\gamma$ for $I$ if and only if it is active for $J$.
\end{proposition}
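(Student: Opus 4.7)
The plan is to induct on the depth of $i$ from the root. At each step, I would combine a semantic characterization of activity with a read-once splicing argument of the type used earlier in the paper: the key is that if a state $\gamma$ lies on two complete computation paths, then a ``prefix of one, suffix of the other'' splice is itself a complete path, because the read-once property forbids any query from appearing on both halves.

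Specifically, I would first establish the characterization that $i$ is active at $\gamma$ for $I$ if and only if there exist inputs $I_1, I_2$ that agree with $I$ before $\gamma$, differ only on the subtree rooted at $i$, and satisfy $v_1^{I_1} \neq v_1^{I_2}$. The $\Rightarrow$ direction is exactly Proposition \ref{prop:equiv_2}. For $\Leftarrow$, set $a_m := v_i^{I_m}$ and observe $a_1 \neq a_2$ (otherwise $I_1$ and $I_2$ evaluate identically, forcing $v_1^{I_1} = v_1^{I_2}$); then assuming $[i, a_1] \approx^I_\gamma [i, a_2]$, the recursive equivalence definition applied with the shared sibling value $v_{i'}^{I_1} = v_{i'}^{I_2}$ propagates an equivalence up to the root, where it collapses to the equality $v_1^{I_1} = v_1^{I_2}$, a contradiction.

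With the characterization in hand, suppose $i$ is active for $I$ and choose witnesses $I_1, I_2$ via Proposition \ref{prop:equiv_2}. Define $J_m$ to follow $C_0(J, \gamma)$ up to $\gamma$ and $C_1(I_m, \gamma)$ afterward: since both $C(J)$ and $C(I_m)$ pass through $\gamma$, this is a well-defined complete path, and read-once forces its two halves to have disjoint query sets, so $J_m$ may be defined to agree with $J$ on the prefix and with $I_m$ on the suffix. Then set the remaining free queries of $J_1, J_2$ so that the two inputs agree on every query outside the subtree at $i$. By BP correctness, $v_1^{J_m} = v_1^{I_m}$, so $J_1$ and $J_2$ differ only on the subtree at $i$ with $v_1^{J_1} \neq v_1^{J_2}$. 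The inductive hypothesis applied to $i$'s parent (which is active for $I$, hence for $J$) combined with the characterization applied to $J$ then yields $i$ active for $J$, and the reverse implication is symmetric.

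The main obstacle is the careful construction of $J_1, J_2$ to simultaneously satisfy every constraint: agreement with $J$ before $\gamma$, with $I_m$ after $\gamma$, and with each other outside the subtree at $i$. Every potential conflict is ruled out by read-once. Queries on both $C_1(I_1,\gamma)$ and $C_1(I_2,\gamma)$ outside the subtree automatically agree because $I_1, I_2$ do; queries on only one $C_1(I_m,\gamma)$ can be matched by setting the other's free value; queries on neither path are unconstrained; and the $C_0(J, \gamma)$ segment never overlaps with any $C_1(I_m, \gamma)$ by read-once. This is precisely where the read-once hypothesis enters in an essential way.
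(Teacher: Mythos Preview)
Your proposal is correct and follows essentially the same approach as the paper: obtain witnesses $I_1,I_2$ from Proposition~\ref{prop:equiv_2}, splice their suffixes onto $C_0(J,\gamma)$ via composability, and use correctness of the branching program to derive the conclusion. The paper phrases this as a direct contradiction (assuming $i$ inactive for $J$ and asserting $v_1^{K_1}=v_1^{K_2}$ ``as they differ only on an inactive subtree''), whereas you make explicit the underlying semantic characterization of activity and the careful construction ensuring $J_1,J_2$ differ only on the subtree at $i$; these are details the paper leaves implicit but which your write-up handles cleanly.
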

\begin{proof}
Suppose there is a node $i$ which is active for $I$ and not $J$. 
By Proposition \ref{prop:equiv_2}, there must exist $I_1$ and $I_2$ which differ only at the subtree rooted at $i$, agree with $I$ before $\gamma$, and have different root values.
Using composability, we can define two inputs $K_1$ and $K_2$ that agree with $J$ before $\gamma$, and with $I_1$ and $I_2$ after $\gamma$, respectively.
Then $C(K_1)$ and $C(K_2)$ have different output states even though $v_1^{K_1} = v_1^{K_2}$, as they differ only on an inactive subtree.
\end{proof}

\begin{proposition}
\label{prop:same_black}
Let $\gamma$ be a state on two computation paths $C(I)$ and $C(J)$ on a deterministic, read-once branching program solving $TEP^h_2(k)$.
If $I$ has a black pebble $[i, a]$ at $\gamma$, then so does $J$.
\end{proposition}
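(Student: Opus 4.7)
The plan is a case split on whether $i$ is a leaf or an internal node. For $i$ a leaf, the pebble $[i,a]$ on $i$ at $\gamma$ for $I$ forces two conditions: $I$ queried leaf $i$ on $C_0(I,\gamma)$ with result $a$ (by line 14 of the algorithm), and the parent $j$ of $i$ is active and not completely queried for $I$ at $\gamma$ (otherwise line 26 would have removed the pebble, which cannot be re-placed because each leaf is queried at most once). By Proposition \ref{prop:same_active}, $j$ is active and $i$ is inactive for $J$ at $\gamma$ as well. I will first establish that $j$ is not completely queried for $J$, then deduce that $J$ queried leaf $i$ before $\gamma$ with value $a$; together these cause the algorithm to place $[i,a]$ on $J$'s pebbling and never remove it.

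For the first step, $I$'s gap at $j$ is $f_j(a, y)$ for some $y \in Range^I_\gamma(i')$ unqueried by $I$ (where $i'$ is the sibling of $i$), since $Range^I_\gamma(i) = \{a\}$ and $j$ is not completely queried for $I$. By Proposition \ref{prop:same_gaps}, $J$ also does not query $f_j(a, y)$, and a short composability/Proposition \ref{prop:range}(ii) argument gives $y \in Range^J_\gamma(i')$, making $f_j(a,y)$ a gap for $J$. For the second step, apply Proposition \ref{prop:parent_gap} to $J$: if $J$ did not query $i$ then $|Range^J_\gamma(i)| > 1$, and combined with $j$'s activity and gap for $J$, Proposition \ref{prop:parent_gap} would force $i$ to be active for $J$, contradicting Proposition \ref{prop:same_active}. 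To pin down that the queried value is $a$ rather than some $a' \neq a$: Proposition \ref{prop:equiv_2} yields inputs $I_1, I_2$ agreeing with $I$ before $\gamma$ and differing only at $f_j(a,y)$ with distinct root values; composing $C_0(J,\gamma)$ with $C_1(I_1,\gamma)$ and $C_1(I_2,\gamma)$ produces inputs $K_1, K_2$ reaching distinct outputs, whose root values would have to coincide (since $v_i^K = a' \neq a$ makes $f_j(a,y)$ irrelevant to the root computation), a contradiction. Retention of $[i,a]$ for $J$ through $\gamma$ then follows from monotonicity of activity and completely-queried status along $J$'s path.

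For $i$ internal, the pebble $[i,a]$ was placed by line 22 at the first state on $I$'s path where $i$ became inactive with $Range^I(i) = \{a\}$. I induct on the height of $i$: the inductive hypothesis applied to $i$'s children, combined with Propositions \ref{prop:same_active} and \ref{prop:range}, shows $Range^J_\gamma(i) = \{a\}$. Since $i$ is inactive for $J$ at $\gamma$ by Proposition \ref{prop:same_active}, the algorithm places $[i,a]$ on $J$'s pebbling at the first qualifying state $\leq \gamma$, and retention reduces to the leaf-case argument applied to $i$'s parent (trivial if $i$ is the root, whose pebble is never removed).

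The main obstacle will be the step $y \in Range^J_\gamma(i')$ in the leaf case: because $Range$ at non-pebbled nodes can genuinely differ between $I$ and $J$, transferring the sibling value $y$ to the $J$-side requires a careful composability construction to produce an input agreeing with $J$ before $\gamma$ with sibling value $y$, using Proposition \ref{prop:range}(ii) to realize this input and Proposition \ref{prop:same_gaps} to ensure the composed path exists without conflicts with any of $I$'s remaining gaps.
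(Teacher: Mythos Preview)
Your induction on height for internal $i$ has a genuine gap. When $i$ carries a black pebble at $\gamma$ for $I$, node $i$ is already inactive, and line~26 of the algorithm therefore removes any black pebbles from the children of $i$ in that same pass. So at the configuration associated with $\gamma$, the children of $i$ do \emph{not} have black pebbles for $I$, and the inductive hypothesis cannot be invoked on them at $\gamma$. You cannot retreat to an earlier state on $C(I)$ where the children were still pebbled, because $C_0(I,\gamma)$ and $C_0(J,\gamma)$ need not share any state other than $\gamma$ itself. Even if you could somehow establish that the children's $Range^J_\gamma$ are singletons matching $I$'s, you would still need $i$ to be completely queried for $J$ to conclude $Range^J_\gamma(i)=\{a\}$, and Proposition~\ref{prop:same_gaps} does not help here because $i$ is inactive.

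The paper avoids this entirely by splitting on root versus non-root rather than leaf versus internal, and for non-root $i$ it never attempts to transfer $Range$ information about $i$, $i'$, or the children to the $J$ side. Instead it builds $K_1,K_2$ by composing $C_0(J,\gamma)$ with the two $C_1(I_l,\gamma)$ from Proposition~\ref{prop:equiv_2}, notes they reach distinct outputs, and then shows that every way $J$ could fail to have the pebble $[i,a]$ forces $f_j(a,b)$ to be a non-thrifty value for $K_1,K_2$ (either because $v_i^{K_l}=a'\neq a$, or because $j$ is completely queried for $J$ while $(a,b)$ lies outside $Range^J_\gamma(i)\times Range^J_\gamma(i')$), collapsing their root values. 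This argument works uniformly for leaves and internal non-root nodes, and in particular makes your flagged obstacle $y\in Range^J_\gamma(i')$ unnecessary: the paper never needs $f_j(a,b)$ to be a gap \emph{for $J$}, only for $I$. Your leaf case is essentially a more circuitous route to the same $K_1,K_2$ contradiction; drop the detour through $j$'s gap on the $J$ side and the argument becomes the paper's, applicable to all non-root $i$ at once.
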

\begin{proof}
Node $i$ is inactive for $I$, and hence by Proposition \ref{prop:same_active}, it must be inactive for $J$. 
If $i$ is the root, then $J$ must also have a black pebble on the root, because the root can only be inactive at $\gamma$ if $|Range_\gamma(1)| = 1$, and black pebbles are never removed from the root.
Let $[1, a']$ be the label of the black pebble at $\gamma$ for $J$.
Let $K$ follow $C_0(I, \gamma)$ and then $C_1(J, \gamma)$.
Then $v_1^K = a$ because it agrees with $I$ before $\gamma$, and $C(K)$ reaches output state $a'$, hence $a = a'$.

Now suppose $i$ is a non-root node, and let $i'$ and $j$ be the sibling and parent of $i$, respectively.
The fact that $i$ is black pebbled for $I$ means that $j$ is active and $I$ has some gap $f_j(a, b)$ at $\gamma$ (otherwise the black pebble would have been removed).
Then by Proposition \ref{prop:equiv_2} there exist two inputs $I_1, I_2$ that agree with $I$ before $\gamma$, differ \emph{only} at the value of $f_j(a, b)$, and that have different correct root values.
Set $K_1$ to agree with $J$ before $\gamma$ and $I_1$ after $\gamma$, and similarly define $K_2$ using $I_2$.
Then $C(K_1)$ and $C(K_2)$ end at different output states.

Since $i$ must be inactive for $J$, there are only two ways $J$ could fail to satisfy the claim: either $J$ has a black pebble $[i, a']$ where $a \neq a'$, or $J$ has no black pebble on $i$.
In the first case, $f_j(a, b)$ is a non-thrifty query for $K_1$ and $K_2$, so these inputs have the same correct root values, a contradiction.
In the second case, $i$ is inactive and its parent $j$ is active (because it is active for $I$). If $|Range^J_\gamma(i)| > 1$, then by Proposition \ref{prop:parent_gap}, $j$ must be completely queried at $\gamma$ for $J$.
By Proposition \ref{prop:same_gaps}, $J$ cannot have queried $f_j(a,b)$ before $\gamma$, so $f_j(a,b)$ is a non-thrifty query for $J$ (and hence $K_1$ and $K_2$), leading to the same contradiction.
Finally, if $|Range^J_\gamma(i)| = 1$ but $i$ \emph{isn't} black pebbled for $J$ at $\gamma$, there must be a prior state on $C_0(J, \gamma)$ where a black pebble was removed from $i$.
At this state, $j$ was active (because it is still active at $\gamma$), and so it must have been completely queried to cause the black pebble to be removed. This once again leads to the same contradiction.
\end{proof}

The remaining two propositions deal with how $\gamma$ encodes the grey pebbles.
\begin{proposition}
\label{prop:simul_satisfy}
Let $\gamma$ be a state on two computation paths $C(I)$ and $C(J)$ on a deterministic, read-once branching program solving $TEP^h_2(k)$.
Let $C$ be a collection of active nodes such that none of them are ancestors of each other.
For each $i \in C$, let $a_i \in [k]$.
Suppose that every active node has a gap for both $I$ and $J$ at $\gamma$.
Then there exist two inputs $I'$ and $J'$ agreeing with $I$ and $J$ before $\gamma$, respectively, and agreeing with each other after $\gamma$, such that for all $i \in C$, $v_i^{I'} = v_i^{J'} = a_i$.
\end{proposition}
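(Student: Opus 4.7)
The plan is to build $J'$ first as a direct application of Proposition~\ref{prop:range}(ii), then construct $I'$ by gluing $I$'s pre-$\gamma$ computation to $J'$'s post-$\gamma$ path, and finally adjust $I'$'s free variables so that $v_i^{I'} = a_i$. First I would invoke Proposition~\ref{prop:range}(ii) with input $J$, antichain $C$, and target values $\{a_i\}$ to obtain $\tilde{J}$ agreeing with $J$ before $\gamma$ and satisfying $v_i^{\tilde{J}} = a_i$ for every $i \in C$; set $J' := \tilde{J}$ and let $P := C_1(\tilde{J}, \gamma)$.

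Next I would observe that the hybrid $C_0(I, \gamma) \cdot P$ is a complete path in the branching-program graph, and since semantic and syntactic read-once coincide for deterministic programs, this hybrid path is read-once; hence the queries of $C_0(I, \gamma)$ and of $P$ are disjoint (the same mechanism that powers Proposition~\ref{prop:comp}). Define $I'$ to take $I$'s values on queries in $C_0(I, \gamma)$, $\tilde{J}$'s values on queries in $P$, and free values elsewhere. Then $I'$ agrees with $I$ before $\gamma$, and $C_1(I', \gamma) = P = C_1(J', \gamma)$, so $I'$ and $J'$ agree after $\gamma$.

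The final step is to pin down the free variables in each subtree rooted at some $i \in C$ so that $v_i^{I'} = a_i$. Since $C$ is an antichain, the subtrees are pairwise disjoint and can be handled independently. Fix $i \in C$; by hypothesis there is a gap $f_i(x, y)$ (or an unqueried leaf value) for $i$ in $I$ at $\gamma$, and by Proposition~\ref{prop:same_gaps} this gap is also unqueried on $C_0(J, \gamma)$, so it is free for $\tilde{J}$'s construction as well. If the gap is not queried on $P$, I would set $f_i^{I'}(x, y) = a_i$ and recurse into the children of $i$ to force $v_{2i}^{I'} = x$ and $v_{2i+1}^{I'} = y$ via the construction of Proposition~\ref{prop:range}(ii); the gap-for-both hypothesis at each active descendant keeps the recursion feasible. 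If the gap is queried on $P$, then $I'$ inherits $\tilde{J}$'s value there, and I would arrange $\tilde{J}$'s construction in Step 1 to route through the same gap (possible because the gap is also free for $J$), ensuring that $\tilde{J}$'s value at $f_i(x, y)$ equals $a_i$.

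The hard part will be verifying that the subtree recursion never hits a conflict between the $C_0(I, \gamma)$- and $P$-constraints. The key leverage is that $P$'s values are $\tilde{J}$'s, which already realize $v_i = a_i$, so the $P$-constraints in the subtree are never hostile to the target; only the $C_0(I, \gamma)$-forced values could be problematic, and the gap-for-$I$ hypothesis provides exactly the slack needed to absorb them at every level of the subtree recursion.
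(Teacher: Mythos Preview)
Your plan has the right instincts but a genuine gap in the coordination step. The sequential construction---build $\tilde J$ first via Proposition~\ref{prop:range}(ii), then glue to get $I'$, then fix free variables of $I'$---runs into trouble because the variables you need in the subtree of $i$ for $I'$ may already be pinned by $P$, and those pinned values need not cooperate with the target $a_i$ \emph{once combined with $C_0(I,\gamma)$'s answers rather than $C_0(J,\gamma)$'s}. Your last-paragraph heuristic that ``$P$'s values already realize $v_i=a_i$'' only says $v_i^{\tilde J}=a_i$; it does not say the same subtree values, spliced together with $I$'s pre-$\gamma$ answers, still evaluate to $a_i$. So Case~A and Case~B are not really separable: the recursion below $i$ will keep hitting $P$-pinned coordinates, and each time you must ``go back and arrange $\tilde J$,'' which collapses the sequential scheme into a simultaneous one.

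The fix you gesture at---route $\tilde J$ through $I$'s gap $f_i(x,y)$---is exactly what the paper does, but it needs a fact you never establish: that $x\in Range^J_\gamma(2i)$ and $y\in Range^J_\gamma(2i+1)$, so that $\tilde J$ can actually achieve $v_{2i}^{\tilde J}=x$ and $v_{2i+1}^{\tilde J}=y$. Proposition~\ref{prop:same_gaps} only tells you $f_i(x,y)$ is \emph{unqueried} on $C_0(J,\gamma)$, which is strictly weaker than being a gap for $J$. When the child $2i$ is active this is free (its range is $[k]$ for both inputs), but when $2i$ is inactive you need Propositions~\ref{prop:parent_gap} and~\ref{prop:same_black}: the former forces $|Range_\gamma(2i)|=1$ (since the parent $i$ is active with a gap), hence $2i$ carries a black pebble; the latter forces $I$ and $J$ to carry the \emph{same} black pebble, whence $Range^I_\gamma(2i)=Range^J_\gamma(2i)=\{x\}$. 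Without invoking~\ref{prop:same_black}, nothing in your outline rules out $Range^J_\gamma(2i)=\{x'\}$ with $x'\neq x$, at which point the recursion stalls. The paper's proof is precisely this simultaneous induction, driven by $I$'s gap at each level and using \ref{prop:parent_gap}~+~\ref{prop:same_black} to handle inactive children.
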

\begin{proof}
As in Proposition \ref{prop:range}(ii), it suffices to show this for one node.
Let $i \in C$.
If $i$ is a leaf, then it has not been queried on $C_0(I,\gamma)$ or $C_0(J,\gamma)$ (otherwise it would be inactive). 
Therefore we can simply choose $C_1(I',\gamma) = c_1(J',\gamma)$ to have that leaf value be $a_i$. 

Now suppose $i$ is an internal node, and suppose $I$ has the gap $f_i(x,y)$.
By Proposition \ref{prop:same_gaps}, $J$ did not make this query before $\gamma$.
We claim that $x \in Range^J_\gamma(2i)$ and $y \in Range^J_\gamma(2i + 1)$.
This is clear if the children of $i$ are both active, and hence have a gap: in this case, $Range^J_\gamma(2i) = Range^J_\gamma(2i + 1) = [k]$.
On the other hand, by Proposition \ref{prop:parent_gap}, because $i$ is not completely queried for $J$, the only way its child could be inactive is if its correct value has been determined - i.e., if it has a black pebble.
But then by Proposition \ref{prop:same_black}, $I$ has the same black pebble, and hence $Range^I_\gamma(i) = Range^J_\gamma(i)$.
This completes the proof of the claim.
By induction, there exist inputs $I'$ and $J'$ that both have $x,y$ as the correct values of the children of $i$, and so setting $f_i^{I'}(x,y) = f_i^{J'}(x,y) = a_i$ gets the desired result.
\end{proof}

\begin{proposition}
\label{prop:same_grey}
Let $\gamma$ be a state on two computation paths $C(I)$ and $C(J)$ on a deterministic, read-once branching program solving $TEP^h_2(k)$.
Suppose that every active node has a gap for both $I$ and $J$ at $\gamma$.
Suppose $I$ has a grey pebble $[2i, a] \wedge [2i+1, b] \imp [i, c]$ at $\gamma$.
Then $J$ has a grey pebble $[2i, a] \wedge [2i + 1, b] \imp [i, d]$ and $[i, c] \approx [i, d]$ with respect to both $I$ and $J$ at $\gamma$.
\end{proposition}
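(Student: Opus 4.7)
I would verify three successive properties of $J$'s configuration at $\gamma$: (a) the antecedents $a, b$ lie in $Range^J_\gamma(2i)$ and $Range^J_\gamma(2i+1)$; (b) $J$ actually queried $f_i(a,b)$ on $C_0(J,\gamma)$ with some result $d$, producing a grey pebble that survives to $\gamma$; and (c) $[i,c] \approx [i,d]$ relative to both $I$ and $J$. Throughout, I use Proposition~\ref{prop:same_active} to lift the activity of $i$ (which holds for $I$ since $I$'s grey pebble has not been removed by Line 20) to $J$.

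For (a), treat the left child $2i$ (the right child is symmetric). If $2i$ is active for $I$, then also for $J$, and the hypothesis that active nodes have gaps forces $Range^I_\gamma(2i) = [k] = Range^J_\gamma(2i)$. If $2i$ is inactive, the contrapositive of Proposition~\ref{prop:parent_gap} (applied using $i$ active with a gap) pins $|Range^I_\gamma(2i)| = 1$; since the grey pebble survives Lines 28--30, $Range^I_\gamma(2i) = \{a\}$. Tracing the algorithm, $2i$ must carry a black pebble $[2i,a]$ at $\gamma$, placed the first time it was simultaneously inactive with singleton range and never removed because $i$ stays active and not-completely-queried throughout $C_0(I,\gamma)$. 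Proposition~\ref{prop:same_black} transfers this pebble to $J$. For (b), suppose $J$ never queries $f_i(a,b)$ before $\gamma$; by (a), $(a,b)$ is then a gap for $J$ at $\gamma$, and Proposition~\ref{prop:same_gaps} applied to $J$ forbids $I$ from having queried $f_i(a,b)$, a contradiction. Hence $J$ queried $f_i(a,b)$ with some $d$, and the resulting grey pebble survives by (a) and the active status of $i$.

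For (c), assume for contradiction that $[i,c] \not\approx^I_\gamma [i,d]$. Apply Proposition~\ref{prop:equiv_1} with $(a_1,a_2) = (c,d)$ to get ancestor-chain witness values $b_l \in Range^I_\gamma(i'_l)$, and use Proposition~\ref{prop:range}(ii) within the subtree of $i$ to also arrange $v_{2i} = a, v_{2i+1} = b$ for the constructed witness input. Each inactive sibling $i'_l$ has $Range^I_\gamma(i'_l) = \{b_l\}$ pinned by a black pebble that transfers to $J$ as in (a), while each active $i'_l$ has $Range^J_\gamma(i'_l) = [k]$; thus all target values lie in the appropriate ranges for $J$ as well. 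Invoke Proposition~\ref{prop:simul_satisfy} on the active members of $\{2i, 2i+1, i'_0, \ldots, i'_{m-1}\}$ with these target values (inactive ones already forced by their black pebbles) to obtain inputs $I',J'$ agreeing with $I, J$ before $\gamma$, agreeing with each other after $\gamma$, and realizing all the target values. Then $v^{I'}_i = c$ and $v^{J'}_i = d$ via the respective queries $f_i(a,b) = c, d$, and the shared continuation after $\gamma$ forces $v^{I'}_1 = v^{J'}_1$---but interpreting $J'$ as a modification of $I'$ in the subtree of $i$ that switches $v_i$ between $c$ and $d$ while preserving the witness siblings contradicts the root-change conclusion of Proposition~\ref{prop:equiv_1}. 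The base case $i = 1$ collapses to $c = d$ via the same $I',J'$, and the claim $[i,c] \approx^J_\gamma [i,d]$ follows by swapping the roles of $I$ and $J$.

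The main obstacle is the coupling step in (c): Proposition~\ref{prop:equiv_1} compares an input $I_1$ to a modification agreeing outside the subtree of $i$, whereas $I',J'$ produced by Proposition~\ref{prop:simul_satisfy} are only guaranteed to coincide on the witness sibling values and after $\gamma$. I would handle this by re-running the constructive proof of Proposition~\ref{prop:equiv_1} with $I'$ as the starting witness input---its construction only fixes values along the ancestor chain plus certain function values that are either already determined by $I$ before $\gamma$ (and hence by $I'$) or freely choosable---and then aligning the unqueried portions of $I'$ and $J'$ so they match outside the subtree of $i$ without disturbing the Proposition~\ref{prop:simul_satisfy} guarantees. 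The parallel case analysis for inactive siblings and the base case at the root fit the same template.
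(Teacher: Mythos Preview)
Your proposal is correct and follows essentially the same route as the paper: establish $a,b$ lie in $Range^J_\gamma$ via the active/inactive split on the children together with Proposition~\ref{prop:same_black}, force $J$ to have queried $f_i(a,b)$ via Proposition~\ref{prop:same_gaps}, and then derive the equivalence by invoking Proposition~\ref{prop:simul_satisfy} on the children of $i$ together with the siblings of its ancestors, using Proposition~\ref{prop:equiv_1}/\ref{prop:equiv_2} to select those sibling witnesses. The coupling obstacle you flag in part (c)---that $I'$ and $J'$ need to agree on the ancestor function values, not just the sibling correct values---is genuine and is glossed over in the paper's own proof; your sketched resolution is the right direction, and in fact the same\_gaps/same\_black machinery you already set up forces the relevant ancestor queries to be present as grey pebbles for both $I$ and $J$, which is what ultimately makes the alignment go through.
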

\begin{proof}
The first thing to show is that $J$ must have a grey pebble $[2i, a] \wedge [2i + 1, b] \imp [i, d]$ for some $d$.
Since $C_0(I,\gamma)$ has queried $f_i(a, b)$, $J$ cannot have this function value as a gap at $\gamma$.
Since $i$ is active for $I$ it must be active for $J$; therefore the only way $J$ could \emph{not} have such a grey pebble is if (without loss of generality) $a \notin Range^J_\gamma(2i)$.
Since all active nodes have a gap and hence have a $Range$ of $[k]$, this means that node $2i$ must be inactive, and have black pebbles (since their parent $i$ is active and not completely queried).
But by Proposition \ref{prop:same_black}, $I$ must have the same black pebbles, and hence $Range^I_\gamma(2i) = Range^J_\gamma(2i)$.

So $J$ has a grey pebble $[2i, a] \wedge [2i + 1, b] \imp [i, d]$.
First suppose $i$ is the root and $c \neq d$.
By Proposition~\ref{prop:simul_satisfy}, we can find inputs $I'$ and $J'$ that agree with $I$ and $J$ before $\gamma$, respectively, and agree with each other after $\gamma$, and have $v^{I'}_1 = c$ and $v^{J'}_1 = d$, a contradiction.
For non-root $i$, we need to use the fact that $[i, c]$ and $[i, d]$ are not equivalent for $I$.
In this case, first pick $I'$ and $J'$ as in Proposition \ref{prop:simul_satisfy} so that $v_{2i}^{I'} = v_{2i}^{J'} = a$ and $v_{2i+1}^{I'} = v_{2i+1}^{J'} = b$.
Since $I'$ agrees with $I$ before $\gamma$, $[i, c] \not \approx^{I'}_\gamma [i, d]$.
Then by Proposition \ref{prop:equiv_2}, we can also choose for $I'$ correct values for the siblings of the ancestors of $i$ so that different correct values $c$ and $d$ for node $i$ result in different correct root values.
By definition, the siblings of the ancestors of $i$ together with the children of $i$ also satisfy the hypothesis of Proposition \ref{prop:simul_satisfy}. 
So if we choose $J'$ to also have these correct values for the siblings of the ancestors of $i$, $I'$ and $J'$ will have different correct root values, yet agree after $\gamma$.
\end{proof}

Putting the black pebble and grey pebble results together yields the following result.
\begin{lemma}
\label{lem:ident_config}
Let $\gamma$ be a state on two computation paths $C(I)$ and $C(J)$ on a deterministic, read-once branching program solving $TEP^h_2(k)$.
If every active node at $\gamma$ has a gap for both $I$ and $J$, then the pebble configurations associated with $\gamma$ for $I$ and $J$ are identical up to equivalence for grey pebbles.
\end{lemma}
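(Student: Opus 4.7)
The plan is to observe that Lemma \ref{lem:ident_config} is essentially a packaging statement assembling Propositions \ref{prop:same_black} and \ref{prop:same_grey}, one handling each pebble type. First I would dispatch the black pebbles using Proposition \ref{prop:same_black}. That proposition is stated without any gap hypothesis and holds for any two computation paths through $\gamma$; applying it once with the roles $(I, J)$ and once with $(J, I)$ yields both inclusions, so the sets of black pebbles associated with $\gamma$ for $I$ and for $J$ coincide exactly, label for label (in particular, on literal identity, not merely up to equivalence).

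Second, I would apply Proposition \ref{prop:same_grey} to the grey pebbles. Its hypothesis, that every active node at $\gamma$ has a gap for both $I$ and $J$, is exactly the hypothesis of Lemma \ref{lem:ident_config}, so it is available. In one direction, the proposition supplies, for each grey pebble $[2i, a] \wedge [2i+1, b] \imp [i, c]$ of $I$ at $\gamma$, a corresponding grey pebble $[2i, a] \wedge [2i+1, b] \imp [i, d]$ of $J$ at $\gamma$ with $[i, c] \approx [i, d]$ under both $\approx^I_\gamma$ and $\approx^J_\gamma$. Since the gap hypothesis is symmetric in $I$ and $J$, I can apply the proposition again with the roles swapped to obtain the reverse direction. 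Together, these show that the grey pebbles for $I$ and $J$ at $\gamma$ correspond bijectively by node and antecedents, and differ only in the consequent variable up to equivalence, which is precisely the notion of ``identical up to equivalence for grey pebbles'' in the lemma.

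The main obstacles have already been handled in the preceding propositions; the content of this lemma is just the bookkeeping that combines them. The only minor subtlety is verifying that one may speak of ``equivalence'' coherently across the two inputs: this is not a concern because Proposition \ref{prop:same_grey} explicitly asserts that $[i, c] \approx [i, d]$ holds under \emph{both} $\approx^I_\gamma$ and $\approx^J_\gamma$, and because the matching of $Range$ sets and histories on the relevant nodes is already guaranteed by Propositions \ref{prop:same_active} and \ref{prop:same_black}. Thus no new argument is required beyond invoking the two preceding propositions symmetrically.
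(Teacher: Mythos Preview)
Your proposal is correct and matches the paper's own treatment: the paper states the lemma without a separate proof, introducing it with ``Putting the black pebble and grey pebble results together yields the following result,'' i.e., it is precisely the combination of Proposition~\ref{prop:same_black} (applied symmetrically for black pebbles) and Proposition~\ref{prop:same_grey} (applied symmetrically for grey pebbles) that you describe.
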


\subsection{The Lower Bound}
We use the results of the previous subsection to argue that there must be a large number of states, for large enough $k$.
The intuition is the following.
Even though the pebbling rules are now more complicated, the new rules only apply when at least $k$ grey pebbles have been put onto a node. 
The most efficient way of shrinking a $Range$ is still to use thrifty queries, and this corresponds to valid black pebbling moves.

\begin{definition}[Relevant Query] Fix a computation path $C(I)$ and state $\gamma$ on it. We say that $\gamma$ makes a \emph{relevant query} $f_i(x,y)$ if $x \in Range_\gamma(2i)$ and $y \in Range_\gamma(2i+1)$.
A query $f_i(x,y)$ on $C_0(I, \gamma)$ is \emph{relevant} at state $\gamma$ if $x \in Range_\gamma(2i)$ and $y \in Range_\gamma(2i+1)$. Note that this definition only applies for internal nodes.
\end{definition}

\begin{definition}[Efficient] A computation path $C(I)$ is \emph{efficient} if for each state $\gamma$ on the path, there is no active node $i$ for which $k-1$ relevant queries have been made on $C_0(I,\gamma)$.
\end{definition}

\begin{proposition}
\label{prop:efficient_prop}
Let $\gamma$ be a state on an \emph{efficient} computation path $C(I)$ on a deterministic, read-once branching program solving $TEP^h_2(k)$.
Let $i$ be a node. Then the following hold:
\begin{enumerate}
\item[(i)]
If $i$ is an internal node and either $Range_\gamma(2i) = [k]$ or $Range_\gamma(2i+1) = [k]$, then $i$ is not completely queried.
\item[(ii)] Either $|Range_\gamma(i)| = k$ or $|Range_\gamma(i)| = 1$.
\item[(iii)] Node $i$ is inactive if and only if $|Range_\gamma(i)| = 1$.
\end{enumerate}
\end{proposition}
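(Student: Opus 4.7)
My plan is to prove (i), (ii), and (iii) simultaneously as a joint invariant of the efficient path, by induction on the position of $\gamma$ along $C(I)$. The essential preliminary observation is that $Range_\gamma(i)$ is monotone non-increasing as $\gamma$ advances: a queried leaf is frozen to a singleton, and the algorithm's set-builder formula for a completely-queried internal node only shrinks as its children shrink. The base case at the start state is immediate — every $Range$ equals $[k]$, no internal node is completely queried, and every node is active.

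The heart of the inductive step is part (i). Suppose $Range_\gamma(2i) = [k]$ (the other case is symmetric) and, toward contradiction, $i$ is completely queried at $\gamma$. Then every $f_i(x,y)$ with $x \in [k]$ and $y \in Range_\gamma(2i+1)$ has been queried on $C_0(I,\gamma)$, yielding at least $k \cdot |Range_\gamma(2i+1)| \geq k$ relevant queries to $i$ at $\gamma$. Let $\gamma^* \leq \gamma$ be the first state at which $i$ becomes completely queried. If $\gamma^* < \gamma$, monotonicity gives $Range_{\gamma^*}(2i) = [k]$, and invariant (i) at $\gamma^*$ says $i$ is not completely queried there, contradicting the choice of $\gamma^*$. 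The case $\gamma^* = \gamma$ is the main obstacle: here we look at the predecessor state $\gamma'$, where $i$ is not completely queried. Since the algorithm sets $Range_{\gamma'}(i) = [k]$ for an incompletely-queried internal node, invariant (iii) at $\gamma'$ forces $i$ to be active at $\gamma'$; and since at most one new query is processed between $\gamma'$ and $\gamma$, node $i$ already has at least $k-1$ relevant queries at $\gamma'$, directly contradicting efficiency.

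Part (ii) then falls out by bottom-up structural induction on the tree. Leaves satisfy $|Range| \in \{1,k\}$ directly from the algorithm. For an internal node $i$, either $i$ is not completely queried at $\gamma$ (so $Range_\gamma(i) = [k]$) or it is, in which case (i) combined with the structural inductive hypothesis forces $|Range_\gamma(2i)| = |Range_\gamma(2i+1)| = 1$, making $Range_\gamma(i)$ a singleton. Part (iii) follows by top-down induction on depth. The "$\Leftarrow$" direction is immediate from the definition of activity; for "$\Rightarrow$", using (ii) we may assume $|Range_\gamma(i)| = k$ toward contradiction. If $i$ is the root, equivalence is just equality, so $|Range_\gamma(i)| = k$ already forces activity. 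For a non-root $i$ with parent $j$: if $j$ is inactive, the top-down inductive hypothesis gives $|Range_\gamma(j)| = 1$, but (i) applied to the child $i$ with $Range = [k]$ forces $j$ not completely queried and hence $Range_\gamma(j) = [k]$, a contradiction. If $j$ is active, the definition of activity unravels as follows: all elements of $[k] = Range_\gamma(i)$ must lie in a single equivalence class at $\gamma$, which by the recursive equivalence definition requires every $f_j(a,b)$ with $a \in [k]$ and $b \in Range_\gamma(\mathrm{sibling}(i))$ to have been queried on $C_0(I,\gamma)$; these are at least $k$ relevant queries on the active node $j$, again violating efficiency.
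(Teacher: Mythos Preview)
Your proposal is correct and follows essentially the same approach as the paper. The paper's proof is terser: for (i) it goes directly to the last state $\delta \leq \gamma$ where $i$ still has a gap and observes that $i$ is active there with at least $k-1$ relevant queries, which is exactly your predecessor $\gamma'$ of $\gamma^*$ (your case split $\gamma^* < \gamma$ versus $\gamma^* = \gamma$ is unnecessary since (i) at $\gamma^*$ is what you are trying to establish, so one always ends up at the predecessor). For (iii), the paper invokes Proposition~\ref{prop:parent_gap} to conclude that $i$ is active once its parent is active and has a gap, whereas you unwind the definition of equivalence directly to produce $k$ relevant queries on the active parent; both routes are short and yield the same contradiction with efficiency. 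Your explicit joint induction on path position (outer) and tree depth (inner) makes visible the dependency that the paper leaves implicit, namely that the appeal to ``$i$ is active at $\delta$'' in the proof of (i) is really an invocation of (iii) at an earlier state.
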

\begin{proof}
For Claim (i), suppose without loss of generality that there is an internal node $i$ with $Range(2i) = [k]$.
If $i$ is completely queried, let $\delta$ be the last state on $C(I)$ before or equal to $\gamma$ where $i$ is not completely queried.
At $\delta$, $Range_\delta(2i) = [k]$ because $Range$ never grows. Also, $Range_\delta(i) = [k]$ because $i$ has a gap, hence $i$ is active at $\delta$.
Then since $\delta$ is the last state where $i$ has a gap, there must be at least $k-1$ relevant queries to $i$, contradicting efficiency.

Claim (ii) is certainly true for leaves.
Suppose $i$ is an internal node. By Claim (i), if one of its children has a $Range$ of $[k]$, then $Range_\gamma(i) = [k]$.
Otherwise, by induction $|Range_\gamma(2i)| = |Range_\gamma(2i+1)| = 1$, and then $C_0(I, \gamma)$ has either made the thrifty query to $i$ or not.
These two cases correspond to $|Range_\gamma(i)| = 1$ and $|Range_\gamma(i)| = k$, respectively.

For Claim (iii), the backwards direction follows immediately from the definition. 
By Claim (ii), we only need to consider the case $Range_\gamma(i) = [k]$. 
If $i$ is the root we are done because each number is its own equivalence class for the root.
Otherwise, by Claim (i), the parent of $i$ is active and not completely queried.
Then $i$ must be active by Proposition \ref{prop:parent_gap}.
\end{proof}

Next we establish that the pebbling sequence constructed by our pebbling algorithm is essentially a black pebbling. 
\begin{proposition}
\label{prop:end_black}
Let $C(I)$ be a computation path on a deterministic, read-once branching program solving $TEP^h_2(k)$.
The pebbling configuration corresponding to the output state of $C(I)$ has a black pebble on the root.
\end{proposition}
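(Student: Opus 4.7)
The plan is to combine one semantic observation about the root's range at the output state with a short inspection of how the algorithm manages pebbles on the root. Let $\gamma_{\text{out}}$ denote the output state ending $C(I)$.

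First, I would establish that $Range^I_{\gamma_{\text{out}}}(1) = \{v_1^I\}$. Apply Proposition \ref{prop:range}(ii) to the singleton family $\{1\}$: for every $a \in Range^I_{\gamma_{\text{out}}}(1)$ there is an input $I'$ that agrees with $I$ before $\gamma_{\text{out}}$ and has $v_1^{I'} = a$. Determinism forces $C(I')$ to end at $\gamma_{\text{out}}$, so the BP outputs the label of $\gamma_{\text{out}}$ on $I'$, and correctness then yields $a = v_1^{I'} = v_1^I$. Hence the root's range is the singleton $\{v_1^I\}$, and since the root's equivalence relation is just equality on the values in its range, the root is inactive at $\gamma_{\text{out}}$ by the definition of node activity.

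Next, I would trace how the algorithm treats root pebbles. The only rule that can place a black pebble on an internal node is the one that fires when the node becomes inactive with a singleton range and has not previously been black-pebbled; the only rule that removes a black pebble removes it from the \emph{children} of the node currently being processed. Since the root has no parent, its black pebble can never be removed once placed. Combining this with the first step: if the root has been black-pebbled at some earlier state, then the pebble persists to $\gamma_{\text{out}}$; otherwise, the processing of $\gamma_{\text{out}}$ finds the root inactive with $Range = \{v_1^I\}$ and not previously black-pebbled, so the inactive-node rule places the pebble $[1, v_1^I]$ now. Either way the configuration associated with $\gamma_{\text{out}}$ carries a black pebble on the root.

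The only step with any subtlety is the opening range computation, and the subtlety is really just sequencing the appeal to Proposition \ref{prop:range}(ii), the determinism of the BP, and the correctness condition in the right order. The rest is a direct mechanical verification against the algorithm's rules, so I do not expect a genuine obstacle.
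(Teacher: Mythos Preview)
Your proposal is correct and follows essentially the same approach as the paper: show the root is inactive at the output state, then observe that black pebbles on the root are never removed, so the algorithm must have placed one. The only cosmetic difference is that you establish inactivity by directly computing $Range^I_{\gamma_{\text{out}}}(1)=\{v_1^I\}$ via Proposition~\ref{prop:range}(ii), whereas the paper argues by contradiction using Proposition~\ref{prop:equiv_2}; for the root these are equivalent, since the root is inactive exactly when its range is a singleton.
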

\begin{proof}
Observe that black pebbles are never removed from the root.
Therefore the root could only be not pebbled if it were still active; but by Proposition \ref{prop:equiv_2}, it cannot be active.
\end{proof}

\begin{proposition}
\label{prop:efficient_black_move}
Let $C(I)$ be an efficient computation path on a deterministic, read-once branching program solving $TEP^h_2(k)$.
Then whenever an internal node is black-pebbled, its children have black pebbles on them.
\end{proposition}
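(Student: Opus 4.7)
The plan is to pinpoint the unique state at which an internal node receives its black pebble and then trace the children's pebble status through the bottom-up processing phase at that state. Morally, Line 22 applied to $i$ together with Line 26 applied during the same $i$-iteration realize a ``black sliding move'' from the children onto $i$, and the goal is to make this precise.

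First I would note that Lines 14--15 only fire on leaves, so an internal node $i$ must receive its black pebble via Line 22. The ``hasn't been previously black pebbled'' guard in Line 22 means this happens at exactly one state $\gamma$, namely the first state at which $i$ is inactive with $|Range_\gamma(i)| = 1$. At this $\gamma$, I would apply Proposition \ref{prop:efficient_prop} to pin down the children: by (iii) $i$ being inactive with $|Range_\gamma(i)|=1$ forces $i$ to be completely queried (otherwise $Range_\gamma(i)=[k]$); by (i) neither $Range_\gamma(2i)$ nor $Range_\gamma(2i+1)$ can equal $[k]$; by (ii) both must be singletons; and by (iii) again both children are inactive. In particular, each child is eligible to carry a black pebble, placed either by Line 15 (if a leaf) or by Line 22 at its own first moment of inactivity (if internal).

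Next I would argue that these children's black pebbles are still in place at the instant Line 22 fires for $i$. A black pebble on a node $j$ can be removed only by Line 26, and Line 26 fires for $j$ only during the iteration that processes $j$'s actual parent, and then only if that parent is inactive or completely queried. For $j \in \{2i,2i+1\}$ the only such parent is $i$, which at every state strictly before $\gamma$ is active and not completely queried, by the choice of $\gamma$. So no pebble on $2i$ or $2i+1$ has been removed before $\gamma$. At $\gamma$ itself, the bottom-up loop processes $2i$ and $2i+1$ before $i$: pre-existing black pebbles on $j$ are not disturbed during $j$'s own iteration, since Line 26 applied to $j$ only affects $j$'s children; and any child becoming inactive for the first time exactly at $\gamma$ receives a black pebble from its own Line 22. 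Hence when the loop reaches the $i$-iteration, both $2i$ and $2i+1$ carry black pebbles.

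To finish, I would observe that inside the $i$-iteration Line 22 precedes Line 26, so the black pebble on $i$ is placed while the children's black pebbles are still present; Line 26 then removes them one step later, which is what makes the move a valid black-pebble slide. I expect the main obstacle to be keeping the bookkeeping of the intra-state processing order rigorous --- specifically, checking simultaneously that children's pebbles survive from their placement state up to $\gamma$ and that the children's own iterations at $\gamma$ do not inadvertently strip them. Both points reduce to careful reading of the algorithm's line structure combined with the monotonicity of inactivity (guaranteed by $Range$-monotonicity and Proposition \ref{prop:efficient_prop}(iii)), rather than any new combinatorial argument.
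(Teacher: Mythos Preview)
Your proof is correct and follows essentially the same approach as the paper's: identify the first state $\gamma$ where $i$ becomes inactive, use Proposition~\ref{prop:efficient_prop} to deduce that the children have singleton ranges there, and argue their black pebbles survive until $i$'s iteration by showing that $i$ is neither inactive nor completely queried at any earlier state. The one place where the paper is more explicit is the ``not completely queried before $\gamma$'' claim, which does not follow purely ``by the choice of $\gamma$'' but needs a short case split on $|Range(2i+1)|$ (if it equals $k$, apply Proposition~\ref{prop:efficient_prop}(i); if it equals $1$, then $i$ completely queried would force $|Range(i)|=1$ and hence $i$ inactive already, contradicting $\delta<\gamma$).
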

\begin{proof}
Choose a point in the algorithm when an internal node $i$ gets black pebbled.
By Proposition \ref{prop:efficient_prop}, this occurs at the first state $\gamma$ where $i$ is inactive.
At this point, its children must be inactive, so $|Range_\gamma(2i)| = |Range_\gamma(2i+1)| = 1$.
Then they must have been first black pebbled at or before $\gamma$.
If they are first pebbled at $\gamma$ as well, then they are still pebbled when $i$ is black pebbled (because the nodes are processed bottom-up).
Now without loss of generality suppose node $2i$ was black pebbled at a state $\delta$ before $\gamma$.
Node $i$ could not be inactive at $\delta$, since it becomes inactive only at $\gamma$.
Also, $i$ cannot be completely queried: if $|Range_\delta(2i+1)| = [k]$, because of Proposition \ref{prop:efficient_prop}; if $|Range_\delta(2i+1)| = 1$, then $i$ would be also be black pebbled at $\delta$.
Then this pebble on $2i$ couldn't have been removed before $\gamma$.
\end{proof}

These lead to the following lemma.

\begin{lemma}
\label{lem:valid}
Let $C(I)$ be an efficient computation path on a deterministic, read-once branching program solving $TEP^h_2(k)$.
Then the \emph{black} pebbling sequence obtained by removing all grey pebbles from the configurations returned by the pebbling algorithm run on $C(I)$ is a valid black pebbling of $T^h_2$.
\end{lemma}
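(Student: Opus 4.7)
The plan is to directly verify the three defining properties of a valid black pebbling of $T^h_2$ from Section 2.3: the first configuration is empty, the last configuration has a single black pebble on the root, and every transition between consecutive configurations is one of the three legal moves (place on leaf, black sliding, remove). Since stripping grey pebbles leaves the black pebble operations untouched, only these operations need to be analyzed.

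The first property is immediate from the algorithm's setup, since no pebbles exist before the start state is processed. For the final property, Proposition \ref{prop:end_black} supplies a black pebble on the root after the output state is processed, and I would argue that no other node is pebbled at the end by induction on depth: by Proposition \ref{prop:efficient_prop}, an internal node's $Range$ is a singleton only if both children's $Range$s are singletons, so any non-root node $i$ becomes inactive no later than its parent $p(i)$. Hence when $p(i)$ first becomes inactive, Line 22 black-pebbles it and Line 26--27 in the same iteration removes $i$'s pebble. Cascading this from the root downward leaves only the root pebbled.

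For the transitions, I would enumerate the three kinds of black pebble operations. Line 14--15 places a black pebble on a leaf immediately after the preceding state queries it, a valid ``place on leaf'' move. Whenever Line 22 places a black pebble on an internal node $i$, Proposition \ref{prop:efficient_black_move} guarantees that $i$'s children currently hold black pebbles; the very same iteration of the bottom-up loop then executes Line 26--27 and removes exactly those pebbles. The placement on $i$ together with the simultaneous removal of its children's pebbles is precisely a valid black sliding move. Any remaining invocations of Line 26--27 (those not paired with a Line 22 placement for the same $i$) either remove pebbles that are genuinely present, a valid ``remove'' move, or are no-ops because the targeted children's pebbles have already been removed earlier.

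The main obstacle is bookkeeping the intertwining of Line 22 placements and Line 26--27 removals within the processing of a single state so that the whole output sequence decomposes cleanly into single legal moves. The natural resolution is to refine the algorithm's output by inserting a separate intermediate configuration for each leaf placement, each paired placement/sliding removal, and each lone removal, and then verify locally that consecutive configurations in this refined sequence differ by exactly one legal move. Efficiency of $C(I)$ is essential throughout: it is what activates Proposition \ref{prop:efficient_prop}, collapsing ``inactive'', ``completely queried'', and ``$|Range|=1$'' into a single condition, which is in turn what makes both the sliding-move certification and the final-configuration analysis go through.
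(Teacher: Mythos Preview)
Your proposal is correct and follows the same approach the paper intends: the paper states Lemma~\ref{lem:valid} without proof, presenting it as an immediate consequence of Propositions~\ref{prop:end_black} and~\ref{prop:efficient_black_move}, and you have correctly identified these as the two key ingredients and filled in the verification the paper leaves to the reader. Your observation that efficiency collapses ``inactive'', ``completely queried'', and ``$|Range|=1$'' into a single condition (via Proposition~\ref{prop:efficient_prop}) is exactly the point that makes the bookkeeping go through cleanly.
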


Unfortunately, there is not a direct correspondence between black pebble moves and states - it could be that the intermediate configurations associated with a single state contain multiple black sliding moves.
However, this is only the case if the corresponding grey pebbles are present in the configurations.

\begin{lemma}
\label{lem:bound_1}
Let $C(I)$ be an efficient computation path on a deterministic, read-once branching program solving $TEP^h_2(k)$.
Then there is a state on $C(I)$ whose associated pebbling configuration has at least $h$ pebbles.
\end{lemma}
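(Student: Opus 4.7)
The plan is to combine Lemma~\ref{lem:valid}, which says that the pebbling sequence with all grey pebbles erased is a valid black pebbling of $T^h_2$, with the classical lower bound of Theorem~\ref{thm:black_num}. Some configuration in the algorithm's full trajectory must therefore carry at least $h$ black pebbles; call this peak configuration $\mathcal P$. Either $\mathcal P$ is already the configuration associated with some state $\gamma$, in which case the total pebble count at $\gamma$ is at least its black count $\geq h$ and we are done, or $\mathcal P$ is an intermediate configuration produced during the processing of some state $\gamma^*$.

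In the harder case, I would aim to show that $\gamma^*$'s associated configuration still carries at least $h$ pebbles in total, formalizing the remark before the lemma that multiple slides within a single state can only occur in the presence of sufficiently many grey pebbles. Between $\mathcal P$ and the end of $\gamma^*$'s processing, a black pebble on a node $\ell$ can disappear only through Line~26, which fires when $\ell$'s parent $P$ becomes inactive or completely queried. I would split into two subcases using Proposition~\ref{prop:efficient_prop}, which under efficiency pins down inactive $\iff |\mathrm{Range}| = 1$ and constrains each $|\mathrm{Range}|$ to lie in $\{1,k\}$. In the first subcase, $P$ becomes inactive, so Line~22 places a fresh black pebble on $P$ (an ordinary slide); I would pair the lost child pebble(s) with the new pebble on $P$. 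In the second subcase, $P$ is completely queried yet remains active, so $|\mathrm{Range}(P)| = k$ and a count of the queries that $P$ has undergone leaves many grey pebbles on $P$ that survive cleanup (since Line~20 does not fire on an active node); I would pair each lost child pebble with one of these grey survivors.

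Combining these pairings with the black pebbles that persist from $\mathcal P$ unchanged would yield an injection from the $\geq h$ pebbles of $\mathcal P$ into the pebble configuration associated with $\gamma^*$, which gives the lemma. The main obstacle I expect is making the pairing genuinely injective in the presence of cascading slides: one slide at $P$ can immediately enable another at $P$'s own parent, so the ``new'' black pebble produced by one slide may be consumed by the next. To handle this I would lean on Proposition~\ref{prop:efficient_black_move}, which forces each slide to take exactly two children blacks and produce exactly one parent black, and then argue by induction along the chain of slides that the pairing is preserved. The completely-queried-but-still-active subcase supplies precisely the grey residue needed to absorb the one net pebble lost per cascade, so a careful bookkeeping should close the argument.
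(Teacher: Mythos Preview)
Your overall strategy matches the paper's: invoke Lemma~\ref{lem:valid} and Theorem~\ref{thm:black_num} to locate a configuration $\mathcal P$ with $h$ black pebbles, then reduce the intermediate case to a state-associated configuration. But you and the paper diverge on \emph{which} state-associated configuration to analyse. You push forward to $\gamma^*$'s associated configuration (the one produced at the end of the processing step containing $\mathcal P$), whereas the paper pulls back to $C_0$, the state-associated configuration immediately \emph{preceding} $\mathcal P$. The paper's key observation is that between two consecutive state-associated configurations the black count can rise by at most one, so $C_0$ already carries $h-1$ black pebbles; it then argues that the black sliding move occurring between $C_0$ and $C_1$ forces a grey pebble (coming from the thrifty query to the slid node) to be present already in $C_0$, giving $h$ pebbles in total there.

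Your forward pairing argument has a concrete gap. The subcase you rely on for grey residue---$P$ completely queried but still active---is empty on efficient paths. If the black-pebbled child $\ell$ has $|\mathrm{Range}(\ell)|=1$ and the sibling also has $|\mathrm{Range}|=1$, then $P$ completely queried forces $|\mathrm{Range}(P)|=1$, hence $P$ is inactive by Proposition~\ref{prop:efficient_prop}(iii); if instead the sibling has $|\mathrm{Range}|=k$, Proposition~\ref{prop:efficient_prop}(i) rules out $P$ being completely queried at all. So Line~26 fires only via ``$P$ inactive,'' leaving you with pure slides and no grey residue to absorb the net $-1$ per slide. Already a single slide on the very node queried at the preceding state---an internal $j$ whose two children are already black-pebbled---breaks your target: $\gamma^*$'s associated configuration then has two fewer black pebbles than $\mathcal P$ and zero surviving greys, so the injection you describe cannot be built. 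This is precisely why the paper looks backward rather than forward: $C_0$ is only one black pebble short of $\mathcal P$, so a single grey suffices.
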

\begin{proof}
By Lemma \ref{lem:valid}, the sequence of black pebble configurations is a valid pebbling, so we can apply the black pebbling lower bound of Theorem \ref{thm:black_num} to find a configuration with $h$ black pebbles on it.
Let $C$ be the first such configuration, and let $C_1$ be the first configuration equal to or after $C$ that is associated with a state.
Suppose $C_1$ does \emph{not} have $h$ pebbles.
In particular, $C_1$ has fewer black pebbles than $C$, and so between $C$ and $C_1$ at least one black pebble was removed.
Note that black pebbles are only removed during black sliding moves, which only take place if there is a grey pebble on the node that is black pebbled as a result of the sliding move. 
Let $C_0$ be the last configuration before $C$ that is associated with a state.
$C_0$ has at least $h-1$ black pebbles, and since a black sliding move occurs between $C_0$ and $C_1$, $C_0$ has at least one grey pebble.
\end{proof}

\begin{lemma}
Let $k \geq h + 1$. Then every input $I$ has a state on $C(I)$ where the associated pebbling configuration has $h$ pebbles (black or grey).
\end{lemma}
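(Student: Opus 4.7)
The plan is to split on whether the computation path $C(I)$ is efficient. In the efficient case, Lemma \ref{lem:bound_1} already delivers a configuration with $h$ pebbles, so the whole content is in the inefficient case. If $C(I)$ is not efficient, then by definition there exists a state $\gamma$ on $C(I)$ and an active node $i$ at $\gamma$ such that at least $k-1$ relevant queries $f_i(x_m, y_m) = z_m$ ($m = 1, \dots, k-1$, with distinct $(x_m, y_m)$ by read-once) have been made on $C_0(I,\gamma)$. My goal is to show that each such query leaves a distinct grey pebble $[2i, x_m] \wedge [2i+1, y_m] \imp [i, z_m]$ on $i$ in the configuration associated with $\gamma$; combined with $k - 1 \geq h$, this yields the required bound.

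Lines 16--17 of the pebbling algorithm place the grey pebble at the state immediately following each query, so the only question is whether it survives until $\gamma$. A grey pebble on $i$ can be deleted only by Line 20 (when $i$ becomes inactive) or by Lines 28--30 (when some antecedent leaves $Range$). The latter is straightforward to rule out: since the query is relevant \emph{at $\gamma$}, we have $x_m \in Range_\gamma(2i)$ and $y_m \in Range_\gamma(2i+1)$, and because $Range$ only shrinks as the algorithm progresses, the antecedents lie in $Range$ at every state between placement and $\gamma$, so Lines 28--30 never touch the pebble.

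The main obstacle is Line 20, which forces me to prove the following auxiliary monotonicity lemma: once a node becomes inactive, it remains inactive. I would prove this by first establishing monotonicity of equivalence, namely that if $[i, a_1] \approx^I_\delta [i, a_2]$ and $\delta$ precedes $\gamma$ on $C(I)$, then $[i, a_1] \approx^I_\gamma [i, a_2]$. The argument is by induction on levels from the root (where equivalence is just equality and independent of state) downward: at a non-root $i$ with parent $j$ and sibling $i'$, every $b \in Range_\gamma(i') \subseteq Range_\delta(i')$ was already among the conditions checked at $\delta$, the relevant queries to $j$ have persisted, and the equivalences $[j, f_j(a_1,b)] \approx^I_\gamma [j, f_j(a_2,b)]$ follow from the inductive hypothesis applied at the parent level. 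Monotonicity of inactivity then follows immediately from the two ways a node can be inactive (parent inactive, or all of $Range$ collapsed to one equivalence class), both of which are preserved.

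With monotonicity in hand, since $i$ is active at $\gamma$, it was active at every intermediate state, so Line 20 never removed any of the pebbles in question. All $k - 1$ grey pebbles are therefore present in the configuration associated with $\gamma$, and $k - 1 \geq h$ concludes the proof.
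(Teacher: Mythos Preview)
Your proposal is correct and follows exactly the paper's case split on efficiency; in the inefficient case the paper simply asserts that the $k-1$ relevant queries ``are each represented by a different grey pebble on node $i$ at $\gamma$'', and you supply the justification the paper omits, via monotonicity of $Range$ (for Lines~28--30) and monotonicity of inactivity (for Line~20). Both monotonicity claims are straightforward from the definitions as you outline, so there is no gap.
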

\begin{proof}
If $C(I)$ is efficient, then by Lemma \ref{lem:bound_1}, $C(I)$ has a state whose associated pebbling configuration has at least $h$ pebbles.
Since the pebble number from one state to the next always increases by at most 1, this implies there is a state on the path with exactly $h$ pebbles.
If $C(I)$ is not efficient, then there is a state $\gamma$ and active node $i$ with at least $k-1 \geq h$ relevant queries made to $f_i$ before $\gamma$.
These $k-1$ queries are each represented by a different grey pebble on node $i$ at $\gamma$, and the claim follows.
\end{proof}

Finally, we can get a lower bound on the total number of states. We define the \emph{supercritical state} of an input as the first state on the computation path whose associated pebble configuration has at least $h$ pebbles, black or grey. We are now ready to prove Theorem \ref{thm:ro_det}.

\begin{proof}[Proof of Theorem \ref{thm:ro_det}]
Let $f$ be a function mapping each input $I$ to its \emph{supercritical state} $\gamma$.
Each black pebble specifies the correct value of a particular node.
Each grey pebble specifies one value of the input, which could be correct or not correct.
Moreover, for an active node $i$ at $\gamma$, each variable $[i, a]$ must be in its own equivalence class, because no entire row or column has been queried yet.
Therefore applying Lemma \ref{lem:ident_config} implies that the $h$ pebbles specify precisely $h$ variables in the input, and the claim follows.
\end{proof}

\section{Bitwise-Independent Thrifty}
Recently, Komarath and Sarma \cite{ks13} proved lower bounds for non-deterministic thrifty branching programs with a new semantic restriction, which was the first non-trivial bound for any family of non-deterministic branching programs that applied to arbitrary $h$. They also introduced a notion of ``state pebble values" which elegantly captures the pebble metaphor in a novel manner, differing from previous pebbling arguments because these values are intrinsic to states and do \emph{not} depend on the state sequence of a computation path. In this section we present their ideas, giving a simplified proof of their main lower bound. In the following section, we extend this proof to non-deterministic \emph{syntactic read-once} branching programs.

\subsection{Definitions}
Fix any branching program solving $TEP^h_2(k)$. For each state $\gamma$ and node $i$, we define the two following sets $R_\gamma(i), A_\gamma(i) \subseteq [k]$:
\begin{align*}
R_\gamma(i) &= \{v_i^I \mid \text{some computation path $C(I)$ reaches $\gamma$}\} \\
A_\gamma(i) &= \{v_i^I \mid \text{some \emph{complete} computation path $C(I)$ reaches $\gamma$}\}
\end{align*}
Intuitively, $R_\gamma(i)$ is constructed by taking all inputs which can reach $\gamma$ and projecting them down onto their $v_i$-coordinate, and similarly for $A_\gamma(i)$ with the additional restriction that the inputs reach an output state from $\gamma$. Note that these are closely related to the $Range_\gamma(i)$ of the previous section.

\begin{definition}[State Pebble Values]
For each state $\gamma$ and node $i$, we define the \emph{state black and white pebble values} $b_\gamma(i)$ and $w_\gamma(i)$ according to the following formulas.
$$
b_\gamma(i) = \log_k \left( \frac{k}{|R_\gamma(i)|} \right) \quad \text{ and } \quad w_\gamma(i) = \log_k \left( \frac{|R_\gamma(i)|}{|A_\gamma(i)|} \right)
$$
The \emph{total state pebble value} of a node $i$ at $\gamma$ is $p_\gamma(i) = b_\gamma(i) + w_\gamma(i)$, and the \emph{total state pebble value} of a state $\gamma$ is $p_\gamma = \sum_i p_\gamma(i)$.
\end{definition}
Here are some basic properties of these definitions, which confirm that these state pebble values are at least somewhat consistent with our intuition regarding pebbling arguments for branching programs.
\begin{proposition} 
Let $B$ be a minimal non-deterministic branching program solving $TEP^h_2(k)$. For any state $\gamma$ and node $i$,
\begin{enumerate}
\item[(i)] $\emptyset \neq A_\gamma(i) \subseteq R_\gamma(i) \subseteq [k]$
\item[(ii)] $0 \leq b_\gamma(i), w_\gamma(i) \leq 1$
\item[(iii)] $p_\gamma(i) = \log_k \left( \frac{k}{|A_\gamma(i)|} \right) \leq 1$
\item[(iv)] If $B$ is deterministic, then $A_\gamma(i) = R_\gamma(i)$, hence $w_\gamma(i) = 0$.
\end{enumerate}
\end{proposition}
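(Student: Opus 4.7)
The proposition collects four basic bookkeeping facts, so my plan is to dispatch them in order, with part (i) doing most of the real work and the remaining parts reducing to arithmetic with logarithms.

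For part (i), the containment $A_\gamma(i) \subseteq R_\gamma(i)$ is immediate from the definitions since every complete computation path is a computation path, and $R_\gamma(i) \subseteq [k]$ is immediate because each coordinate $v_i^I$ lies in $[k]$. The only nontrivial claim is $A_\gamma(i) \neq \emptyset$, and this is precisely where minimality of $B$ is used. The plan is to argue that if no complete computation path passed through $\gamma$, then $\gamma$ (and any edges/states only reachable through it) could be deleted without changing the function computed by $B$, contradicting minimality. Since at least one complete computation path $C(I)$ reaches $\gamma$, the value $v_i^I$ lies in $A_\gamma(i)$. I expect this is the main (and only) conceptual step; the rest is calculation.

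For part (ii), combine part (i) with the observation that $1 \leq |A_\gamma(i)| \leq |R_\gamma(i)| \leq k$. Then $b_\gamma(i) = \log_k(k/|R_\gamma(i)|)$ lies in $[\log_k 1, \log_k k] = [0,1]$, and similarly $w_\gamma(i) = \log_k(|R_\gamma(i)|/|A_\gamma(i)|)$ lies in $[0, \log_k k] = [0,1]$ using $|R_\gamma(i)|/|A_\gamma(i)| \leq k/1$.

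For part (iii), the sum telescopes inside a single logarithm:
\[
p_\gamma(i) \;=\; \log_k\!\left(\frac{k}{|R_\gamma(i)|}\right) + \log_k\!\left(\frac{|R_\gamma(i)|}{|A_\gamma(i)|}\right) \;=\; \log_k\!\left(\frac{k}{|A_\gamma(i)|}\right),
\]
and the upper bound $p_\gamma(i) \leq 1$ follows from $|A_\gamma(i)| \geq 1$. For part (iv), in the deterministic case each input $I$ induces a unique computation path, which must be complete since determinism forces every maximal path to end at an output state. Hence any input whose unique path reaches $\gamma$ is automatically witness to $v_i^I \in A_\gamma(i)$, giving $R_\gamma(i) \subseteq A_\gamma(i)$ and then equality with (i). The white pebble value $w_\gamma(i) = \log_k 1 = 0$ follows at once. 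The only place any real care is needed is the minimality argument in (i); everything else is formal manipulation of the definitions.
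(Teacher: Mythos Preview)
Your proposal is correct. The paper actually states this proposition without proof, treating all four items as immediate consequences of the definitions; your write-up supplies exactly the routine verifications the paper omits, and the one substantive point you flag---using minimality of $B$ to ensure $A_\gamma(i)\neq\emptyset$---is indeed the intended reason the hypothesis ``minimal'' appears in the statement.
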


\begin{proposition}
\label{prop:bi_count}
Let $\gamma$ be a state on a branching program solving $TEP^h_2(k)$. Then at most $1/k^{p_\gamma}$ of the inputs have a complete computation path through $\gamma$.
\end{proposition}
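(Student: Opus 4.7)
The plan is to prove the stronger statement that at most a $\prod_i |A_\gamma(i)|/k$ fraction of all inputs satisfy $v_i^I \in A_\gamma(i)$ for every node $i$. This suffices: any input $I$ whose complete computation path reaches $\gamma$ has $v_i^I \in A_\gamma(i)$ for every $i$ by the very definition of $A_\gamma(i)$, and unpacking $p_\gamma = \sum_i \log_k(k/|A_\gamma(i)|)$ gives $\prod_i |A_\gamma(i)|/k = 1/k^{p_\gamma}$, matching the claimed bound.

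To carry out the count, I would view a uniformly random input $I$ as independent uniform draws from $[k]$ for each leaf value and each function-table cell, and then process the nodes bottom-up. For a leaf $i$, the event $v_i^I \in A_\gamma(i)$ concerns only that leaf's coordinate and therefore has probability exactly $|A_\gamma(i)|/k$. For an internal node $i$, I would first condition on all coordinates of the input belonging to the subtrees rooted at $2i$ and $2i+1$. This conditioning determines $v_{2i}^I$ and $v_{2i+1}^I$, and reduces the event $v_i^I \in A_\gamma(i)$ to $f_i^I(v_{2i}^I, v_{2i+1}^I) \in A_\gamma(i)$. Although the specific cell of $f_i$ singled out by this condition depends on the conditioning, that cell is always a cell of $f_i$'s own table, and is independent of every coordinate used in the subtree conditioning. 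Its conditional probability of landing in $A_\gamma(i)$ is thus $|A_\gamma(i)|/k$, uniformly in the conditioning.

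Multiplying these conditional probabilities bottom-up by the chain rule gives that the probability of $v_i^I \in A_\gamma(i)$ holding simultaneously at every node is exactly $\prod_i |A_\gamma(i)|/k$. Since this event contains the set of inputs with a complete computation path through $\gamma$, and equals $1/k^{p_\gamma}$, the proposition follows.

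The main point requiring care is the disjointness of the input coordinates that different nodes' events act upon: each leaf coordinate is touched by only one node's constraint, and the specific table cell $f_i(v_{2i}^I, v_{2i+1}^I)$ probed by an internal node's event sits in a single function table that is not used by any descendant's event. Once this independence is cleanly set up by bottom-up processing, the argument reduces to an elementary product; I don't expect any further obstacle.
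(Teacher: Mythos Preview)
Your proposal is correct and is essentially the paper's argument recast probabilistically. The paper counts directly---at most $\prod_i |A_\gamma(i)|$ tuples of correct node values, each extendable to at most $k^{m-(2^h-1)}$ inputs by freely choosing the non-thrifty function-table entries---which yields the same $1/k^{p_\gamma}$ fraction your chain-rule computation produces.
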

\begin{proof}
Let $m = 2^h - 1 + (k^2 - 1)(2^{h-1} - 1)$ be the number of $k$-values required to specify the input to $TEP^h_2(k)$.
Note that by definition, $$k^{p_\gamma} = \frac{k^{2^h - 1}}{\prod_i |A_\gamma(i)|}.$$ 
There are at most $\displaystyle{\prod_i |A_\gamma(i)|}$ combinations of correct node values for the inputs having a complete computation path through $\gamma$. Each of these combinations correspond to at most $k^{m - (2^h - 1)}$ distinct inputs (one for each combination of non-thrifty function values), for a total of at most $\left(\prod_i |A_\gamma(i)|\right)k^{m - (2^h - 1)}$ inputs with a complete computation path through $\gamma$. Then a simple calculation shows that 
$$\left(\prod_i |A_\gamma(i)|\right)k^{m - (2^h - 1)} = k^m \cdot \frac{\prod_{i=1}^{2^h - 1} |A_\gamma(i)|}{k^{2^h - 1}} = k^m \cdot k^{-p_\gamma} = k^{m - p_\gamma}.$$
\end{proof}

The above definitions and results are well-defined for any branching program. However, they seem to be most meaningful for a small subset of thrifty branching programs. One shortcoming of the standard pebbling argument is that pebbles can generally be moved independently of each other, except for the parent-child conditions on black placing and white removing. However, general branching programs are free to treat (correct) node values in aggregate rather than separately, and thus states may encode \emph{correlations} between node values. One way to view the two read-once lower bounds of Chapters 4 and 5 is getting around this problem by restricting the queries that the branching program can make, so that correlations cannot be ``used" by the branching program to save space. This issue becomes even more severe for non-deterministic branching programs because now it is possible for states to ``guess" correlations between nodes. Because pebbling does not seem to capture these correlations easily, and our main goal here is to explore the power and limits of the pebbling argument, we define two semantic restrictions on branching programs motivated by ruling out these correlations. While the first is a more natural restriction for the Tree Evaluation Problem, the second is stronger and seems necessary to prove the desired lower bound.

\begin{definition}[Node-independence]
A branching program solving $TEP^h_2(k)$ is \emph{node-independent} if for all states $\gamma$ and inputs $I$, the following conditions hold. 
\begin{enumerate}
\item[(1)] $I$ reaches $\gamma$ if and only if for all nodes $i$, $v_i^I \in R_\gamma(i)$
\item[(2)] $I$ has a complete computation path through $\gamma$ if and only if for all nodes $i$, $v_i^I \in A_\gamma(i)$
\end{enumerate}
\end{definition}

Note that the forward direction in both conditions follows directly from the previous definitions, so it is only the backwards direction that makes this a strong restriction. Geometrically, the inputs that reach (complete through) $\gamma$ form a combinatorial rectangle, the direct product of the $R_\gamma(i)$'s ($A_\gamma(i)$'s). 

The next definition goes one step further, and says that the states may not even remember correlations between the \emph{bits} of these values.  
We note that Komarath and Sarma actually use a slightly more general restriction which allows for ``encodings" $\varphi: [k] \to \{0,1\}^{\ceil{\log_2 k}}$ rather than the standard binary representation, but for the purpose of this paper we use this simpler version, for which the same analysis applies and is just as illuminating. Everything we state here generalizes easily to arbitrary encoding functions.

\begin{definition}[Bitwise-independence]
Suppose $k$ is a power of 2. A branching program solving $TEP^h_2(k)$ is \emph{bitwise-independent} if for each state $\gamma$, the following conditions hold.
\begin{enumerate}
\item[(1)] There exist sets $R_\gamma(i, l) \subseteq \{0,1\}$ such that $I$ reaches $\gamma$ if and only if for every node $i$ and bit position $1 \leq l \leq \log_2 k$, the $l$-th bit of $v^I_i$ is in $R_\gamma(i,l)$.
\item[(2)] The analogous statement for sets $A_\gamma(i,l) \subseteq \{0,1\}$ and inputs which have a complete computation path through $\gamma$.
\end{enumerate}
\end{definition}

Clearly, every bitwise-independent branching program is also node-independent.

\subsection{Pebbling Sequence}
Let $C(I)$ be a complete computation path. We will associate with $C(I)$ a pebbling sequence using the state black and white pebble values along the path. Our presentation here is a simplification of the original analysis in \cite{ks13}, which defined a series of critical states and pebbling sequence separate from (but related to) the $b_\gamma(i)$ and $w_\gamma(i)$. As we observed above, the state pebble values fall in the correct ranges, so the key thing to prove is that this sequence of pebble values follows valid pebbling rules.

\begin{proposition} 
\label{prop:bi_thrifty}
Let $\gamma$ be a state on a \emph{thrifty} branching program solving $TEP^h_2(k)$. If $\gamma$ queries internal node $i$, then $|A_\gamma(2i)| = |A_\gamma(2i + 1)| = 1$, and hence $p_\gamma(2i) = p_\gamma(2i+1) = 1$.
\end{proposition}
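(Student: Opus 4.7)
The plan is to unfold the definition of $A_\gamma(j)$ and apply the thrifty property directly to the query made at $\gamma$. Since the only quantity we need to control is the cardinality of $A_\gamma(2i)$ and $A_\gamma(2i+1)$, and each of these sets is defined as the set of correct values $v_j^I$ ranging over inputs $I$ whose complete computation path visits $\gamma$, the proof should amount to showing that the thrifty query $f_i(x,y)$ at $\gamma$ pins down these correct values uniquely.

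Concretely, I would first fix any input $I$ with a complete computation path $C(I)$ passing through $\gamma$; such an input exists because we are working with a minimal branching program (equivalently, every state lies on some complete computation path, so the earlier proposition guaranteeing $A_\gamma(j) \neq \emptyset$ applies). Since $C(I)$ is a complete computation path and the branching program is thrifty, the query $f_i(x,y)$ made at $\gamma$ must be the thrifty query for $I$, which means $x = v_{2i}^I$ and $y = v_{2i+1}^I$. Because $I$ was an arbitrary input whose complete computation path visits $\gamma$, every element of $A_\gamma(2i)$ equals $x$ and every element of $A_\gamma(2i+1)$ equals $y$. Combined with non-emptiness, this yields $A_\gamma(2i) = \{x\}$ and $A_\gamma(2i+1) = \{y\}$, hence $|A_\gamma(2i)| = |A_\gamma(2i+1)| = 1$.

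The claim about the total pebble values then follows by direct computation from the definition: $p_\gamma(j) = \log_k(k/|A_\gamma(j)|)$, so $|A_\gamma(2i)| = |A_\gamma(2i+1)| = 1$ gives $p_\gamma(2i) = p_\gamma(2i+1) = \log_k(k) = 1$.

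There is no real obstacle here; the only subtlety is ensuring $A_\gamma(2i), A_\gamma(2i+1)$ are non-empty, which is why one should either invoke the minimality assumption already present in the preceding proposition or state the claim modulo the (benign) convention that unreachable states are pruned from $B$. This proposition is essentially a definitional bookkeeping step that will be used later to supply the ``full pebble on the children'' hypothesis when running a black-white sliding move at an internal-node query, mirroring the condition $b(j) + w(j) = 1$ in the fractional pebble game.
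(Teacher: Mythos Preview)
Your proof is correct; the paper actually states this proposition without proof, treating it as immediate from the definitions of thrifty and $A_\gamma$. Your argument is exactly the intended one-line unfolding: any complete computation path through $\gamma$ must make a thrifty query there, which forces $v_{2i}^I = x$ and $v_{2i+1}^I = y$ for all contributing inputs $I$.
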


\begin{proposition}
\label{prop:bi_black}
Let $C(I)$ be a complete computation path on a \emph{node-independent} branching program solving $TEP^h_2(k)$, and let $\gamma, \delta$ be two consecutive states on $C(I)$. If $\gamma$ doesn't make the thrifty query to node $i$, then $b_\gamma(i) \geq b_\delta(i)$.
\end{proposition}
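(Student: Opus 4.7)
The claim $b_\gamma(i) \geq b_\delta(i)$ unfolds to $|R_\gamma(i)| \leq |R_\delta(i)|$, so the plan is to establish the stronger set inclusion $R_\gamma(i) \subseteq R_\delta(i)$. Given any $a \in R_\gamma(i)$, I aim to produce an input $J'$ that reaches $\delta$ with $v_i^{J'} = a$. It suffices to produce a $J'$ that reaches $\gamma$, has $v_i^{J'} = a$, and whose response to the query at $\gamma$ equals the label $c$ of the $\gamma \to \delta$ edge taken by $I$; such a $J'$ automatically follows the edge into $\delta$.

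The argument splits on the query made at $\gamma$. If $\gamma$ queries a leaf $v_j$, the hypothesis forbids $j = i$, and node-independence directly supplies $J'$ reaching $\gamma$ with $v_i^{J'} = a$ and $v_j^{J'} = c = v_j^I$; its response is then $c$ and it reaches $\delta$.

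The function-query case $f_j(x,y)$ is the substantial one, and I split it further by whether there exists an input $J_0$ reaching $\gamma$ with $v_i^{J_0} = a$ and $(v_{2j}^{J_0}, v_{2j+1}^{J_0}) \neq (x,y)$. When such $J_0$ exists, $(x,y)$ is a non-thrifty argument for $f_j$ relative to $J_0$, so overwriting $J_0$'s function table entry at $(x,y)$ with $c$ does not alter any correct node value; the modified input is the desired $J'$. When no such $J_0$ exists, node-independence pins $R_\gamma(2j) = \{x\}$ and $R_\gamma(2j+1) = \{y\}$, which in particular means $\gamma$'s query is thrifty to $j$ with respect to $I$ itself, so the hypothesis excludes $j = i$, and $c = v_j^I \in R_\gamma(j)$. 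A final appeal to node-independence now supplies $J'$ reaching $\gamma$ with $v_i^{J'} = a$ and $v_j^{J'} = c$; since the children's values are forced to $(x,y)$, the response at $\gamma$ is automatically $v_j^{J'} = c$.

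I expect the main obstacle to be this last sub-case: one must recognize that when the straightforward ``swap a function table entry'' trick is unavailable, it is precisely because the $\gamma$-query is forced to be thrifty to $j$, and then the hypothesis on $i$ makes $j \neq i$, so node-independence can be used once more — this time fixing $v_j^{J'}$ rather than a single function table entry — to complete the construction.
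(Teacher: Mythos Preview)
Your route differs from the paper's: rather than a case analysis on the query at $\gamma$, the paper uses node-independence to replace $I$ by an input with the same correct node values but with $f_j$ (where $j$ is the \emph{parent} of $i$) set to a constant function, and then defines $I'$ to differ from this modified $I$ only at the thrifty query to $i$. Since $v_i^{I'}=a$ while all other correct values are unchanged, $I'$ reaches $\gamma$ by node-independence; and since $I'$ and the modified $I$ agree on every input value except the one $\gamma$ is assumed not to query, they take the same edge out of $\gamma$ into $\delta$.

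There is a gap in your sub-case (b). The step ``node-independence pins $R_\gamma(2j)=\{x\}$ and $R_\gamma(2j+1)=\{y\}$'' tacitly treats $v_i$, $v_{2j}$, $v_{2j+1}$ as three distinct coordinates that can be varied independently. This fails exactly when $i\in\{2j,2j+1\}$, i.e.\ when $i$ is a child of the queried node. Take $i=2j$: the non-existence of $J_0$ then forces only $a=x$ and $R_\gamma(2j+1)=\{y\}$; it does \emph{not} force $R_\gamma(2j)=\{x\}$, since $v_{2j}^{J_0}=v_i^{J_0}$ is already pinned to $a$ and cannot be varied separately. Consequently $v_{2j}^I=v_i^I$ need not equal $x=a$, so the query $f_j(x,y)$ need not be thrifty to $j$ with respect to $I$, and you cannot conclude $c=v_j^I\in R_\gamma(j)$. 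Your final construction of $J'$ with $v_j^{J'}=c$ requires $c\in R_\gamma(j)$, and that is precisely what is unestablished in this sub-case.
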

\begin{proof}
It suffices to show that $R_\gamma(i) \subseteq R_\delta(i)$. Let $a \in R_\gamma(i)$. By node-independence, we may choose non-thrifty values for $I$ so that the parent $j$ of $i$ is a constant function. Let $I'$ be an input which differs from $I$ only at the thrifty query to node $i$, having $v^{I'}_i = a$. By node-independence $I'$ also reaches $R_\gamma(i)$, and takes the same edges as $I$ out of $\gamma$, and so $I'$ reaches $\delta$ and $a \in R_\delta(i)$.
\end{proof}

\begin{proposition}
\label{prop:bi_white}
Let $C(I)$ be a complete computation path on a \emph{bitwise-independent} branching program solving $TEP^h_2(k)$, and let $\gamma, \delta$ be two consecutive states on $C(I)$. If $\gamma$ doesn't make the thrifty query to node $i$, then $w_\gamma(i) \leq w_\delta(i)$.
\end{proposition}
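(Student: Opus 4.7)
The plan is to prove this bit-by-bit, mimicking the proof of Proposition \ref{prop:bi_black} but with $A$'s in place of $R$'s and leveraging the full force of bitwise-independence. By bitwise-independence, $|R_\gamma(i)| = \prod_l |R_\gamma(i,l)|$ and $|A_\gamma(i)| = \prod_l |A_\gamma(i,l)|$, so
\[
w_\gamma(i) = \sum_{l=1}^{\log_2 k} \log_k \frac{|R_\gamma(i,l)|}{|A_\gamma(i,l)|}
\]
and the analogous formula holds at $\delta$. It therefore suffices to establish the bit-level inequality
\[
\frac{|R_\gamma(i,l)|}{|A_\gamma(i,l)|} \le \frac{|R_\delta(i,l)|}{|A_\delta(i,l)|}
\]
for every bit position $l$, with each of the four quantities lying in $\{1,2\}$ (since $\emptyset \neq A \subseteq R \subseteq \{0,1\}$ at the bit level).

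The bit-level analogue of Proposition \ref{prop:bi_black} (whose proof goes through verbatim under bitwise-independence) gives $R_\gamma(i,l) \subseteq R_\delta(i,l)$. A short case check then reduces the task to the following: whenever $|R_\gamma(i,l)| = 2$ and $A_\gamma(i,l) = \{c\}$ is a singleton, $A_\delta(i,l)$ must also be a singleton. Suppose for contradiction there is an input $J$ completing through $\delta$ with bit $l$ of $v_i^J$ equal to $1-c$. I would construct a hybrid $J'$ that agrees with $I$ on every input coordinate except that bit $l$ of $v_i^{J'}$ is reset to $1-c$. Because $\gamma$ does not make the thrifty query to $i$, swapping this single bit does not alter the result of the query at $\gamma$ (nor any query on $C_0(I,\gamma)$), so $J'$ follows $C_0(I,\gamma)$ to $\gamma$ and takes the same edge to $\delta$ as $I$. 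Every bit of $J'$ other than $(i,l)$ lies in $A_\delta$, inherited from $I$, while bit $(i,l)$ lies in $A_\delta(i,l)$ by the choice of $J$; bitwise-independence at $\delta$ therefore certifies that $J'$ completes through $\delta$. Splicing together the prefix through $\gamma$ with this completion yields a complete computation path of $J'$ passing through $\gamma$, forcing $1-c \in A_\gamma(i,l)$ -- a contradiction.

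The main obstacle is the subcase where $\gamma$ queries $f_j$ for $j$ the parent of $i$, since then $v_i$ appears as a thrifty argument to the query and one might worry that changing bit $l$ of $v_i$ invalidates thriftiness of the path $J'$ would take. However, in that subcase thriftiness at $\gamma$ already pins down $A_\gamma(i)$ to the singleton consisting of the value appearing in the queried argument, so $|A_\gamma(i,l)| = 1$ for \emph{every} $l$ in a manner that propagates along the edge to $\delta$ (any completing path through the edge $\gamma \to \delta$ inherits the same thrifty constraint on $v_i$), and the bad case therefore cannot arise. Verifying that this thriftiness-plus-bitwise-independence interplay handles every query type cleanly is the technical heart of the argument.
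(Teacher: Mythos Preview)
Your bit-by-bit decomposition and the case reduction to ``$|R_\gamma(i,l)|=2$ and $|A_\gamma(i,l)|=1$ force $|A_\delta(i,l)|=1$'' are fine, but the hybrid argument you propose for that case has two genuine gaps.

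First, the assertion that $J'$ ``follows $C_0(I,\gamma)$ to $\gamma$'' is not justified. The hypothesis says only that $\gamma$ itself does not make the thrifty query to $i$; it says nothing about earlier states on $C_0(I,\gamma)$. If some prior state queries the input coordinate you flipped (the thrifty query to $i$), then $J'$ diverges there. The paper does not attempt to follow $C_0(I,\gamma)$; instead it uses node-independence (implied by bitwise-independence) to conclude directly that the hybrid reaches $\gamma$, simply because each correct node value of the hybrid lies in the corresponding $R_\gamma(\cdot)$.

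Second, your construction ignores the cascade to ancestors. Flipping the thrifty query to $i$ changes $v_i$, and hence in general $v_j^{J'} = f_j^{I}(v_i^{J'}, v_{i'}^I) \neq v_j^I$ for the parent $j$, and this propagates upward. So the claim ``every bit of $J'$ other than $(i,l)$ lies in $A_\delta$, inherited from $I$'' fails for ancestor nodes, and bitwise-independence at $\delta$ no longer certifies that $J'$ completes through $\delta$. The paper's fix is to first make $f_j$ constant (equal to $v_j^I$); this alters no correct values, so by node-independence the modified input still reaches $\gamma$ and completes through $\delta$, and now flipping the thrifty query to $i$ changes $v_i$ alone and nothing above it.

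Finally, your closing paragraph invokes thriftiness, but Proposition~\ref{prop:bi_white} does not assume the branching program is thrifty; that hypothesis enters the section only through Proposition~\ref{prop:bi_thrifty}. Even granting thriftiness, the obstacle you single out --- $\gamma$ querying the parent of $i$ --- is a red herring: that query reads a fixed input coordinate whose value in $J'$ equals its value in $I$, so the edge out of $\gamma$ is unaffected. The real obstacles are the two above.
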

\begin{proof}
Suppose $\gamma$ doesn't make the thrifty query to $i$. By Proposition \ref{prop:bi_black}, $R_\gamma(i) \subseteq R_\delta(i)$. Therefore the only way the white pebble value could decrease is for $|A_\gamma(i)| < |A_\delta(i)|$. Let $a \in R_\gamma(i) \cap A_\delta(i)$. Again we may assume that the parent of $i$ is a constant function in $I$, and define an input $I'$ identical to $I$ except with the thrifty query to node $i$ having value $a$. By node-independence, $I'$ reaches $\gamma$, and then reaches $\delta$; and by node-independence again, $I'$ has a complete path through $\delta$, and hence also one through $\gamma$. Therefore $A_\delta(i) \backslash A_\gamma(i)$ and $R_\gamma(i)$ must be disjoint.

Here is where we use \emph{bitwise}-independence. If $|A_\delta(i)| > |A_\gamma(i)|$, then bitwise-independence implies that $\frac{|A_\delta(i)|}{|A_\gamma(i)|} = 2^r$ for some $r$. Without loss of generality, assume that the first $r$ bit positions for $v_i$ go from having one choice in $A_\gamma(i)$ to two choices for $R_\delta(i)$; that is, $|A_\gamma(i, l)| = 1$ and $|A_\delta(i, l)| = 2$ for all $1 \leq l \leq r$. Note that $v_i^I \in A_\gamma(i) \cap A_\delta(i)$. For each $1 \leq l \leq r$, define $w_l \in [k]$ to be the number obtained by taking the binary representation of $v_i^I$ and flipping the $l$-th bit. Then $w_l \in A_\delta(i) \backslash A_\gamma(i)$, and hence $w_l \notin R_\gamma(i)$. 
Since $v_i^I \in R_\gamma(i)$, by bitwise-independence $|R_\gamma(i, l)| = 1$. But since $A_\delta(i) \subseteq R_\delta(i)$, $|R_\delta(i, l)| = 2$. Since $R_\gamma(i) \subseteq R_\delta(i)$, we get that $\frac{|R_\delta(i)|}{|R_\gamma(i)|} \geq 2^r = \frac{|A_\delta(i)|}{|A_\gamma(i)|}$, and the claim follows.
\end{proof}

The above three propositions essentially rule out the invalid pebbling moves by regulating the increases in black pebble values and decreases in white pebble values along a computation path. However, simply using the pebble values at each state is not quite a valid pebbling sequence because moves may be skipped; an arbitrary number of black removing or white placing moves can happen between states. 
Thus we have natural pebble configurations obtained from the state pebble values, and it remains to define intermediate pebble moves between states to create a valid pebbling sequence. Suppose $\gamma, \delta$ are two consecutive states on $C(I)$ and $\gamma$ queries node $i$. For each pebble value change $b_\gamma(j) \neq b_\delta(j)$ and $w_\gamma(j) \neq w_\delta(j)$, we will have one valid move which changes the pebble value at $\gamma$ to the value at $\delta$. We perform these moves in the following order:

\begin{enumerate}
\item[(1)] For every node $j$ \emph{except} $i$ or its children, decrease the black pebble value on node $j$.
\item[(2a)] If $w_\gamma(i) > w_\delta(i)$, decrease the white pebble value of node $i$.
\item[(2b)] If $b_\gamma(i) < b_\delta(i)$, increase the black pebble value of node $i$ \emph{while simultaneously} decreasing black pebble values of the children of $i$. 

Otherwise, just decrease the black pebble values on the children of $i$.
\item[(3)] Increase the white pebble values of any other nodes.
\end{enumerate}

Note that pebbles are never added until absolutely necessary. By Propositions \ref{prop:bi_black} and \ref{prop:bi_white}, $i$ is the only node where the black pebble value can increase or the white pebble value decrease, and by Proposition \ref{prop:bi_thrifty}, the children of $i$ are fully pebbled before these moves occur. This pebbling sequence begins with the empty configuration (since $A_\gamma(i) = [k]$ at the start state, for all $i$) and ends with a single black pebble on the root (corresponding to the output state). Adding a final move to remove the black pebble results in a valid pebbling sequence.

\subsection{The Lower Bound}
Now we would like to use the pebble number of the fractional pebbling gamer, whose applicability is an easy consequence of the chosen order of the intermediate pebbling moves.

\begin{proposition}
Let $C(I)$ be a complete computation path on a bitwise-independent, thrifty branching program solving $TEP^h_2(k)$. Then some state $\gamma$ has $p_\gamma \geq \frac{h}{2} + 1$.
\end{proposition}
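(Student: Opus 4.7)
The plan is to apply the fractional pebbling lower bound of Theorem \ref{thm:frac_num} to the sequence of configurations constructed in the previous subsection, and then argue that the bottleneck configuration must be one associated with a state of $C(I)$ rather than with a purely intermediate move. Since the sequence starts with an empty configuration (at the start state one has $A_\gamma(i) = [k]$ for every node, so all $b_\gamma(i) = w_\gamma(i) = 0$) and ends with a single black pebble on the root that is then removed, Theorem \ref{thm:frac_num} produces a configuration of total pebble value at least $\frac{h}{2} + 1$.

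First I would confirm that the full sequence really is a valid fractional black-white pebbling. Propositions \ref{prop:bi_thrifty}, \ref{prop:bi_black}, and \ref{prop:bi_white} together rule out the forbidden moves: between consecutive states $\gamma, \delta$ where $\gamma$ queries node $i$, the only node whose black value can increase or whose white value can decrease is $i$, and by thriftiness the children of $i$ are fully pebbled at $\gamma$. This is exactly the configuration required for the step (2b) slide to be legal under the fractional rules.

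The main technical step is to show that between two consecutive states $\gamma$ and $\delta$, no intermediate configuration has total pebble value exceeding $\max(p_\gamma, p_\delta)$; with this in hand, the bottleneck picked out by Theorem \ref{thm:frac_num} must occur at some state, yielding the desired $p_\gamma \geq \frac{h}{2} + 1$. Walking through the prescribed move order: steps (1) and (2a) perform pure decreases, so their intermediate configurations have total value at most $p_\gamma$; step (3) performs pure increases, so its configurations are bounded by $p_\delta$. Only step (2b) needs care. A direct calculation, tracking which coordinates still carry $w_\gamma$ values versus $w_\delta$ values and noting that every black coordinate is already at its $b_\delta$ value by the end of this step, shows that the post-slide total pebble value is $p_\delta - \sum_{j \neq i}\bigl(w_\delta(j) - w_\gamma(j)\bigr)$, which is at most $p_\delta$ because step (3) only adds non-negative white increments.

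I expect the bookkeeping in step (2b) to be the main source of potential slip-ups: one must simultaneously verify that the single simultaneous slide is fractional-legal (children of $i$ fully pebbled, $b(i) + w(i) \leq 1$ maintained) and that its post-move total pebble count is bounded by $\max(p_\gamma, p_\delta)$. Once that inequality is in place, combining it with Theorem \ref{thm:frac_num} yields the proposition in a single line.
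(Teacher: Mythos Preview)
Your proposal is correct and follows essentially the same approach as the paper: apply Theorem \ref{thm:frac_num} to the valid pebbling sequence and argue that any intermediate configuration between consecutive states $\gamma,\delta$ has total value at most $\max(p_\gamma,p_\delta)$, by observing that steps (1) and (2a) only decrease while steps (2b) and (3) are followed only by increases. The paper dispatches step (2b) in one line by noting that all moves after it (namely step (3)) are increases, so its result is at most $p_\delta$; your explicit calculation reaches the same conclusion, though your formula should also subtract $\max(0, w_\delta(i)-w_\gamma(i))$ since step (2a) does not execute when $w_\gamma(i) \leq w_\delta(i)$---this only strengthens the inequality, so the argument stands.
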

\begin{proof}
Associate with $C(I)$ the valid pebbling sequence described in the previous subsection. By the fractional pebbling lower bound of Theorem \ref{thm:frac_num}, some configuration on the pebble sequence has at least $\frac{h}{2} + 1$ pebbles; if this configuration corresponds to a particular state, we are done. Otherwise, say this configuration lies strictly between consecutive states $\gamma$ and $\delta$ on $C(I)$. Then it is produced by some intermediate pebbling move. If it is produced by steps (1) or (2a) then $p_\gamma > \frac{h}{2} + 1$, since these moves only decrease the pebble number. If it is produced by steps (2b) or (3), then $p_\delta \geq \frac{h}{2} + 1$, since any subsequent moves only increase the pebble number.
\end{proof}

Thus we have shown that some state has many pebbles, while Proposition \ref{prop:bi_count} showed that pebbles are meaningful. The following theorem combines these in the standard way to achieve the lower bound.

\begin{theorem}[\cite{ks13}]
\label{thm:bit}
Every bitwise-independent, thrifty branching program solving $TEP^h_2(k)$ has at least $k^{h/2 + 1}$ states.
\end{theorem}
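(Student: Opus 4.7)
The plan is to run the standard bottleneck/tagging argument that was used in Section 3, but now using the state pebble values as the pebble measure and Proposition \ref{prop:bi_count} as the counting tool. The two ingredients have essentially already been done: the previous proposition guarantees that every complete computation path $C(I)$ contains at least one state $\gamma$ with $p_\gamma \geq \frac{h}{2}+1$, and Proposition \ref{prop:bi_count} says that at most a $1/k^{p_\gamma}$ fraction of all inputs have a complete computation path through a given state. So I would simply stitch these together.

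First I would, for each input $I$, designate a \emph{supercritical state} $\sigma(I)$ on its complete computation path $C(I)$, chosen to be any state $\gamma$ satisfying $p_\gamma \geq \frac{h}{2}+1$ (guaranteed to exist by the previous proposition). This defines a map $\sigma$ from the set of inputs into the states of $B$. By Proposition \ref{prop:bi_count}, for any particular state $\gamma$ in the image of $\sigma$, at most $k^{m - p_\gamma} \leq k^{m - (h/2+1)}$ of the $k^m$ total inputs can have a complete computation path through $\gamma$ (here $m = 2^h - 1 + (k^2-1)(2^{h-1}-1)$ is the total number of input values), and in particular at most that many can have $\sigma(I) = \gamma$.

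Since $\sigma$ is defined on all $k^m$ inputs, a pigeonhole count gives
\[
|\mathrm{image}(\sigma)| \;\geq\; \frac{k^m}{k^{m-(h/2+1)}} \;=\; k^{h/2+1},
\]
and the total number of states of $B$ is at least the size of this image.

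I do not anticipate a serious obstacle here, since all the real work was done in the two previous subsections: the pebbling validity argument (giving the state with high $p_\gamma$) and the combinatorial-rectangle argument of Proposition \ref{prop:bi_count} (giving the per-state input bound). The only mild subtlety is to make sure the supercritical state is chosen only from states actually lying on complete computation paths, so that Proposition \ref{prop:bi_count} applies; this is automatic from the definition of $\sigma$.
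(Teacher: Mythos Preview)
Your proposal is correct and follows essentially the same approach as the paper: map each input to a supercritical state with $p_\gamma \ge \frac{h}{2}+1$ (guaranteed by the preceding proposition), then invoke Proposition~\ref{prop:bi_count} and pigeonhole. The paper's proof is just a two-sentence version of exactly this argument.
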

\begin{proof}
Map each input $I$ to the first state on (an arbitrarily selected) complete computation path $C(I)$ having a total pebble value of at least $\frac{h}{2} + 1$. By Proposition \ref{prop:bi_count}, at most $1/k^{\frac{h}{2} + 1}$ of the inputs are mapped to the same state.
\end{proof}

\section{Node-Independent Read-Once}
Our final contribution is to combine node-independence from the previous section with the syntactic read-once restriction. The following argument is a simple variant of the arguments which have come before, with only a handful of new technical points required.

\begin{proposition}
\label{prop:niro_mix}
Let $C(I)$ be some complete computation path on a branching program solving $TEP^h_2(k)$.  Let $\gamma$ be some state on $C(I)$ and $i$ some node. If $b_\gamma(i) > 0$, then some state on $C_0(I, \gamma)$ made the thrifty query to node $i$; if $w_\gamma(i) > 0$, then some state on $C_1(I, \gamma)$ will make the thrifty query to node $i$.
\end{proposition}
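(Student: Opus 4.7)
The plan is to establish each half of the proposition by its contrapositive: first, if no state on $C_0(I,\gamma)$ makes the thrifty query to $i$, then $R_\gamma(i) = [k]$ and so $b_\gamma(i) = 0$; second, if no state on $C_1(I,\gamma)$ makes the thrifty query to $i$, then $A_\gamma(i) = R_\gamma(i)$ and so $w_\gamma(i) = 0$. In each case I exhibit a witness input obtained by rewriting a single entry of $I$ at the position of the thrifty query to $i$, namely the leaf value $v_i$ if $i$ is a leaf, or the function-table cell $f_i(v_{2i}^I, v_{2i+1}^I)$ if $i$ is internal.

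For the first claim, let $a \in [k]$ be arbitrary and let $I^*$ be obtained from $I$ by overwriting the entry above with $a$ and leaving everything else unchanged. Since this entry is not queried on $C_0(I,\gamma)$ by hypothesis, every query along $C_0(I,\gamma)$ returns the same answer on $I^*$ as on $I$, so $I^*$ reaches $\gamma$ along that path. The overwrite leaves the subtrees of $2i$ and $2i+1$ untouched, so $v_{2i}^{I^*} = v_{2i}^I$ and $v_{2i+1}^{I^*} = v_{2i+1}^I$, and then the rewritten cell (or leaf) forces $v_i^{I^*} = a$. Thus $a \in R_\gamma(i)$ for every $a$, so $R_\gamma(i) = [k]$.

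For the second claim, fix $a \in R_\gamma(i)$. Using node-independence I produce an input $J$ reaching $\gamma$ with $v_i^J = a$ and $v_j^J = v_j^I$ for all $j \neq i$: the desired correct-value profile lies in the rectangle $\prod_j R_\gamma(j)$, and it is realized from $I$ by rewriting the thrifty entry of $f_i$ to $a$ together with a single entry of the parent's function table, so that the ancestors of $i$ still take $I$'s values. The concatenation $P = C_0(J,\gamma) + C_1(I,\gamma)$ is a complete path in the BP graph, and syntactic read-once forces the query sets of its two halves to be disjoint. I then define $K$ to agree with $J$ on every entry queried on $C_0(J,\gamma)$ and with $I$ elsewhere, overridden only at the thrifty entry for $i$, where $K$'s value is $a$; this override is consistent because the entry is not queried on $C_1(I,\gamma)$ by hypothesis, and if it is queried on $C_0(J,\gamma)$ then $J$'s answer there is already $a$. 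So $K$ follows $P$ and has a complete computation path through $\gamma$. Because $J$ agrees with $I$ on every entry in the subtree of $i$ other than the thrifty entry for $i$ itself, so does $K$, and a bottom-up induction gives $v_{2i}^K = v_{2i}^I$ and $v_{2i+1}^K = v_{2i+1}^I$; combined with $f_i^K(v_{2i}^I, v_{2i+1}^I) = a$ this forces $v_i^K = a$, placing $a$ in $A_\gamma(i)$.

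The main obstacle is the internal-node subcase of the second claim: a $K$ defined by blindly splicing $J$ before $\gamma$ with $I$ after $\gamma$ need not satisfy $v_i^K = a$, because $v_i^K$ is computed through a chain of function-table lookups that could be perturbed by queries appearing on $P$. The key trick is choosing $J$ so that its correct values match $I$'s on every node in the subtree of $i$ except $i$ itself; this forces the thrifty entry for $i$ to agree between $I$ and $J$, so the inductive computation of $v_i^K$ only reads entries on which $I$ and $J$ are identical, and therefore cannot be corrupted by any read-once query coming from the other half of $P$.
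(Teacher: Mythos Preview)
Your argument is correct and follows the same contrapositive idea as the paper's (very brief) sketch: perturb the input at the thrifty entry for node $i$ and show the perturbed input still reaches, respectively completes through, $\gamma$. For the white-pebble half you go further than the paper by explicitly invoking node-independence (to produce a $J$ matching $I$'s correct values at every node except $i$) and syntactic read-once (to splice $C_0(J,\gamma)$ with $C_1(I,\gamma)$ and still control $v_i^K$); this is appropriate, since without these Section~7 hypotheses the white-pebble half of the statement can actually fail for general non-deterministic branching programs, a subtlety the paper's one-line justification glosses over.
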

\begin{proof}
This is a straightforward consequence of the fact that for $b_\gamma(i) > 0$, it must be true that $R_\gamma(i) \neq [k]$, and hence some possible values for $v_i$ must have been ``rejected" by a previous query. Similarly, if $w_\gamma(i) > 0$, then $R_\gamma(i) \neq A_\gamma(i)$, so some of the values reaching $\gamma$ must be rejected before arriving at an output state.
\end{proof}

Thus for read-once branching programs, it cannot be the case that $b_\gamma(i) > 0$ and simultaneously $w_\gamma(i) > 0$ for any state and node. Now we wish to establish the analogues of Propositions \ref{prop:bi_thrifty} and \ref{prop:bi_white} in this setting. Note that Proposition \ref{prop:bi_black} carries over immediately.

\begin{proposition}
\label{prop:niro_white}
Let $C(I)$ be a complete computation path on a node-independent, syntactic read-once branching program solving $TEP^h_2(k)$, and let $\gamma, \delta$ be two consecutive states on $C(I)$. If $\gamma$ doesn't make the thrifty query to node $i$, then $w_\gamma(i) \leq w_\delta(i)$.
\end{proposition}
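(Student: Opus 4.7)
The plan parallels the proof of Proposition \ref{prop:bi_white}, with the key observation that its disjointness step uses only node-independence, while the bitwise-independent bit-counting step can be bypassed entirely via Proposition \ref{prop:niro_mix}.

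First I would split into two cases. If $w_\gamma(i) = 0$, the inequality is immediate since $w_\delta(i) \ge 0$. Otherwise $w_\gamma(i) > 0$, so Proposition \ref{prop:niro_mix} forces $b_\gamma(i) = 0$, i.e.\ $R_\gamma(i) = [k]$; applying Proposition \ref{prop:bi_black} then gives $R_\delta(i) = [k]$ as well. The desired inequality $w_\gamma(i) \leq w_\delta(i)$ then becomes $|A_\delta(i)| \leq |A_\gamma(i)|$, which I would establish via the stronger inclusion $A_\delta(i) \subseteq A_\gamma(i)$.

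For any $a \in A_\delta(i)$ we have $a \in R_\gamma(i) \cap A_\delta(i)$, so I would reuse the construction from Proposition \ref{prop:bi_white}. By node-independence, WLOG the parent function $f_p$ of $i$ is constant $v_p^I$ (this alters only non-thrifty entries, which do not affect $R$ or $A$). Then define $I'$ by changing the thrifty entry $f_i(v_{2i}^I, v_{2i+1}^I)$ to $a$, so $v_j^{I'} = v_j^I$ for $j \neq i$ and $v_i^{I'} = a$. Node-independence yields both that $I'$ reaches $\gamma$ and that $I'$ has a complete computation path through $\delta$; gluing these at $\delta$ produces a complete computation path of $I'$ through $\gamma$, witnessing $a \in A_\gamma(i)$. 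The resulting glued path lies in the BP, and since the BP is syntactic read-once, it is automatically a valid read-once path.

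The main obstacle is justifying the gluing, which requires that $I'$ follows the same edge $\gamma \to \delta$ as $I$. Here the hypothesis combines with thriftiness: every query on the complete path $C(I)$ is thrifty for $I$, and by assumption the query at $\gamma$ is not the thrifty query to $i$, so $\gamma$ must query some node $m \neq i$. When $m \neq p$, $I'$ and $I$ agree on the relevant leaf value or function entry and the edge label is preserved automatically. The delicate case is $m = p$, where $\gamma$ queries $f_p(v_{2p}^I, v_{2p+1}^I)$; the constant-function WLOG saves us by forcing $f_p^{I'}(v_{2p}^I, v_{2p+1}^I) = v_p^I$, matching the edge label and allowing the transition to go through.
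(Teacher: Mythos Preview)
Your overall strategy is exactly the paper's: dispose of $w_\gamma(i)=0$ trivially, then for $w_\gamma(i)>0$ use Proposition~\ref{prop:niro_mix} to get $R_\gamma(i)=[k]$ and establish $A_\delta(i)\subseteq A_\gamma(i)$ via the node-independence portion of the argument from Proposition~\ref{prop:bi_white}. The paper's proof is essentially your second paragraph plus a one-line appeal to that earlier argument.

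There is, however, a genuine error in your final paragraph. You write ``the hypothesis combines with thriftiness: every query on the complete path $C(I)$ is thrifty for $I$'' and deduce that $\gamma$ queries some node $m\neq i$. But the branching programs in this section are node-independent and syntactic read-once, \emph{not} thrifty---dropping thriftiness is the whole point of Section~7. Hence $\gamma$ may make a non-thrifty query, possibly to $f_i$ itself or to a non-thrifty entry of $f_p$, and your case split (thrifty query to some $m\neq i$, with special handling only of the thrifty $m=p$ case) is incomplete. The missing cases are easy to patch---a non-thrifty query to $f_i$ hits an entry unchanged in $I'$, and your constant-$f_p$ assumption already makes every entry of $f_p$ equal to $v_p^I$---but as written the argument invokes a hypothesis that is not there.

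A related subtlety worth noting: your ``WLOG $f_p$ is constant'' alters non-thrifty entries of $I$; node-independence ensures the modified $I$ still reaches $\gamma$ and completes through $\delta$, but not that the edge $\gamma\to\delta$ is still the one taken (if $\gamma$ happens to query a non-thrifty entry of $f_p$). The paper's Proposition~\ref{prop:bi_white} glosses over the same point, so this is not a divergence from the paper so much as a shared looseness.
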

\begin{proof}
Suppose $w_\gamma(i) > 0$. Then $b_\gamma(i) = 0$, and so $R_\gamma(i) = [k]$. By the same argument as in Proposition \ref{prop:bi_white} (which only used node-independence), $A_\delta(i) \backslash A_\gamma(i) \cap R_\gamma(i) = \emptyset$, hence $A_\delta(i) \subseteq A_\gamma(i)$, and $w_\gamma(i) \leq w_\delta(i)$.
\end{proof}

\begin{proposition}
\label{prop:niro_glue}
Let $C(I)$ be a complete computation path on a node-independent, syntactic read-once branching program solving $TEP^h_2(k)$, and let $\gamma, \delta$ be two consecutive states on $C(I)$.
Suppose $\gamma$ doesn't make the thrifty query to node $i$. 
If $b_\gamma(i) = 0$ then $A_\delta(i) \subseteq R_\gamma(i)$, and if $b_\gamma(i) > 0$ then $R_\gamma(i) \subseteq A_\delta(i)$.
\end{proposition}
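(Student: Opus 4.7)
The first case, $b_\gamma(i) = 0$, is immediate from the definitions: $b_\gamma(i) = \log_k(k/|R_\gamma(i)|) = 0$ forces $R_\gamma(i) = [k]$, so trivially $A_\delta(i) \subseteq [k] = R_\gamma(i)$. All of the substance lies in the second case.

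For the case $b_\gamma(i) > 0$, the plan is to combine Proposition \ref{prop:niro_mix} with Proposition \ref{prop:bi_black} in a clean two-step argument, avoiding any explicit input construction. First I would apply the $b$-half of Proposition \ref{prop:niro_mix} at $\gamma$ to locate a state on $C_0(I, \gamma)$ that makes the thrifty query to $i$. Since $C(I)$ is a complete path and the branching program is syntactic read-once, this query appears at most once along $C(I)$, and so no state on $C_1(I, \gamma)$ can make it; in particular, no state on the suffix $C_1(I, \delta)$ makes the thrifty query to $i$ either. Taking the contrapositive of the $w$-half of Proposition \ref{prop:niro_mix}, now applied at $\delta$, this forces $w_\delta(i) = 0$, which by the definition of the white pebble value is equivalent to $R_\delta(i) = A_\delta(i)$.

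Next I would invoke Proposition \ref{prop:bi_black}, which only uses node-independence and which the preamble of Section 7 already flags as carrying over immediately to this setting. Its hypothesis (that $\gamma$ does not make the thrifty query to $i$) is exactly what we are given, so it yields $R_\gamma(i) \subseteq R_\delta(i)$. Chaining this inclusion with the equality $R_\delta(i) = A_\delta(i)$ established in the previous step gives $R_\gamma(i) \subseteq R_\delta(i) = A_\delta(i)$, which is the desired conclusion.

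The only real obstacle is recognising the right decomposition rather than attempting a direct construction in the spirit of Proposition \ref{prop:bi_black}'s proof (modifying $I$ so that the parent of $i$ becomes a constant function, defining $I'$ with $v_i^{I'} = a$, and tracing $I'$ through $C_1(I,\gamma)$); such an approach works but is fiddly because of non-thrifty queries to $f_{\text{parent}(i)}$ that might appear on $C_1(I,\gamma)$. In the two-step plan above, syntactic read-once is precisely the bridge that makes the two halves of Proposition \ref{prop:niro_mix} interact, since it prevents the thrifty query to $i$, once made before $\gamma$, from reappearing anywhere from $\delta$ onwards, and this is all that is needed.
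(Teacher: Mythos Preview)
Your proposal is correct and matches the paper's proof essentially line for line: the paper handles $b_\gamma(i)=0$ the same way, and for $b_\gamma(i)>0$ it invokes Proposition~\ref{prop:niro_mix} (together with read-once) to get $w_\delta(i)=0$, hence $R_\delta(i)=A_\delta(i)$, and then chains with Proposition~\ref{prop:bi_black} to obtain $R_\gamma(i)\subseteq R_\delta(i)=A_\delta(i)$. Your write-up simply makes explicit the read-once step that the paper compresses into the phrase ``by Proposition~\ref{prop:niro_mix}.''
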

\begin{proof}
If $b_\gamma(i) = 0$ then $R_\gamma(i) = [k]$, so $A_\delta(i) \subseteq R_\gamma(i)$. 
Now suppose $b_\gamma(i) > 0$. Then by Proposition \ref{prop:niro_mix}, $w_\gamma(i) = w_\delta(i) = 0$, and so $R_\delta(i) = A_\delta(i)$. But then by Proposition \ref{prop:bi_black}, $R_\gamma(i) \subseteq R_\delta(i) = A_\delta(i)$.
\end{proof}

\begin{proposition}
\label{prop:niro_children}
Let $C(I)$ be a complete computation path on a node-independent, syntactic read-once branching program solving $TEP^h_2(k)$, and let $\gamma, \delta$ be two consecutive states on $C(I)$.
If $b_\gamma(i) < b_\delta(i)$ or $w_\gamma(i) > w_\delta(i)$, then $\gamma$ makes the thrifty query to node $i$. Moreover, if $i$ is an internal node, then for each child $j$ of $i$, either $R_\gamma(j) = \{v^I_j\}$ or $A_\delta(j) = \{v^I_j\}$, and hence either $b_\gamma(j) = 1$ or $w_\delta(i) = 1$.
\end{proposition}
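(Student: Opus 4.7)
The first claim is immediate from the contrapositive of Propositions \ref{prop:bi_black} and \ref{prop:niro_white} combined: if $\gamma$ did \emph{not} make the thrifty query to $i$, then simultaneously $b_\gamma(i) \geq b_\delta(i)$ and $w_\gamma(i) \leq w_\delta(i)$, contradicting the hypothesis of the proposition.

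For the remaining assertions, suppose $i$ is internal, so $\gamma$ queries $f_i(v^I_{2i}, v^I_{2i+1})$, and fix a child $j$ of $i$. Because the branching program is thrifty, every complete computation path $C(J)$ passing through $\gamma$ must satisfy $v^J_j = v^I_j$, so $A_\gamma(j) = \{v^I_j\}$ and $p_\gamma(j) = 1$. The thrifty query to $j$ (either $v_j$ or $f_j(v^I_{2j}, v^I_{2j+1})$) is distinct from the query at $\gamma$, and by syntactic read-once it appears at most once along $C(I)$. I would then split into two cases based on where this unique thrifty query to $j$ sits relative to $\gamma$ on $C(I)$.

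In the first case, the thrifty query to $j$ lies on $C_0(I,\gamma)$. Read-once then rules out any such query on $C_1(I,\gamma)$, so the contrapositive of Proposition \ref{prop:niro_mix} applied at $\gamma$ gives $w_\gamma(j) = 0$, whence $b_\gamma(j) = p_\gamma(j) = 1$ and $R_\gamma(j) = \{v^I_j\}$. In the second case, no thrifty query to $j$ appears on $C_0(I,\delta)$, so Proposition \ref{prop:niro_mix} (applied to $b$ at $\gamma$ and at $\delta$) gives $b_\gamma(j) = b_\delta(j) = 0$; in particular $R_\gamma(j) = [k]$. I would then re-use the node-independence argument from the proof of Proposition \ref{prop:niro_white}, which establishes $A_\delta(j) \cap R_\gamma(j) \subseteq A_\gamma(j)$, to conclude $A_\delta(j) \subseteq A_\gamma(j) = \{v^I_j\}$ and therefore $w_\delta(j) = p_\delta(j) = 1$. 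Either way, the disjunction $R_\gamma(j) = \{v^I_j\}$ or $A_\delta(j) = \{v^I_j\}$ holds, together with the corresponding $b_\gamma(j) = 1$ or $w_\delta(j) = 1$. The main delicacy is keeping the state ($\gamma$ versus $\delta$) and the pebble color straight across the multiple applications of Proposition \ref{prop:niro_mix}, and correctly importing the $R_\gamma(j) = [k] \Rightarrow A_\delta(j) \subseteq A_\gamma(j)$ step from Proposition \ref{prop:niro_white} rather than re-proving it from scratch.
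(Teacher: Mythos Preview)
Your first paragraph is fine and matches the paper.

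The second paragraph, however, has a genuine gap: you invoke thriftiness (``Because the branching program is thrifty, every complete computation path $C(J)$ passing through $\gamma$ must satisfy $v^J_j = v^I_j$, so $A_\gamma(j) = \{v^I_j\}$''), but the branching program in this proposition is \emph{not} assumed to be thrifty. Section~7 drops the thrifty hypothesis entirely and replaces it with node-independence plus syntactic read-once; that is precisely why Proposition~\ref{prop:niro_children} is needed in the first place, as a substitute for Proposition~\ref{prop:bi_thrifty}. Without thriftiness, an input $J$ with $v^J_j \neq v^I_j$ can perfectly well have a complete computation path through $\gamma$, so $A_\gamma(j) = \{v^I_j\}$ is exactly what you are trying to prove (or rather, the slightly weaker disjunction in the statement), not something you may assume. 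Both of your cases collapse once this is removed: in Case~1 you no longer know $p_\gamma(j) = 1$, so $w_\gamma(j) = 0$ does not give $b_\gamma(j) = 1$; in Case~2 you no longer have $A_\gamma(j) = \{v^I_j\}$ to bound $A_\delta(j)$.

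The paper's proof avoids this by working directly from the hypothesis on node $i$, not on $j$. It splits on which inequality ($b_\gamma(i) < b_\delta(i)$ versus $w_\gamma(i) > w_\delta(i)$) actually holds, picks a witness $a$ in $R_\gamma(i)\setminus R_\delta(i)$ (respectively $A_\delta(i)\setminus A_\gamma(i)$), and then uses node-independence to manufacture an input $J$ with $v^J_i = a$ and $v^J_j = x$ for a hypothetical $x \neq v^I_j$. Since $\gamma$ then makes a non-thrifty query with respect to $J$, one can arrange $J$ to follow $I$'s edge out of $\gamma$, contradicting the choice of $a$. This argument uses only node-independence (and, in the white case, Proposition~\ref{prop:niro_glue}); thriftiness never enters.
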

\begin{proof}
The first statement is a direct consequence of Propositions \ref{prop:bi_black} and \ref{prop:niro_white}.
Suppose $i$ is an internal node, and let $j$ be a child of $i$.
First assume that $b_\gamma(i) < b_\delta(i)$, and let $a \in R_\gamma(i) \backslash R_\delta(i)$. 
If there is some $x \in R_\gamma(j)$ distinct from $v^I_j$, then by node-independence there exists an input $J$ reaching $\gamma$ such that $v_i^J = a$ and $v_j^J = x$. But then $\gamma$ queries a non-thrifty value with respect to $J$, so we can choose this $J$ to follow the same edge of out $\gamma$ as $I$, and hence $a \in R_\delta(i)$, a contradiction. Therefore in this case $R_\gamma(j) = \{v^I_j\}$.

Now assume that $b_\gamma(i) \geq b_\delta(i)$ and $w_\gamma(i) > w_\delta(i)$. Then $w_\gamma(i) > 0$, so by Proposition \ref{prop:niro_mix}, $b_\gamma(i) = b_\delta(i) = 0$ and hence $R_\gamma(i) = R_\delta(i) = [k]$. Then $|A_\gamma(i)| < |A_\delta(i)|$, so let $a \in A_\delta(i) \backslash A_\gamma(i)$. 
Let $j$ be a child of $i$. 
If $b_\gamma(j) = 0$ and $A_\delta(j) \neq \{v^I_j\}$ then choose some $x \in A_\delta(j)$ distinct from $v^I_j$. 
If $b_\gamma(j) > 0$ and $R_\gamma(j) \neq \{v^I_j\}$, then choose $x \in R_\gamma(j)$ distinct from $v^I_j$.

In either case, by Proposition \ref{prop:niro_glue}, $x \in R_\gamma(j) \cap A_\delta(j)$, and so there exists an input $J$ which has a complete computation path through $\gamma$ and $\delta$, has $v^J_i = a$ and $v^J_j = x$, noting that $\gamma$ makes a non-thrifty query with respect to $J$. 
Then $a \in A_\gamma(i)$, a contradiction.
\end{proof}

Thus for this family of branching programs, the state pebble values along complete computation paths still correspond to pebble configurations in a valid pebble sequence. The intermediate pebble moves are the same as the bitwise-independent thrifty case, with one exception caused by the additional complexity of Proposition \ref{prop:niro_children}. The new step \textbf{2'} is shown in bold.

\begin{enumerate}
\item[(1)] For every node $j$ \emph{except} $i$ or its children, decrease the black pebble value on node $j$.
\item[\textbf{(2')}] \textbf{Increase the white pebble values of the children of $i$}.
\item[(2a)] If $w_\gamma(i) > w_\delta(i)$, decrease the white pebble value of node $i$.
\item[(2b)] If $b_\gamma(i) < b_\delta(i)$, increase the black pebble value of node $i$ \emph{while simultaneously} decreasing black pebble values of the children of $i$. 

Otherwise, just decrease the black pebble values on the children of $i$.
\item[(3)] Increase the white pebble values of any other nodes.
\end{enumerate}

The extra step is necessary because the thrifty restriction guaranteed that if node $i$ is queried at $\gamma$ then its children are fully pebbled at $\gamma$, but in the node-independent, read-once case it is only guaranteed that a white pebble is placed by the \emph{following} state. It is easy to verify that applying these steps to the entire computation path yields a valid pebbling sequence (again, adding a final move to remove the black pebble on the root). Unfortunately, the extra step can cause an \emph{increase} in pebble value followed by a decrease, so the maximum pebble configuration may be skipped if we consider only the configurations associated with states.

\begin{proposition}
Let $C(I)$ be a complete computation path on a node-independent, syntactic read-once branching program solving $TEP^h_2(k)$. There is some state $\gamma$ on $C(I)$ such that if $i$ is the node that $\gamma$ queries, then $p_\gamma - p_\gamma(2i) - p_\gamma(2i+1) + 2 \geq \frac{h}{2} + 1$.
\end{proposition}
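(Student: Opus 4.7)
The plan is to apply the fractional pebbling lower bound (Theorem \ref{thm:frac_num}) to the pebbling sequence constructed above and relate its peak value to the quantity $M(\gamma) := p_\gamma + 2 - p_\gamma(2i) - p_\gamma(2i+1)$, where $i$ is the node queried at $\gamma$. The key observation is that $M(\gamma)$ equals the pebble value obtained from the state configuration at $\gamma$ by topping off both children of $i$ with white pebbles---precisely the result of step (2'). Since $p_\gamma \leq M(\gamma)$, it will suffice to show that every configuration in the sequence has value at most $\max_\gamma M(\gamma)$; combined with Theorem \ref{thm:frac_num}, some state $\gamma$ on $C(I)$ must then satisfy $M(\gamma) \geq \frac{h}{2} + 1$, which is exactly the claim.

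To establish this bound, I would track the pebble value through the transition from a state $\gamma$ (querying $i$) to its successor $\delta$. Step (1) only decreases the value, starting from $p_\gamma \leq M(\gamma)$. Step (2') adds at most $(1 - p_\gamma(2i)) + (1 - p_\gamma(2i+1))$ pebbles on the children, bringing the value to at most $M(\gamma)$. Step (2a) only decreases. For steps (2b) and (3), the idea is that step (3) only adds pebbles, so the value after (2b) is bounded by the value after (3); and by a careful choice of intermediate moves, this final value equals $p_\delta \leq M(\delta)$.

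The main obstacle will be verifying that step (2b)'s sliding move---which in fractional pebbling can momentarily increase the pebble count by $\Delta_i - \Delta_{2i} - \Delta_{2i+1} > 0$---and the white pebble placement in step (2') do not cause the final configuration to exceed $p_\delta$. The concern for step (2') is that placing $w(j) = 1 - b_\gamma(j)$ on a child could over-pebble if $w_\delta(j) < 1 - b_\gamma(j)$. This over-placement is genuinely necessary only when step (2a) or (2b-i) fires, and in these cases Proposition \ref{prop:niro_children} guarantees each child $j$ satisfies either $b_\gamma(j) = 1$ (so $1 - b_\gamma(j) = 0 \leq w_\delta(j)$) or $w_\delta(j) = 1$ (so $1 - b_\gamma(j) \leq 1 = w_\delta(j)$). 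When step (2') is not needed, we simply skip it. Thus over-pebbling never occurs, the value at the end of step (3) really is $p_\delta$, and the peak over the entire transition is bounded by $\max(M(\gamma), p_\delta) \leq \max(M(\gamma), M(\delta))$. Taking the max over all transitions completes the argument.
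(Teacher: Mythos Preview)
Your proposal is correct and follows essentially the same approach as the paper. Both arguments build the fractional pebbling sequence from the state pebble values with the extra step~(2'), invoke Theorem~\ref{thm:frac_num} to get a configuration of value at least $\frac{h}{2}+1$, and then locate that peak relative to the intermediate steps to conclude that some state~$\gamma$ satisfies $M(\gamma) = p_\gamma + 2 - p_\gamma(2i) - p_\gamma(2i+1) \geq \frac{h}{2}+1$. The paper phrases this as a case analysis on which step produces the first large configuration; you instead bound the entire peak by $\max_\gamma M(\gamma)$. These are the same argument in different clothing.

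Where you differ is in the level of care around step~(2'). The paper asserts that if the large configuration appears during steps (2b) or (3) then $p_\delta \geq \frac{h}{2}+1$, which tacitly assumes the white added in step~(2') does not overshoot $w_\delta$ on the children. You make this explicit: when (2a) or the increasing branch of (2b) fires, Proposition~\ref{prop:niro_children} guarantees each child $j$ has $b_\gamma(j)=1$ or $w_\delta(j)=1$, so $1-b_\gamma(j)\le w_\delta(j)$ and no overshoot occurs; when neither fires, you simply omit~(2') since the children need not be fully pebbled for validity. This is a genuine clarification of a point the paper leaves implicit, and it is exactly the right way to close it.
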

\begin{proof}
Since the associated pebbling sequence is valid, there is some configuration that has at least $\frac{h}{2} + 1$ pebbles. Suppose the first such configuration lies between consecutive states $\gamma$ and $\delta$ on $C(I)$, possibly corresponding to $\gamma$. Note that if it corresponds to $\gamma$, we are done. 

Otherwise, this configuration is produced by some intermediate pebbling move. If this happens during steps (1) or (2a), $p_\gamma \geq \frac{h}{2} + 1$. If it happens during steps (2b) or (3), $p_\delta \geq \frac{h}{2} + 1$. Finally, step (2') increases the white pebble values by exactly $2 - b_\gamma(2i) - w_\gamma(2i) - b_\gamma(2i+1) - w_\gamma(2i+1)$ and this is the first intermediate move after $\gamma$ to increase the pebble number. So if the maximum configuration is produced by step (2'), then $\gamma$ satisfies the claim.
\end{proof}

We will define the \emph{supercritical state} of $I$ as follows. If $C(I)$ has a state with total pebble value at least $\frac{h}{2} + 1$, the supercritical state is the first such state. Otherwise, it is the $\gamma$ from the preceding proposition the \emph{supercritical state} of $I$.
Even at the supercritical state, using the pebble values alone is not enough to give us a lower bound. However, the only missing pebble value at $\gamma$ is on the children of $i$; if these were fully pebbled at $\gamma$, the total pebble value at $\gamma$ would be at least $\frac{h}{2} + 1$. 

However, note that even if a configuration with $\frac{h}{2} + 1$ pebbles is produced by step (2'), if step (2a) does not execute, then $p_\delta \geq \frac{h}{2} + 1$, and we could call this the supercritical state instead. Therefore the only time where there is not state on $C(I)$ with total pebble value at least $\frac{h}{2} + 1$ is when both (2') and (2a) occur between a $\gamma$ and $\delta$; but for (2a) to occur, $\gamma$ must make a thrifty query for the computation path. Knowing that $\gamma$ makes a thrifty query allows us to recover the children's correct values directly, \emph{without the state pebble values}!

\begin{lemma}
Let $\gamma$ be a state in a node-independent, syntactic read-once branching program solving $TEP^h_2(k)$. Then at most $1/k^{\frac{h}{2} + 1}$ inputs have $\gamma$ as their supercritical state.
\end{lemma}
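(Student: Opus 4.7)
The plan is to split the analysis cleanly on the two-case definition of the supercritical state. Since the state pebble values $p_\gamma$ and $p_\gamma(i)$ are intrinsic to $\gamma$ and the branching program (they do not depend on any particular input), the case distinction $p_\gamma \geq \frac{h}{2}+1$ versus $p_\gamma < \frac{h}{2}+1$ is a property of $\gamma$ itself, not of the inputs mapped to it.

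In the first case ($p_\gamma \geq \frac{h}{2}+1$), the lemma follows immediately from Proposition \ref{prop:bi_count}: the fraction of inputs with a complete computation path through $\gamma$ is already at most $1/k^{p_\gamma} \leq 1/k^{h/2+1}$, and inputs having $\gamma$ as their supercritical state form a subset of these.

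The substantive case is the second. The discussion preceding the lemma statement shows that if $p_\gamma < \frac{h}{2}+1$ but $\gamma$ is the supercritical state of some $I$, then the overshoot in the associated pebbling sequence is caused by step (2'), and step (2a) must execute between $\gamma$ and the following state $\delta$. But step (2a) strictly decreases $w(i)$ for the node $i$ queried at $\gamma$, so Proposition \ref{prop:niro_children} forces $\gamma$ to make the thrifty query with respect to $I$. Hence if $\gamma$ queries $f_i(a,b)$, then every $I$ mapped to $\gamma$ satisfies $v^I_{2i}=a$ and $v^I_{2i+1}=b$. I would then redo the counting of Proposition \ref{prop:bi_count}: by node-independence, the inputs with a complete computation path through $\gamma$ form exactly the combinatorial rectangle $\prod_j A_\gamma(j)$ in their correct node values, combined with arbitrary non-thrifty function values. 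Imposing the two thrifty constraints collapses the $A_\gamma(2i)$ and $A_\gamma(2i+1)$ coordinates to singletons, shrinking the count by the multiplicative factor $|A_\gamma(2i)|\cdot|A_\gamma(2i+1)| = k^{\,2-p_\gamma(2i)-p_\gamma(2i+1)}$. The resulting fraction is
$$k^{-p_\gamma}\cdot k^{-(2-p_\gamma(2i)-p_\gamma(2i+1))} = k^{-(p_\gamma - p_\gamma(2i)-p_\gamma(2i+1)+2)},$$
which by the defining inequality for $\gamma$ in this case is at most $k^{-(h/2+1)}$.

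The only real technical point to watch is that node-independence is exactly what lets the thrifty restriction combine multiplicatively with the rectangle bound; without it, the two constraints could interact unpredictably and the counting would break. Everything else is bookkeeping with the identity $|A_\gamma(j)| = k^{\,1-p_\gamma(j)}$, so I do not anticipate any further obstacle.
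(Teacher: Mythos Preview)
Your proposal is correct and follows essentially the same route as the paper's proof. Both split on whether $p_\gamma \geq \frac{h}{2}+1$, invoke Proposition~\ref{prop:bi_count} in the first case, and in the second case use that $\gamma$ must make the thrifty query (so the children's correct values are pinned down by the query label) to gain the missing factor of $k^{-2}$; the counting you carry out is algebraically identical to the paper's, just organized as ``start from the full rectangle and collapse two coordinates'' rather than ``drop two coordinates, then impose $1/k^2$.''
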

\begin{proof}
Let $i$ be the node queried at $\gamma$. If $p_\gamma \geq \frac{h}{2} + 1$, we are done by Proposition \ref{prop:bi_count}. Otherwise, $p_\gamma - p_\gamma(2i) - p_\gamma(2i+1) + 2 \geq \frac{h}{2} + 1$ and $\gamma$ makes a thrifty query to $i$. At most $1/k^{p_\gamma - p_\gamma(2i) - p_\gamma(2i+1)}$ of the inputs have a complete computation path through $\gamma$ (considering only the constraints on the correct values of the nodes which are not children of $i$), and at most $\frac{1}{k^2}$ of these have $\gamma$ make a thrifty query (considering the correct values of the children of $i$). Combining these observations with the inequality completes the proof.
\end{proof}

With this lemma, proving Theorem \ref{thm:niro} is straightforward, mapping inputs to their supercritical states, as has been done repeatedly before in this paper.

\section{Conclusion}
Though the pebbling argument is but one possible line of attack to achieve strong space lower bounds, it currently stands as the one that has yielded the most general results so far. 
Our main contribution has been to deepen the understanding of this style of argument by employing it in new contexts and showing one possible extension of the pebbling game beyond simply thrifty queries. As long as pebbling remains the optimal strategy for solving the Tree Evaluation Problem, finding clever extensions to pebbling arguments will have a great deal of potential. On the other hand, creating a better algorithm is highly non-trivial; one of the incidental implications of this work is ruling out many of the naive strategies one might try to efficiently solve this problem. One advantage intrinsic to these pebbling arguments is that pebbling games are defined on arbitrary DAGs; indeed, we conjecture that all of the arguments used in this paper carry over to the general DAG Evaluation Problem. The restrictions studied in this paper are rather strong, and the natural goal is to see if the proofs can be modified to apply to broader classes of branching programs. Of course, it is still open if and where the pebbling metaphor breaks down. The problem of ``correlations'' between input values discussed in Section 6 seems to pose a significant challenge to this style of argument; since the central issue lies in determining exactly how much branching program states ``know'' about a particular input, information theoretic arguments may be another promising avenue of attack.

Perhaps the most accessible open problems lie with non-deterministic branching programs: specifically, to prove lower bounds for either non-deterministic thrifty BPs or non-deterministic (syntactic or semantic) read-once BPs. In general, the notion of ``equivalence'' defined in Chapter 5 may be a powerful tool for analysing the behaviour of branching programs. In particular, focusing on inputs which have no equivalent variables seems to be a viable approach -- these inputs seem to be ones for which it is most likely that branching programs can do no better than making thrifty queries.

\section*{Acknowledgements}
I would like to thank Steve Cook and Toni Pitassi for their generous support and fruitful discussions, without which this paper would not have been possible.

\bibliographystyle{plain}
\bibliography{pebbling}

\end{document}